\pgfplotsset{compat=1.18}
\let\newfloat\newfloat@ltx
\renewcommand*{\ALG@name}{Algorithm }
\definecolor{ultramarine}{rgb}{0.07, 0.04, 0.56}
\newtheorem{theorem}{Theorem}
\newtheorem{lemma}{Lemma}
\DeclareMathOperator{\sgn}{sgn}
\DeclareMathOperator{\Cl}{Cl}
\newcommand{\abs}[1]{\left| #1 \right|}
\newcommand{\vabs}[1]{\left\| #1 \right\|}
\newcommand{\pbra}[1]{\left( #1 \right)}
\newcommand{\cbra}[1]{\left\{ #1 \right\}}
\newcommand{\sbra}[1]{\left[ #1 \right]}
\newcommand{\floor}[1]{\left\lfloor #1\right \rfloor }
\newcommand{\Cbb}{\mathbb{C}}
\newcommand{\Ebb}{\mathbb{E}}
\newcommand{\Ibb}{\mathbb{I}}
\newcommand{\Mbb}{\mathbb{M}}
\newcommand{\Rbb}{\mathbb{R}}
\newcommand{\Bcal}{\mathcal{B}}
\newcommand{\Ecal}{\mathcal{E}}
\newcommand{\Hcal}{\mathcal{H}}
\newcommand{\Ical}{\mathcal{I}}
\newcommand{\Lcal}{\mathcal{L}}
\newcommand{\Mcal}{\mathcal{M}}
\newcommand{\Ocal}{\mathcal{O}}
\newcommand{\Pcal}{\mathcal{P}}
\newcommand{\Rcal}{\mathcal{R}}
\newcommand{\Xcal}{\mathcal{X}}
\newcommand{\Ucal}{\mathcal{U}}
\newcommand{\Dmat}{\bm{\mathrm D}}
\newcommand{\Gmat}{\bm{\mathrm G}}
\newcommand{\poly}{\mathrm{poly}}
\newcommand{\MedianOfMeans}{\mathrm{\mathbf{MedianOfMeans}}}
\newcommand{\pf}{\mathrm{pf}}
\newcommand{\Ord}[1]{\mathcal{O}\left( #1 \right)}
\newcommand{\Var}[1]{\mathrm{Var} \left( #1 \right)}
\newcommand{\ketbra}[2]{\ket{#1}\!\bra{#2}}
\newcommand{\supket}[1]{|#1 \rangle\!\rangle}
\newcommand{\supbra}[1]{\langle\!\langle #1 |}
\newcommand{\supbraket}[1]{\langle\!\langle #1 \rangle\!\rangle}
\newcommand{\tr}[1]{\mathrm{Tr}\left( #1 \right)}
\begin{document}

\title{Error-mitigated fermionic classical shadows on noisy quantum devices}

\author{Bujiao Wu}
\email{bujiaowu@gmail.com}
\affiliation{Dahlem Center for Complex Quantum Systems, Freie Universit\"{a}t Berlin, 14195 Berlin, Germany}
\affiliation{{Shenzhen Institute for Quantum Science and Engineering,
Southern University of Science and Technology, Shenzhen 518055, China}}
\affiliation{Center on Frontiers of Computing Studies, Peking University, Beijing 100871, China}
\author{Dax Enshan Koh}
\email{dax\textunderscore koh@ihpc.a-star.edu.sg}
\affiliation{Institute of High Performance Computing (IHPC), Agency for Science, Technology and Research (A*STAR), 1 Fusionopolis Way, \#16-16 Connexis, Singapore 138632, Republic of Singapore}

\begin{abstract}
Efficiently estimating fermionic Hamiltonian expectation values is vital for simulating various physical systems. Classical shadow (CS) algorithms offer a solution by reducing the number of quantum state copies needed, but noise in quantum devices poses challenges. We propose an error-mitigated CS algorithm assuming gate-independent, time-stationary, and Markovian (GTM) noise. For $n$-qubit systems, our algorithm, which employs the easily prepared initial state $|0^n\rangle\!\langle 0^n|$ assumed to be noiseless, efficiently estimates $k$-RDMs with $\widetilde{\mathcal O}(kn^k)$ state copies and $\widetilde{\mathcal O}(\sqrt{n})$ calibration measurements for GTM noise with constant fidelities. We show that our algorithm is robust against noise types like depolarizing, damping, and $X$-rotation noise with constant strengths, showing scalings akin to prior CS algorithms for fermions but with better noise resilience. Numerical simulations confirm our algorithm's efficacy in noisy settings, suggesting its viability for near-term quantum devices.
\\
\\
\textbf{keywords:}
classical shadow; fermionic system; error mitigation; randomized benchmarking
\end{abstract}
\maketitle

\section*{Introduction}
Assessing the properties of interacting fermionic systems constitutes one of the core tasks of modern physics, a task that has a wealth of applications in quantum chemistry \cite{CramerQuantumChemistry}, condensed matter physics \cite{Sachdev}, 
and materials science \cite{QuantumMaterials}. Notions of \textit{quantum simulation} offer an alternative route to studying this important class of systems. 
In \textit{analog simulation}, one prepares 
the system of interest under highly controlled conditions. However, any such effort makes sense only if one has sufficiently powerful \textit{readout techniques} available that allow one to estimate properties. In fact, the read-out step constitutes a core bottleneck in many schemes for quantum simulation. 

Fortunately, for natural fermionic systems, one often does not need to learn the full unknown quantum state; trying to do so regardless would be highly impractical, as the resources required for a full tomographic recovery would scale exponentially with the size of the system. Instead, what is commonly needed are the so-called \textit{$k$-particle reduced density matrices}, abbreviated as $k$-RDMs. 
These are expectation values of polynomials 
of fermionic operators
of the $2k$-th degree. Naturally, the expectation value of any interaction fermionic Hamiltonian can be estimated using $2$-RDMs only
\cite{schwerdtfeger2011testing,Peterson13More}. Indeed, the adaptive \textit{variational quantum algorithm} (VQE) \cite{Variational} also utilizes up to $4$-RDMs to simulate many-body interactions in the ground and excited state~\cite{Parrish19Quantum,Takeshita20Increasing}. That is to say, meaningful methods of
read-out often focus on estimating such 
fermionic reduced density matrices.

On the highest level, several approaches can be pursued when dealing with fermionic operators. One of those---and the one 
followed here---is to treat the fermionic system basically as
a collection of spins. {Then given spin
Hamiltonians $\cbra{H_i}_{i=1}^m$ and an unknown quantum state $\rho$, where $m= \Ord{\poly(n)}$, the \textit{classical shadow} (CS) algorithm or its variants~\cite{huang2020predicting,hadfield2022measurements,Huang21efficient,Wu23overlapped,
hadfield2021adaptive, hu2021classical, archarya2021shadow, bu2022classical,grier2022sample, ippoliti2023classical,zhou2023performance,Garcia2021Quantum,zhou2023hybrid} in qubit systems are among the most promising ways to calculate the expectations $\tr{\rho H_i}$, with the representation of the Hamiltonians $\cbra{H_i}_{i = 1}^m$ in the Pauli basis, which
invokes a \textit{fermion-to-spin mapping}
such as the \textit{Jordan–Wigner} \cite{jordan1993paulische,nielsen2005fermionic} or \textit{Bravyi-Kitaev
encodings}~\cite{bravyi2002fermionic,tranter2015b}.} We define the classical shadow channel as $\Mcal$, which involves operating the unitary channel $\Ucal$ uniformly randomly sampled from the Clifford group before measurements in the $Z$-basis measurements and classical postprocessing operations on the measurement outcomes. By performing the inverse of the classical shadow channel $\Mcal^{-1}$ on the resulting state after performing $\Mcal$ on the initial state $\rho$, we obtain the classical shadow representation $\hat{\rho}$ of the quantum state $\rho$, allowing for the calculation of the expected values of observables $\cbra{H_i}_{i=1}^m$ with respect to $\rho$ using classical methods.

While the classical shadow algorithm requires exponentially many copies even for some local interacting fermions due to the inefficient representation in the qubit system,
recently, several classical shadow algorithms for fermionic systems without encoding of the Hamiltonians have been proposed~\cite{Zhao21Fermionic,low2022classical,wan2022matchgate}.  
Zhao, Rubin, and Miyake~\cite{Zhao21Fermionic} utilize the generalized CS method~\cite{huang2020predicting} for fermionic systems, and proposed an algorithm that requires $\Ord{{n\choose k}k^{3/2} (\log n)/\varepsilon^2}$ 
copies for the unknown quantum states to output all the elements of a $k$-RDM.
Low~\cite{low2022classical} proves that all elements of the $k$-RDM can be estimated with  ${{\eta\choose k}(1-\frac{\eta - k}{n})^k\frac{1+n}{1+n-k}/\varepsilon^2}$ number of copies of the quantum state, where $\eta$ is the number of particles and $n$ is the number of modes. These fermionic shadow estimation methods (along with the generic classical shadow formalism) do not account for noise in the system, which is an inevitability {in real physical systems}.

Since we are still in the noisy intermediate-scale quantum (NISQ) era, current quantum simulators are heavily affected by noise; hence, any characterization technique needs to be robust for these simulators to be useful. For qubit systems, robust shadow estimation was developed \cite{chen2021robust, koh2022classical} where Chen et al~\cite{chen2021robust} use techniques from randomized benchmarking to mitigate the effect of gate-independent time-stationary Markovian (GTM) noise channels on the procedure. Jnane et al.~\cite{jnane2023quantum} proposed error-mitigated classical shadow with probabilistic error cancellation.

Utilizing the robust shadow estimation scheme and taking inspiration from the fermionic shadow estimation of Zhao et al~\cite{Zhao21Fermionic, wan2022matchgate}, we present an error-mitigated shadow estimation scheme for fermionic systems and demonstrate its feasibility for realistic noise channels.
{Note that akin to the fermionic CS approaches proposed in Refs.~\cite{Zhao21Fermionic,wan2022matchgate}, our error-mitigated CS method circumvents the need to encode the Hamiltonian using the qubit representation.}

We sample our unitaries $\Ucal_Q$ from the matchgate group~\cite{valiant2002Expressiveness}, a natural choice for our protocol as there is a one-to-one correspondence between two-qubit matchgates and free-fermionic evolution~\cite{knill2001fermionic,Terhal2002Classical}. We therefore design the classical postprocessing operations by leveraging the irreducible representation of the matchgate group. We successfully introduce an unbiased estimator $\widehat{\Mcal}$ for the noisy classical shadow channel $\widetilde{\Mcal}$, where we require an additional calibration protocol to generate the estimator $\widehat{\Mcal}$ with the assumption that the computational basis state $\ketbra{\bm 0}{\bm 0}$ can be prepared noiselessly. 
    Additionally, we demonstrate the efficacy of our protocol under conditions of constant noise strength by evaluating its performance across various common noise channels: depolarizing noise, {generalized} amplitude damping, $X$-rotation, and Gaussian unitary noise. The number of samples required for the estimation process of our protocol is {in} the same order as the noise-free matchgate classical shadow scheme~\cite{Zhao21Fermionic, wan2022matchgate}.

We determine the effectiveness of our protocol with the above noise models by calculating the expectations of all elements of the $k$-particle reduced density matrix ($k$-RDM) when the noise strength is constant. The number of samples required for estimation, in this case, is $\Ord{kn^{k}\ln (n/\delta_e)/\varepsilon_e^2}$ and for calibration is $\Ord{\sqrt{n}\ln n\ln (1/\delta_c)/\varepsilon_c^2}$ with error $\varepsilon_e + \varepsilon_c$ and success probability $(1-\delta_e)(1-\delta_c)$.

We have extended the analysis of our error-mitigated fermionic shadow channel estimation to more general physical quantities inspired by the fermionic shadow analysis of Wan et al.~\cite{wan2022matchgate}, with more details in Supplementary Notes 5, 9. 
We list distinct classical shadow approaches in both noiseless and noisy qubit and fermionic systems in Table \ref{tab:summary_referendes}. Our error-mitigated fermionic classical shadow technique constitutes an extension of the work by Chen et al.~\cite{chen2021robust}, accommodating scenarios where the gate-set lacks (1) 3-design properties~\cite{zhu2016clifford} and (2) the applicability of the randomized benchmarking scheme developed by Helsen et al.~\cite{helsen2019new}.

We tested the accuracy and efficacy of our protocol by performing numerical experiments to estimate $\tr{\rho \widetilde{\gamma}_S}$ (where $\widetilde{\gamma}_S = U_Q^{\dagger}\gamma_S U_Q$, where $\gamma_S$ is the product of $\abs{S}$ Majorana operators and plays a crucial role in computing $k$-RDMs) on a noisy quantum device subjected to various types of gate noise such as depolarizing, {generalized} amplitude damping, $X$-rotation, and Gaussian unitaries. Our numerical investigations confirm the potential of our methods in real-world experimental scenarios.

\begin{table}[]
    \centering
        \caption{Enumeration of the classical shadow protocols in noiseless and noisy settings, for qubit and fermionic systems respectively.}
    \begin{tabular}{p{1.5cm}|p{3.5cm}|p{3.5cm}}
    \hline
        & \centering 
        Clifford-based shadows
         & Fermionic shadows
         \\ \hline
\centering Noiseless & \raggedright Huang~et~al.~\cite{huang2020predicting} & Zhao~et~al.~\cite{Zhao21Fermionic}; Low~\cite{low2022classical}; \newline Wan~et~al.~\cite{wan2022matchgate} \\ \hline
        \centering Noisy & \raggedright Chen~et~al.~\cite{chen2021robust}; Koh~and~Grewal \cite{koh2022classical} & This work
        \\
        \hline
    \end{tabular}
   \label{tab:summary_referendes}
\end{table}

\section*{Results}

\subsection*{Basic notations and concepts}
\label{sec:preliminary}

Here we give the basic notations and concepts that will be used throughout this work.

\textbf{Basic notations.}
The symbols $X$, $Y$, and  $Z$ denote the Pauli $X$, $Y$, and $Z$ operators respectively.  The operator $R_X(\theta) = \exp\pbra{-i\frac{\theta}{2} X}$ denotes the rotation operator around the $x$-axis. A $Z$-basis measurement is performed with respect to the basis of eigenstates of the Pauli-$Z$ operator. We utilize $\Ibb$ to represent the identity operator on the full system. The set of linear operators on a vector space $\Hcal$ is denoted as $\Lcal(\Hcal)$.
We utilize the symbol $\widetilde {\mathcal{O}}$ to omit the logarithmic terms.

\textbf{Superoperator.}
We denote the superoperator representation of a linear operator $O\in \Lcal(\Hcal)$ as $\supket{O}:= O/\sqrt{\tr{OO^{\dagger}}}$ and the scaled Hilbert-Schmidt inner product between linear operators as $\supbraket{O|R} = \tr{O^\dagger R}/\sqrt{\tr{OO^\dagger}\tr{ R R^\dagger}}$. The action of a channel $\Ecal \in \Lcal(\Lcal (\Hcal))$ operating on a linear operator $O \in \Lcal(\Hcal)$ can hence be written as 
$\Ecal\supket{O} = \Ecal(O)/\sqrt{\tr{OO^\dagger}}$. The channel representation of a measurement with respect to the computational basis can be represented as $\Xcal = \sum_{x\in\cbra{0,1}^n}\supket{x}\!\supbra{x}$. We denote the unitary channel corresponding to the unitary operator $U$ as $\Ucal(\cdot ) = U (\cdot )U^\dagger$.

\textbf{Majorana operator.} The Majorana operators
$\gamma_j$ for $1\leq j\leq 2n$  describes the fermionic system with
$\gamma_{j} = b_{(j+1)/2} + b_{(j+1)/2}^\dagger$ for odd $j$ and $\gamma_{j} = -i(b_{j/2}-b_{j/2}^\dagger)$ for even $j$, where $b_j$ and $b_j^\dagger$ are the annihilation and creation operators, respectively, associated with the $j$-th mode. Let
$\gamma_S$ be the product of the Majorana operators indexed by the subset $S$, denoted as $\gamma_S= \gamma_{l_1}\cdots \gamma_{l_{|S|}}$ for $\abs{S}>0$ and $\gamma_{\emptyset}= \Ibb$, where $S=\cbra{l_1,\ldots, l_{|S|}}$ and $l_1< l_2 < \ldots < l_{|S|}$. It can be shown that $\gamma_S$ forms the complete orthogonal basis for $\Lcal(\Hcal)$ for $S\subseteq [2n]$. 
Let $\Gamma_k := \cbra{\gamma_S|\abs{S} = k}$ be the subspace of $\gamma_S$ with cardinality $k$. We denote the even subspace as $\Gamma_{\mathrm{even}}:= \bigoplus_l \Gamma_{2l}$.
Also, we denote $\Pcal_k$ as the projector onto the subspace $\Gamma_k$, i.e.
\begin{equation}
    \Pcal_k = \sum_{S \in {{[2n]}\choose{k}}}\supket{\gamma_S}\supbra{\gamma_S},
\label{eq:projection_fermionic_system}
\end{equation}
where we have used the notation that for a set $A$ and an integer $k$, ${A\choose k} = \{T \subseteq A: |T| = k\}$ denotes the set of subsets of $A$ with cardinality $k$.

\textbf{Gaussian unitaries.}
Matchgates are in a one-to-one correspondence with the fermionic Gaussian unitaries and can serve as a qubit representation for these unitaries. We denote $\Mbb_n$ as the matchgate group, and write its elements $U_Q\in \Mbb_n$ in terms of rotation matrices $Q$ belonging to the orthogonal group Orth$(2n)$ (see Supplementary Note 1 for details)~\cite{knill2001fermionic,divincenzo2005fermionic,divincenzo2005fermionic}. 
Following Wan et al.'s study \cite{wan2022matchgate}, which demonstrated that the continuous matchgate group $\Mbb_n$ and the discrete subgroup $\Mbb_n \cap \Cl_n$ (where $\Cl_n$ represents the Clifford group) deliver equivalent performances for fermionic classical shadows, our findings remain applicable to both continuous and discrete matchgate circuits. Since $U_Q^\dagger \gamma_j U_Q = \sum_{k} Q_{jk} \gamma_k$, the matchgate $U_Q$ transforms  the product of Majorana operators $\gamma_S$ in the $\Gamma_{|S|}$ subspace as $U_Q^\dagger \gamma_S U_Q = \sum_{S'\in {[2n]\choose |S|}} \det(Q|_{SS'}) \gamma_{S'}$.

\textbf{$k$-particle reduced density matrices ($k$-RDM).}
We denote a $k$-RDM as $^{\bm k}\Dmat$, which can be obtained by tracing out all but $k$ particles. Here we denote it as a tensor with $2k$ indices,
\begin{align}
^{\bm k}\Dmat_{j_1,...,j_{k};l_1,...,l_k} = \tr{\rho b_{j_1}^{\dagger}\cdots b_{j_{k}}^{\dagger} b_{l_1}\cdots b_{l_k}},
\end{align}
where $j_i$ and $l_i$ are in $[n]$ for $i\in [k]$. The fermionic system can be equivalently described in the Majorana
basis, in which case a tensor can be rewritten as the linear combinations of $\tr{\rho\gamma_{S}}$, and $\abs{S} \leq 2k$. Hence
all $n^{2k}$ elements of the $k$-RDM can be obtained by calculating 
$\tr{\rho\gamma_S}$, for the scaling of $\Ocal\pbra{n^k}$ different $S$ with $\abs{S}\leq  2k$~\cite{Bonet20Nearly}.

\textbf{Pfaffian function.} The Pfaffian of a matrix $Q\in \Rbb^{2n\times 2n}$ is defined as
\begin{equation}
    \pf(Q) = \frac{2^n}{n!} \sum_{\sigma \in S_{2n}}
\sgn\pbra{\sigma} \prod_{i = 1}^n Q_{\sigma_{2i-1},\sigma_{2i}},
\end{equation}
which can be calculated in $O\pbra{n^3}$ time~\cite{wimmer2012algorithm}.

\textbf{Ideal fermionic shadow (Wan et al.~\cite{wan2022matchgate})}  
Given an unknown quantum state $\rho$, the classical shadow method applies a unitary $U_Q$ uniformly randomly sampled from matchgate group $\Mbb_n$, followed by measuring the generated state in the computational basis. With the measurement result $\supket{x}$, we can generate the classical representation $\hat{\rho}= \Mcal^{-1}\Ucal_{Q}^{\dagger}\supket{x}$ for the unknown quantum state $\rho$, where the channel $\Mcal$ describing the overall process is defined as

\begin{align}
\begin{aligned}
   \Mcal(\rho) &= \int_{Q} d\mu(Q)\sbra{\sum_{x\in \cbra{0,1}^n} \braket{x|U_Q\rho U_Q^\dagger|x} U_Q^\dagger \ketbra{x}{x}U_Q} \\
&= \sum_{k}\frac{{n\choose k}}{{2n\choose 2k}}\Pcal_{2k}(\rho). 
\label{eq:fermion_channel}
\end{aligned}
\end{align}

\begin{table*}[t]
    \centering
\caption{{Comparison between average noise fidelity $F_{\text{ave}}$, $Z$-basis average noise fidelity $F_{Z}$, and average noise fidelity in $\Gamma_{2k}$ subspace $\Bcal_k$.}}
   \begin{tabular}{c|c|c|c}
    \hline\hline
     Noise type& $\Lambda_{\text{d}}$ &
     $\Lambda_{\text{a}}$& $\Lambda_{\text{r}}$\\
    \hline
$\Lambda_{\text{avg}}$ & $1-p/2$ & $2/3 -(p_0+p_1)/6 + \sqrt{(1-p_0)(1-p_1)}/{3}$ & $(\cos (\theta) + 2)/{3}$
    \\
    \hline
         $F_{Z}$ & $1-p/2$ & $1 - (p_0 + p_1)/2$& $\cos^2 (\theta/2)$ 
        \\
          \hline
        $\Bcal_1$ &$1-p$ & $1 - (p_0 + p_1)$ &$\cos\theta$ \\ 
         \hline\hline
    \end{tabular}
    \label{tab:AveFidelityComp}
\end{table*}

\textbf{Noise assumptions.}
In this work, we assume that the noise is gate-independent, time-independent, and Markovian (a common assumption in randomized benchmarking (RB) abbreviated as the \textit{GTM noise assumption} \cite{flammia2020efficient}) and that the preparation noise for the easily prepared state $\ketbra{\bm 0}{\bm 0}$ is negligible. For the convenience of calculation, we utilize the left-hand side noisy representation for a noisy fermionic unitary $\widetilde{\Ucal}_{Q}:=  \Lambda \Ucal_{Q}$.
Here we define \textit{the average fidelity in $\Gamma_{2k}$ subspace} for noise channel $\Lambda$ as
\begin{align}
\Bcal_k := \frac{(-i)^k}{2^n{n \choose k}} \sum_{x} \sum_{S \in {[n]\choose k}} (-1)^{x_S} \tr{\ketbra{x}{x}\Lambda(\gamma_{D(S)})},
\label{eq:def_B_k}
\end{align}
where $D(S) = \cbra{2j-1, 2j|j \in S}, 0\leq k \leq n,x_S = \sum_{j\in S}x_j$. 
It is easy to check that $\Bcal_k = 1$ if $k=0$. 
With some calculations, we have $\Bcal_k = 1$ for the noise-free model where noise channel $\Lambda$ equals the identity.
In the following, we give the analysis of the simplified result for $\Bcal_k$ for several common noise models in the qubit system and fermionic system when $k>1$. See more details for the analysis in Supplementary Note 3.

(1) The depolarizing noise with channel representation $\Lambda_{\mathrm d}(A) = (1-p)A + p\tr{A}\frac{\Ibb}{2^n}$ for any $n$-qubit linear operator $A$, where the noise strength $p\in [0,1]$, and ${\Bcal_k}= 1-p$. 

(2) The {generalized} amplitude-damping noise with the Kraus representation
\begin{align}
\Lambda_{\mathrm a}(\cdot) = \sum_{
\substack{u, v\in \cbra{0,1}^n\\
u\ne v}
} E_{uv} (\cdot) E_{uv}^{\dagger} + E_0 (\cdot) E_0^{\dagger},
\label{eq:amp_noise}
\end{align}
where $E_{uv} = \sqrt{\bar p_{u}}\ketbra{v}{u}$ for $u\neq v \in \{0,1\}^n$ and   $E_0 = \sqrt{\Ibb - \sum_{
\substack{u,v\in \cbra{0,1}^n\\
u\ne v}
}E_{uv}^{\dagger}E_{uv}}$, where the probabilities $\bar p_u$ satisfy $(2^n-1)\bar p_{u}\leq 1$ for any $u\in \cbra{0,1}^n$. The average fidelity $\Bcal_{k} = 1 - \sum_{u\in \cbra{0,1}^n} \bar{p}_u$ if $k\ne 0$. We let $\sum_{u\in \cbra{0,1}^n} \bar{p}_u$ denote the noise strength.

(3) The $X$-rotation noise with the channel representation
\begin{align}
\Lambda_{\mathrm{r}}(\cdot) = R_{X}(\bm \theta)(\cdot)R_{X}(-\bm \theta)
\label{eq:Rx_noise}
\end{align}
 where $R_{X}(\bm \theta) = \exp(-i\sum_{l=1}^n \theta_l X_l/2)$, where the noise strengths $\theta_l$ are some real numbers. By some calculations, we have $\Bcal_k = {n\choose k}^{-1}\sum_{S\in {[n]\choose k}}\prod_{l\in S}\cos\theta_l$. Hence $\min_{l}\cos^k \theta_l \leq \Bcal_k\leq \max_{l}\cos^k \theta_l$.

(4) Noise that
is a Gaussian unitary~\cite{campbell2015decoherence,onuma2019classification}, where we assume that the noise has no coherence with the environment. This noise channel is denoted as 
\begin{align}
\Lambda_{\mathrm g}(\cdot) = U_Q (\cdot) U_Q^\dagger,
\end{align}
where $U_Q$ is a Gaussian unitary. 
By selecting the noise model to be $\Lambda_{\mathrm g}$, we get $\Bcal_k = {n\choose k}^{-1} \sum_{S,S'\in {[n]\choose k}} \det(Q|_{D(S),D(S')})$. 

 Note that for the noise models defined in (1--3), the average fidelity  $\Bcal_k\in [0,1]$ is close to one when the noise strengths are close to zero.

{For comparison, it is worth noting that the standard average noise fidelity \cite{Magesan2011scalable} $F_{\text{avg}} =\int_{\psi} d\psi \braket{\psi |\Lambda(\ket{\psi}\bra{\psi}) |\psi}$ where $\psi$ is drawn from the Haar measure, and the $Z$-basis average noise fidelity defined in Chen et al.~\cite{chen2021robust}, $F_{Z} = \frac{1}{2^n} \sum_{b} \supbraket{b |\Lambda|b}$ are not equivalent to $\Bcal_k$ under the same noise model. We present a comparison of these three quantities for $\Lambda_{\text{d}}, \Lambda_{\text{a}},\Lambda_{\text{r}}$ for a single qubit, as depicted in Table \ref{tab:AveFidelityComp}. They are closely aligned, with $\Bcal_1$ slightly smaller than $F_{\text{avg}}$ and $F_Z$. We give a more detailed analysis in Supplementary material X.}

 \begin{figure*}[t]
    \centering
\includegraphics[trim = 0mm 30mm 0mm 6mm, clip=true,width = 0.9\textwidth]{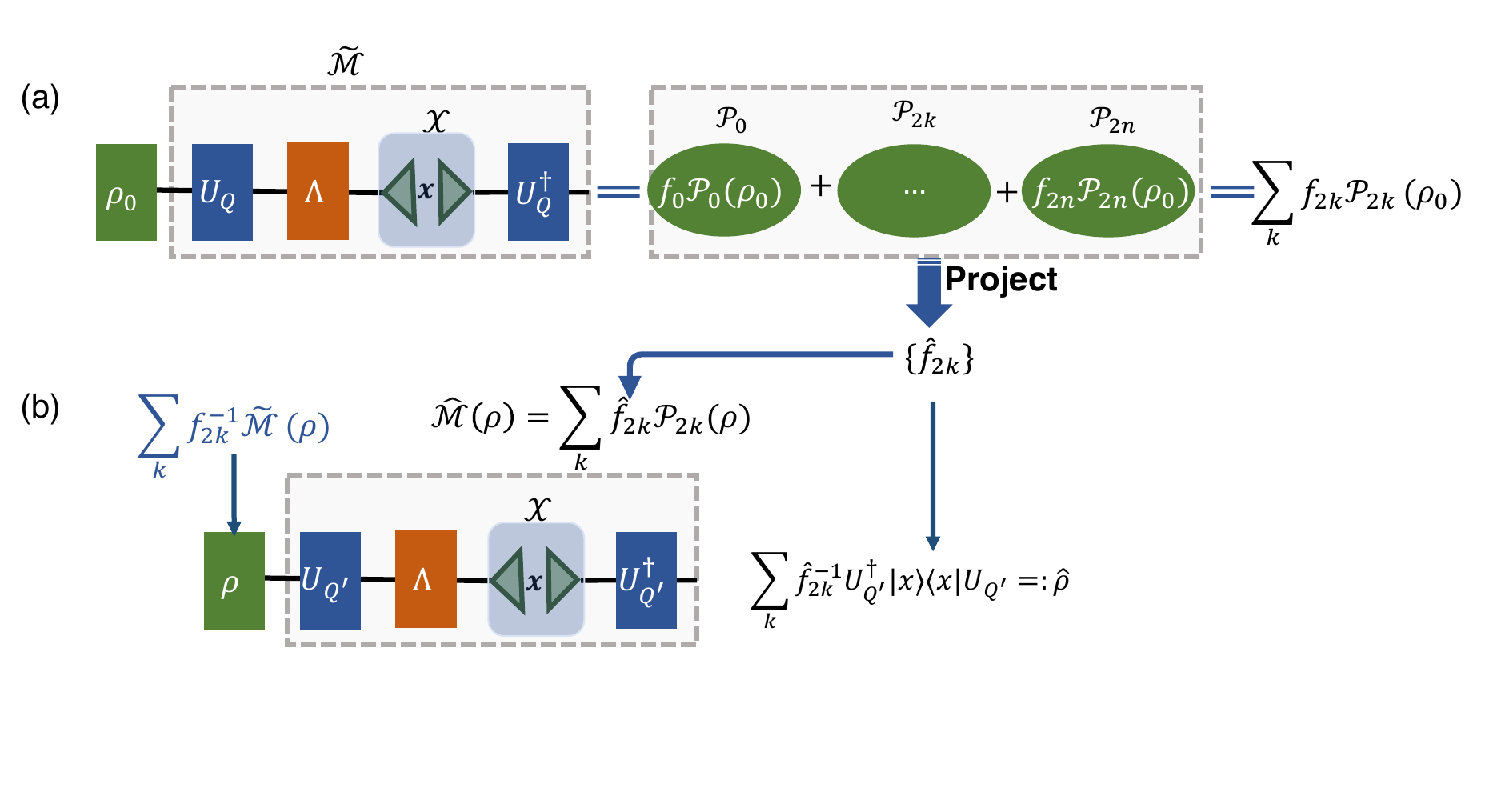}
    \caption{Schematic diagram of the error-mitigated matchgate classical shadow algorithm. (a) Protocol to learn the noisy classical shadow channel, where $\widetilde{\Mcal}=\Ucal_Q^{\dagger}\circ \Xcal \circ \Lambda \circ \Ucal_Q(\cdot) = \sum_{k = 0}^n f_{2k}\Pcal_{2k}(\cdot)$, where $\Ucal_Q$ is uniformly randomly sampled from the matchgate group. The estimation $\hat{f}_{2k}$ for $f_{2k}$ is obtained by projecting $\widetilde{\Mcal}$ to $\Pcal_{2k}$ subspace with input state $\rho_0=\ketbra{\bm 0}{\bm 0}$.  (b) The shadow estimation $\hat{\rho}$ for input state $\rho$ with the noisy quantum circuit and classical post-processing with the approximated noisy classical shadow channel in (a), where $\Ucal_{Q'}$ is uniformly randomly sampled from the matchgate group.}
    \label{fig:scheme}
\end{figure*}

\subsection*{Mitigation algorithm and error analysis}
\label{sec:mitigation_alg}

{Let $\widehat\Mcal:=\sum_{k=0}^n \hat{f}_{2k}\Pcal_{2k}$ be the estimator for the noisy shadow channel $\widetilde{\Mcal} = \sum_{k=0}^n f_{2k}\Pcal_{2k}$. In the Methods Section, we provide an explicit expression and efficiency for the calculation of $\hat{f}_{2k}$.  
Using the estimated noisy channel $\widehat{\Mcal}$, we can now obtain an estimate for $\cbra{\tr{\rho H_j}}_{j=1}^m$, where $\rho$ represents certain quantum states and $H_j$ denotes certain observables.} 
 {If $f_{2k} = 0$, the channel $\widetilde{\mathcal{M}}$ becomes non-invertible. Consequently, the effectiveness of the fermionic CS method diminishes, and we cannot retrieve $\text{tr}(\rho H)$ using it. In this study, we operate under the assumption that the noise is permissible and the fermionic CS channel is consistently invertible. Additionally, we provide a scenario in Supplementary Note 3 where the extreme noise channel occurs, i.e., $f_k = 0$. However, it is anticipated that such an extreme case will rarely occur.  Algorithm \ref{alg:mitigated_estimation_alg} demonstrates the method for mitigated estimation.}

\begin{algorithm}
\caption{Error-mitigated estimation for noisy fermionic classical shadows}
\label{alg:mitigated_estimation_alg}
\begin{algorithmic}[1]
 \Input
 { Quantum state $\rho$, observables $H_1,\ldots, H_m$, integers $N_c,K_c,N_e,K_e$.}
 \EndInput
\Output
{ $\hat{v}_i$ for $i\in [m]$.}
\EndOutput
\State $R_c := N_cK_c$;
\For{$j\leftarrow 1 $  to $R_c$}
\State Prepare state $\rho_0 = \ket{0^{n}}$, uniformly sample $Q\in $ Orth$(2n) $ or Perm$(2n)$, {implement} the associated noisy Gaussian unitary $\widehat{U}_{Q}$ on $\rho_0$, and measure in {the} $Z$-basis with outcomes $x$;
\State Let $\hat{f}_{2k}^{(j)} := 2^n (-1)^k{n \choose k}^{-1} \supbraket{\bm 0|\Pcal_{2k}\Ucal_{Q}^{\dagger} | x}, \forall k\in [n]$;
\EndFor
\State $\hat f_{2k} := \MedianOfMeans\pbra{\cbra{f_{2k}^{(j)}}_{j = 1}^{R_c}, N_c, K_c}\forall k\in [n]$;
\State $\widehat\Mcal := \sum_{k=1}^n \hat{f}_{2k} \Pcal_{2k}$;
\State $R_e := N_eK_e$;
\For{$i\leftarrow 1$ to $m$}
\For{$j\leftarrow 1$ to $R_e$}
\State Prepare $\rho$, uniformly sample $Q\in O(2n) $ or Perm$(2n)$, 
 {implement} the associated noisy Gaussian unitary $\hat{U}_{Q}$ on $\rho$, and measure in the $Z$-basis with outcomes $x$;
\State Generate estimation $\hat{v}_i^{(j)} :=\tr{H_i \widehat\Mcal^{-1} \Ucal_{Q}^{\dagger}(x)}$;
\label{step:est_per}
\EndFor
\State $\hat{v}_i := \MedianOfMeans\pbra{\cbra{\hat{v}_i^{(j)}}_{j=1}^{R_e}, N_e, K_e}$;
\EndFor
\State \textbf{return}  $\cbra{\hat{v}_i}_{i = 1}^m$
\end{algorithmic}
\end{algorithm}

Incorporating the \textbf{MedianOfMeans} sub-procedure, as explained in Ref.~\cite{huang2020predicting}, guarantees that the number of quantum state copies needed relies on the logarithm of the number of observables. We included the \textbf{MedianOfMeans} sub-procedure in Supplementary Note 6 to ensure the completeness and consistency of this paper.
Our error analysis will involve selecting appropriate values for the number of calibrations and estimation samplings $N_c$, $N_e$, $K_c$, and $K_e$ to estimate the coefficients $\hat{f}_{2k}$ associated with the noisy channel.

Let $\hat{v}:= \tr{\widehat{\Mcal}^{-1} \Ucal^{\dagger}(\ketbra{x}{x})H}$ be an estimation of $\tr{\rho H}$ for some observables $H$ in the even subspace $\Gamma_{\mathrm{even}}$ and quantum states $\rho$, where $x$ follows the distribution $\tr{\ketbra{x}{x} \widehat{\Ucal}(\rho)}$ for $x\in \cbra{0,1}^n$,  then we have
\begin{align}
\abs{\hat{v} - \tr{\rho H}}
&\leq \abs{\hat{v} - \tr{\widehat{\Mcal}^{-1} \widetilde{\Mcal} \pbra{\rho}H}} \nonumber\\
&\quad + \abs{\tr{\widehat{\Mcal}^{-1} \widetilde{\Mcal}\pbra{\rho}H} - \tr{\rho H}}\nonumber\\
&= \varepsilon_e + \varepsilon_c ,
\label{eq:error_twoSource}
\end{align}
where $\varepsilon_e:= \abs{\hat{v} - \tr{\widehat{\Mcal}^{-1} \widetilde{\Mcal} (\rho)}}$ is the estimation error  and $\varepsilon_c:= \abs{\tr{\widehat{\Mcal}^{-1} \widetilde{\Mcal}\pbra{\rho}H} - \tr{\rho H}}$ is the calibration error. Therefore, by determining the necessary number of samples $N_e$ and $K_e$ to achieve the desired level of estimation error $\varepsilon_e$ (as well as $N_c$ and $K_c$ to account for the calibration error $\varepsilon_c$), we can obtain an estimation $\hat{v}$ with an overall error of $\varepsilon = \varepsilon_e + \varepsilon_c$, using $N_eK_e$ copies of the input state $\rho$.

\begin{theorem}
Let $\rho$ be an unknown quantum state and $\cbra{H_i}_{i=1}^m$ be a set of observables in the even subspace $\Gamma_{\mathrm{even}}$. Consider Algorithm \ref{alg:mitigated_estimation_alg} with the number of estimation samplings
\begin{align*}
\begin{aligned}
R_e &= \frac{68(1+\varepsilon_c)^2\ln (2m/\delta_e)}{2^{2n}\varepsilon_e^2} \sum_{0\leq l_1 + l_2 + l_3\leq n} g(l_1,l_2,l_3)\\
&\sum_{
\substack{S_1,S_2,S_3 \ \mathrm{ disjoint}\\
|S_i| = 2l_i \forall i\in [3]}
} \tr{\widetilde\gamma_{S_1} \widetilde\gamma_{S_2}H_0} \tr{\widetilde\gamma_{S_2} \widetilde\gamma_{S_3} H_0} \tr{\widetilde\gamma_{S_3}\widetilde\gamma_{S_1} \rho},
\end{aligned}
\end{align*}    
where $g(l_1,l_2,l_3) = \frac{(-1)^{l_1+l_2+l_3}{n\choose l_1, l_2, l_3}_p{2n\choose 2l_1 + 2l_2} {2n\choose 2l_2 + 2l_3}\Bcal_{l_1 + l_3}}{{2n\choose 2l_1, 2l_2, 2l_3}_p{n \choose l_1+l_2}{n \choose l_2 + l_3}\Bcal_{l_1 + l_2}\Bcal_{l_2+l_3}}$, and $H_0 = \max_i (H_i -\tr{H_i}\frac{\Ibb}{2^n})$, and
 the number of calibration samplings
 \begin{align*}
    R_c =\Ord{\frac{\Bcal_{\max}\sqrt{n}\ln n \ln (1/\delta_c)}{\Bcal_{\min}^2\varepsilon_c^2}},
 \end{align*}
where $\Bcal_{\max} = \max_k \abs{\Bcal_k}$ and $\Bcal_{\min} = \min_k \abs{\Bcal_k}$. 
Then, the outputs $\cbra{v_i}_{i=1}^m$ of the algorithm approximate $\cbra{\tr{\rho H_i}}_{i=1}^m$ with error $\varepsilon_e+\varepsilon_c$ and success probability $1-\delta_e-\delta_c$, under the assumption that $\vabs{H_i}_{\infty} = \Ord{1}$, where $\vabs{H_i}_{\infty}$ is the spectral norm of $H_i$.
\label{thm:est_cal_number}
\end{theorem}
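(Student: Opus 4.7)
The starting point is the error decomposition in Eq.~\eqref{eq:error_twoSource}: for each $i$, the total deviation $|\hat v_i-\tr{\rho H_i}|$ is at most the estimation error $\varepsilon_e=|\hat v_i-\tr{\widehat\Mcal^{-1}\widetilde\Mcal(\rho)H_i}|$ plus the calibration error $\varepsilon_c=|\tr{\widehat\Mcal^{-1}\widetilde\Mcal(\rho)H_i}-\tr{\rho H_i}|$. The two errors are controlled independently by choosing $R_e,R_c$ large enough. In both cases the mechanism is \textbf{MedianOfMeans} (Supplementary Note 6), which turns a variance bound $\mathrm{Var}\le V$ into a sample-complexity bound $R=NK=\mathcal O(V\log(1/\delta)/\varepsilon^2)$ with confidence $1-\delta$. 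So the task reduces to (i) verifying unbiasedness of each single-shot estimator in the appropriate sense, and (ii) bounding its second moment.

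\textbf{Calibration part.} Using the definition of $\widetilde\Mcal=\sum_k f_{2k}\Pcal_{2k}$ and the calibration single-shot quantity $\hat f_{2k}^{(j)}=2^n(-1)^k{n\choose k}^{-1}\supbraket{\bm 0|\Pcal_{2k}\Ucal_Q^{\dagger}|x}$, I would first show $\E[\hat f_{2k}^{(j)}]=f_{2k}$ by inserting $\rho_0=\ketbra{\bm 0}{\bm 0}$ and using the explicit form of $\Pcal_{2k}$ in terms of products of pairs of Majoranas evaluated on $|\bm 0\rangle$. Next I would bound the variance of $\hat f_{2k}^{(j)}$ in terms of $\Bcal_{\max}/\Bcal_{\min}^2$ (the inverse of $\hat f_{2\ell}$ appears in the downstream use but here the variance per coefficient is $\mathcal O(\Bcal_{\max}\sqrt{n}\log n)$ from a matchgate second-moment computation; the $\sqrt n$ comes from summing $\binom{n}{k}^{-1}$ combinatorial factors across the worst value of $k$, which peaks near $k\approx n/2$). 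Feeding this into the MedianOfMeans bound reproduces the stated $R_c=\mathcal O(\Bcal_{\max}\sqrt n\log n\log(1/\delta_c)/(\Bcal_{\min}^2\varepsilon_c^2))$. Then a propagation-of-errors step translates $|\hat f_{2k}-f_{2k}|\le\varepsilon_c'$ into $\varepsilon_c\le\varepsilon_c$ on $\tr{\widehat\Mcal^{-1}\widetilde\Mcal(\rho)H_i}$, using $\|H_i\|_\infty=\mathcal O(1)$ and $H_i\in\Gamma_{\text{even}}$ (so only the coefficients $f_{2k}$ with $k\ge 1$ matter and the projectors act orthogonally).

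\textbf{Estimation part.} Here the single-shot estimator is $\hat v_i^{(j)}=\tr{H_i\widehat\Mcal^{-1}\Ucal_Q^{\dagger}(x)}$. Conditional on $\widehat\Mcal$, its expectation over the random matchgate $Q$ and measurement outcome $x$ equals $\tr{\widehat\Mcal^{-1}\widetilde\Mcal(\rho)H_i}$ by the very definition of $\widetilde\Mcal=\Ucal_Q^\dagger\circ\Xcal\circ\Lambda\circ\Ucal_Q$ averaged over $Q$. The main work is to bound $\E[(\hat v_i^{(j)})^2]$. I would expand $\widehat\Mcal^{-1}=\sum_k\hat f_{2k}^{-1}\Pcal_{2k}$, write $H_i$ and $\rho$ in the Majorana basis, and reduce the second moment to a sum of the form
\begin{equation*}
\E_Q\sum_x\tr{\widetilde\gamma_{S_1}\widetilde\gamma_{S_2}H_0}\tr{\widetilde\gamma_{S_2}\widetilde\gamma_{S_3}H_0}\tr{\widetilde\gamma_{S_3}\widetilde\gamma_{S_1}\rho},
\end{equation*}
where $\widetilde\gamma_S=U_Q^\dagger\gamma_S U_Q$ and the $S_i$ run over Majorana index subsets. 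The key identity comes from integrating three products of $\widetilde\gamma$'s over the orthogonal group via the matchgate Weingarten / irrep structure already used for $\Mcal$ in Eq.~\eqref{eq:fermion_channel}; this forces the three subsets $S_1,S_2,S_3$ to be disjoint and introduces the multinomial ratio and the factor $\Bcal_{l_1+l_3}/(\Bcal_{l_1+l_2}\Bcal_{l_2+l_3})$, yielding the coefficient $g(l_1,l_2,l_3)$ in the statement. Replacing $H_i$ by the traceless shift $H_0$ removes an identity contribution that would otherwise dominate. The prefactor $(1+\varepsilon_c)^2$ absorbs the gap between $\widehat\Mcal$ and $\widetilde\Mcal$ coming from the calibration error, and the $2^{2n}$ appears from normalization of superoperator inner products. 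MedianOfMeans with confidence $\delta_e/m$ together with a union bound over the $m$ observables then gives the claimed $R_e$ and completes the overall $1-\delta_e-\delta_c$ success probability by a final union bound.

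\textbf{Main obstacle.} The genuinely nontrivial step is the third-order matchgate average that produces $g(l_1,l_2,l_3)$ together with the disjointness constraint on $S_1,S_2,S_3$. Unlike the Clifford case, the matchgate group is not a unitary $3$-design, so standard Weingarten results do not apply directly; one must instead decompose the representation $Q\mapsto U_Q^{\otimes 3}$ on $\Gamma_{l_1+l_2}\otimes\Gamma_{l_2+l_3}\otimes\Gamma_{l_1+l_3}$ into irreducibles, identify the invariant subspace corresponding to ``matched'' triples of Majorana index sets, and read off the projection coefficients --- this is where the $g$ factor, the ${2n\choose 2l_1+2l_2}$-type binomials, and the noise-fidelity ratio enter. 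Once this matchgate triple integral is in hand, the rest is straightforward bookkeeping and two applications of MedianOfMeans.
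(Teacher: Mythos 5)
Your proposal follows essentially the same route as the paper's proof: the error is split into estimation and calibration parts via Eq.~\eqref{eq:error_twoSource}, each is controlled by \textbf{MedianOfMeans} applied to a variance bound obtained from the matchgate moment formulas of Wan et al.\ (which the paper imports rather than re-derives), the third-moment integral produces the disjointness constraint and the coefficient $g(l_1,l_2,l_3)$, and the $\sqrt{n}\ln n$ calibration scaling comes from a Stirling-formula analysis of the relative variance $\E[\hat f_{2k}^2]/\abs{f_{2k}}^2$, which is worst near $k\approx n/2$ with the $\ln n$ arising from a harmonic sum over the inner index. One minor correction: bounding $\Var{\hat f_{2k}}$ also requires the \emph{third} moment of $\Ucal_Q$ (two factors from $\supbraket{\bm 0|\Pcal_{2k}\Ucal_Q^{\dagger}|x}^2$ plus one from the outcome probability), not a second-moment computation as you state, but since you already invoke the triple integral for the estimation variance this does not change the argument.
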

We observe that the sampling for estimation we obtained is consistent with Wan et al.~\cite{wan2022matchgate} in the absence of noise. In the following, we will provide an analysis of the necessary number of measurements to compute $\braket{^{\bm k}\Dmat}$ using Algorithm \ref{alg:mitigated_estimation_alg}. To calculate the representation of each element $\braket{^{\bm k}\Dmat}_{j_1,\ldots, j_k; l_1,\ldots, l_k}$, where $j_i,l_i$ are in the range $[n]$ for $i\in [k]$, we need to calculate $m = \Ord{n^k}$ expectations for different $\widetilde\gamma_S$, where $\abs{S} = 2k$. By choosing the observable $H_j = \widetilde\gamma_{S}$ where $\abs{S}={2k}$ and $m = \Ord{n^k}$ in Theorem~\ref{thm:est_cal_number},
with the number of estimation samplings 
\begin{align}
R_e = \Ocal\pbra{\frac{kn^{k}\ln(n/\delta_e)}{\Bcal_{k}^2\varepsilon_e^2}}
\end{align}
and the number of calibration samplings 
\begin{align}
R_c = \Ord{\frac{\Bcal_{\max}\sqrt{n}\ln n 
\ln (1/\delta_c)}{\Bcal_{\min}^2\varepsilon_c^2}},
\label{eq:rc_supp}
\end{align}
 the estimation error can be bounded to $\varepsilon_e+ \varepsilon_c$. 
The equations for $R_e$ and $R_c$ can be simplified to $R_e = \Ocal\pbra{\frac{kn^k\ln(n/\delta_e)}{\varepsilon_e^2}}$ and $R_c = \Ord{\frac{\sqrt{n} \ln n \ln (1/\delta_c)}{\varepsilon_c^2}}$ for the general noises with constant average fidelity $\Bcal_{k}$ in subspace $\Gamma_{2k}$ for any $k\in\cbra{0,\ldots, n}$. We give more details for the calculations in Supplementary Note 8.
However, some types of noise channels, such as certain Gaussian unitary channels present in the related $2n\times 2n$ matrix $Q$, cannot be mitigated with our mitigation algorithm. In particular, there exists a signed permutation matrix $Q$ for which $f_{2k} = 0$, resulting in complete loss of projection for the observable onto the subspace $\Gamma_{2k}$.
As a result, it is impossible to calculate $\tr{\rho \widetilde{\gamma}_S}$ for any set $S$ containing $2k$ elements. We anticipate that the noise in the quantum device will differ significantly from the $U_Q$ which belongs to the intersection of the matchgate and Clifford groups.

\subsection*{Numerical results}
\label{sec:numerical_res}
Here we give the numerical results for the mitigated shadow estimation in the fermionic systems. 
Since the elements of a $k$-RDM can be expressed in the form $\tr{\rho {\gamma}_S}$, we give the numerical results of the errors of the estimators for the expectation value of local fermionic observables $\tr{\rho \widetilde{\gamma}_S}$.


\begin{figure*}[t]
    \centering
    \includegraphics[width =1.0\textwidth]{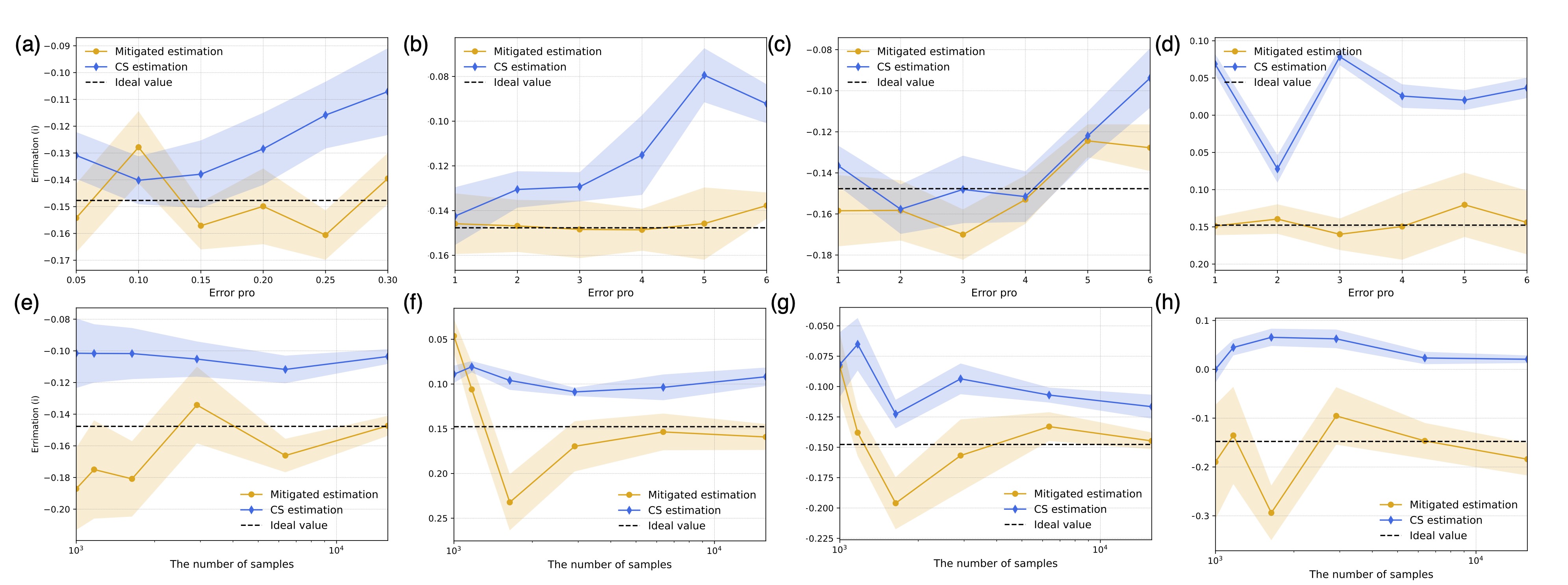}
    \caption{{The estimations of the expectation values of Majorana operators. (a-d) The estimations with the changes of the noise strength and (e-h) the number of samples for fixed noise parameters, where (a) and (e) are associated with depolarizing noise,  (b) and (f) are associated with generalized amplitude damping noise, (c) and (g) are associated with X-rotation noise, (d) and (h) are associated with Gaussian unitary noise.
The error bar is the estimation of the standard deviation by repeating the procedure for $R = 10$ rounds, and it is estimated to be $ \sqrt{\sum_{i = 1}^R (\hat{v}_i-\bar{v})^2/R}$, where $\bar{v}$ is the expectation of the corresponding estimations. Specially, $\bar{v}=\frac{f_{|S|}}{g}\tr{\rho \gamma_S}$, $g= \hat{f}_{\abs{S}}$ for mitigated estimation, and $g=\frac{\binom{n}{k/2}}{\binom{2n}{k}}$ for CS estimation. In Fig.~\ref{fig:majorana} (a-d), the $x$-axis represents the labels for different noise parameters. The dotted horizontal line indicates the value $\tr{\rho \widetilde{\gamma}_{S}}$.}}
    \label{fig:majorana}
\end{figure*}

The estimator $\hat{v}$ for $\tr{\rho \widetilde{\gamma}_S}$ can be represented as in Eq.~\eqref{eq:est_majoranaOpe}.
Here we choose the number of qubits $n = 4$ and $S = \cbra{1,2}$, and $\widetilde{\gamma}_S = U_Q^\dagger \gamma_S U_Q$ and $Q$ is uniformly randomly chosen from Perm$(2n)$. The quantum state $\rho$ is a uniformly randomly generated $4$-qubit pure state. As shown in Fig.~\ref{fig:majorana}, we depict the {estimations} of classical shadow estimators~\cite{wan2022matchgate} and our mitigated Algorithm \ref{alg:mitigated_estimation_alg}, with the changes of noise strength ({Fig.~\ref{fig:majorana}(a-d)}) and the changes of the number of samples ({Fig.~\ref{fig:majorana}(e-h)}). Here we numerically test the estimations with respect to depolarizing, amplitude damping, $X$-rotation, and Gaussian unitaries.

The calibration samples {in \Cref{alg:mitigated_estimation_alg} for the numerics} are $N_c = 4,000$ and $K_c = 20$ for all noise models.
The number of samples for the classical shadow method is set as $N_e = 4,000, K_e =10$, and for \Cref{alg:mitigated_estimation_alg} are set as $N_e = 4,000/(1-p_{\text{noise}})$ and $K_e = 10$, where $p_{\text{noise}}$ is the noise strength varying for different noise settings:

(1) Depolarizing noise $\Lambda_{\mathrm d}(\rho) = (1-p)\rho + p\frac{\Ibb}{2^n}$, where $\rho$ is any quantum state and $p$ is the depolarizing noise strength. In Fig.~\ref{fig:majorana} (a), $p$ varies from 0.05 to 0.3 ($p =0.05j$ for $x$-axis equals $j$ where $j\in [6]$), and in Fig.~\ref{fig:majorana} (b), $p = 0.2$. 

(2) {Generalized} amplitude damping noise $\Lambda_{\mathrm a}$ with
representation
\begin{align}
\Lambda_{\mathrm a}(\rho) = \sum_{
\substack{u, v\in \cbra{0,1}^n\\
u\ne v}
} E_{uv} \rho E_{uv}^{\dagger} + E_0 \rho E_0^{\dagger},
\label{eq:gen_amp_noise}
\end{align}
where $E_{uv} = \sqrt{ p_{uv}}\ketbra{v}{u}$ for $u\neq v \in \{0,1\}^n$ and $E_0 = \sqrt{\Ibb - \sum_{
\substack{u,v\in \cbra{0,1}^n\\
u\ne v}
}E_{uv}^{\dagger}E_{uv}}$.
Note that Eq.~\eqref{eq:gen_amp_noise} is a generalization of Eq.~\eqref{eq:amp_noise}, which connects to  Eq.~\eqref{eq:amp_noise} by setting $p_{uv} = \bar{p}_u$ for any $u,v\in [n]$.
Here we uniformly randomly choose $p_{uv}$ in {$\frac{[0,1]+ j-1}{6\times 2^{n+1}}$}, and labeled it as $j$ in the $x$-axis of Fig.~\ref{fig:majorana} where $j\in [6]$, and choose the generated damping errors for {$j=5$}  case in {Fig.~\ref{fig:majorana} (b) as the damping errors for Fig.~\ref{fig:majorana} (f)}. 

(3) $X$-rotation noise $\Lambda_{\mathrm r}$ defined in Eq.~\eqref{eq:Rx_noise} with noise parameters {$\theta_j = \frac{\pi}{2\pbra{8-j}}$} for $j\in [6]$, and the noise strength is chosen as $1-\cos \theta$ in {Fig.~\ref{fig:majorana} (c), and Fig.~\ref{fig:majorana} (g)} depicts the errors in the $X$-rotation noise with noise parameter $\theta_6$. The $x$-axis of the mitigation results of $X$-rotation noise in {Fig.~\ref{fig:majorana} (c)} denotes the label for noise parameters $\cos \theta_j$ for $j$ range from 1 to 6.

{(4) The Gaussian unitary noise channel $\mathcal{U}_Q$ is chosen such that $Q$ is sampled from the signed permutation group, ensuring that the coefficient $f_{1}$ for the noise channel is non-zero.
The associated numerical results are shown in {Fig.~\ref{fig:majorana} (d) and (h)}.
The number of estimation samplings $N_e = 8000, K_e = 10$ for Fig.~\ref{fig:majorana} (d)}.
{Fig.~\ref{fig:majorana} (h)} choose the same noise parameter with the fifth noise parameter in {Fig.~\ref{fig:majorana} (d)}.
From the figure, we see that without mitigation, the error is enormous with the CS algorithm.

{In Fig.~\ref{fig:majorana} (e-h) the number of samples ranges from $\floor{900+100\exp(j)}$ for $j\in \cbra{0,\ldots, 5}$.} From Fig.~\ref{fig:majorana}, we see that with the increase of the noise strength, the classical shadow method with depolarizing, amplitude damping, and $X$-rotation noise all gradually diverge to the expected value $\tr{\rho \widetilde\gamma_S}$, while our error-mitigated estimation protocol in Alg.~\ref{alg:mitigated_estimation_alg} gives an expected value that is close to the noiseless value. {Based on the numerical results depicted in Fig.\ref{fig:majorana} (e-h), it is evident that as the number of samples increases, the estimation outcomes generated by Algorithm~\ref{alg:mitigated_estimation_alg} approaches the expected value $\tr{\rho \widetilde{\gamma}_S}$}, while the convergent value for the classical shadow method is far from the expected value with depolarizing, amplitude damping, and $X$-rotation noises. {Conversely, the error bar associated with CS estimations is observed to be smaller compared to mitigated estimations, using the same number of samplings, as illustrated in Fig. \ref{fig:majorana} (e-h). This is due to the variance of the mitigated estimations associated with the average noise fidelities in $\Gamma_{2k}$ subspace $\Bcal_k$.}


\section*{Discussion}
\label{sec:dis}

We present an error-mitigated classical shadow algorithm for noisy fermionic systems, thereby extending matchgate classical shadows for noiseless systems~\cite{Zhao21Fermionic,wan2022matchgate}. With our method, the calibration process requires a number of copies of the classical state $\ketbra{\bm 0}{\bm 0}$ that scales logarithmically with the number of qubits. Assuming a constant average noise fidelity for the noise channel, our algorithm requires the same order of estimation copies as the matchgate classical shadow without error mitigation~\cite{wan2022matchgate}. Our algorithm is applicable for efficiently calculating all the elements of a given $k$-RDM.

To provide a clearer demonstration of the average fidelity of common noises, we consider depolarizing, amplitude damping, and $X$-rotation noises. The average fidelity of depolarizing and amplitude damping noises are given by $(1-p)$ and $(1-\sum_{u}\bar{p}_u)$ respectively, where $p,\bar{p}_u$ are the noise parameters. For $X$-rotation noise, the average fidelity lies between $[\min_{\theta}\cos^k\theta ,\max_{\theta}\cos^k\theta]$ where $\theta$ is the rotated angles. To evaluate the effectiveness of our algorithm in mitigating these noises, we compare its performance with the matchgate CS algorithm to calculate the expectations of $\widetilde\gamma_S$, where $\abs{S} = 2k$, which is crucial for calculating $k$-RDMs. Our numerical results show good agreement with the theory, validating the effectiveness of our algorithm.

While our algorithm demonstrates good performance in the presence of common types of noise in near-term quantum devices, further investigations are required to explore its potential limitations and improvements. Some of the open questions that can be addressed in future research include:

\begin{itemize}
    \item [(1)] Is it possible to extend our algorithm to handle other types of noise, such as time-dependent, non-Markovian and environmental noise~\cite{van2006introduction}, or more generally, noise that does not satisfy the GTM assumption? If so, how would these different types of noise impact the performance of our algorithm?
    {
\item[(2)] We provide numerical results under the assumption of Gaussian unitary noise, a common noise model in the fermionic platform. An intriguing unanswered question pertains to the performance of our algorithm in the presence of more typical noise channels inherent to fermionic platform.
    }
    \item[(3)] The number of gates required by the matchgate circuit is $\Ocal(n^2)$~\cite{Jiang18Quantum}. As a result, the accumulation of noise significantly increases the error mitigation threshold~\cite{Deshpande22tight,quek2022exponentially}, which raises the intriguing question of whether it is feasible to provide an error-mitigated classical shadow using a shallower circuit. This may be compared with Bertoni et al.~\cite{bertoni2022shallow}, who propose a shallower classical shadow approach for qubit systems.
    \item[(4)] In addition, we have included in the Supplementary materials the analysis and numerical results regarding the overlap between a Gaussian state and any quantum states, as well as the inner product between a Slater determinant and any pure state. This prompts the question of whether our algorithm can be utilized to calculate other physical, chemical, or material properties beyond the scope of this paper.
\end{itemize}

Exploring these questions would enhance our comprehension of the potential and limitations of our algorithm, and could potentially pave the way for advancements in the estimation of fermionic Hamiltonian expectation values with near-term quantum devices.

\emph{Note added.---} Following the completion of our manuscript, we became aware of recent independent
work by Zhao and Miyake~\cite{zhao2023grouptheoretic}, 
who also study ways to counteract noise in the
fermionic shadows protocol.

\section*{Methods}
\label{sec:learn_noisy_channel}

\subsection*{Noisy fermionic channel representation}
Here we present an unbiased estimation approach for the noisy representation of the fermionic shadow channel, which utilizes a protocol similar to the matchgate benchmarking protocol~\cite{Helsen22Matchgate}. According to representation theory~\cite{fulton2013representation} (see details in Supplementary Note 1), the noisy fermionic shadow channel can be represented as $\widetilde\Mcal = \sum_{k = 0}^n f_{2k}\Pcal_{2k}$. 
Since $\tr{H \widetilde{\Mcal}(\rho)} = \tr{\rho \widetilde{\Mcal}(H)}$, with the pre-knowledge of $f_{2k}$ we can calculate $\tr{H \widetilde{\Mcal}(\rho)}$ for any observable $H$ in the even subspace. To learn the $2(n+1)$ coefficients, we begin with the easily prepared state $\rho_0 = \ketbra{\bm 0}{\bm 0}$ and apply a noisy unitary channel $\widehat\Ucal$ with $\Ucal$ sampled from the matchgate group. We then perform a $Z$-basis measurement $\Xcal$ with measurement outcomes $x\in\cbra{0,1}^n$, followed by classically operating the unitary channel $\Ucal^{\dagger}$ on $\ketbra{x}{x}$. The generated state has expected value $\sum_{k = 0}^{n} f_{2k}\Pcal_{2k}(\rho_0)$,
and $f_{2k}$ is obtained by projecting the final state to the $\Pcal_{2k}$ subspace with some classical post-processing. We illustrate the learning process of the noisy channel in Fig.~\ref{fig:scheme} (a). The following theorem provides an unbiased estimation of the noisy fermionic classical shadow.

\begin{theorem}
    The noisy fermionic shadow channel can be represented as $\widetilde{\Mcal} = \sum_{k} f_{2k} \Pcal_{2k}$, where $\Pcal_{2k}$ is defined in Eq.~\eqref{eq:projection_fermionic_system}, and $\hat{f}_{2k} = {2^{n}}\supbraket{\bm 0|\Pcal_{2k} \Ucal_{Q}^\dagger | x}/{{n\choose k}}$ is an unbiased estimator of $f_{2k}\in \Rbb$,
where $\supket{x}$ is the measurement outcome from the noisy shadow protocol obtained by starting from the input state $\supket{\bm 0}$ and applying a noisy quantum circuit $\widetilde\Ucal_{Q}$ followed by a $Z$-basis measurement, where  $\Ucal_{Q}$ is uniformly randomly picked from the matchgate group.
\label{thm:cali_f_est}
\end{theorem}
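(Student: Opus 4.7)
The plan is to split the proof into two pieces: first establish the block-diagonal form $\widetilde{\Mcal} = \sum_{k=0}^n f_{2k}\Pcal_{2k}$, and then verify unbiasedness of $\hat f_{2k}$.

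For the decomposition, the key observation is that $\widetilde{\Mcal}=\int d\mu(Q)\,\Ucal_Q^\dagger\Xcal\Lambda\Ucal_Q$ commutes with the adjoint action of every matchgate: the change of variable $Q\mapsto QQ_0$ in the Haar integral over $\Mbb_n$ gives $\widetilde{\Mcal}\Ucal_{Q_0} = \Ucal_{Q_0}\widetilde{\Mcal}$ for every $U_{Q_0}\in\Mbb_n$. Under this adjoint action, $\Lcal(\Hcal)$ decomposes into the invariant Majorana subspaces $\Gamma_k$ ($U_Q^\dagger\gamma_S U_Q\in\Gamma_{|S|}$ by the determinant identity recalled in the preliminaries), and each $\Gamma_k$ carries an irreducible and pairwise-inequivalent representation of the matchgate group (Supplementary Note 1). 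Schur's lemma then forces $\widetilde{\Mcal} = \sum_{k=0}^{2n} f_k\Pcal_k$. To eliminate the odd-degree terms, note that $\mathrm{Image}(\Xcal) = \mathrm{span}\{\ketbra{x}{x}\}$, and every $\ketbra{x}{x}$ is a product of diagonal $Z$-type operators and hence lies in the even subspace $\Gamma_{\mathrm{even}}$; consequently $f_{2k+1}=0$ for all $k$.

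For unbiasedness, the measurement outcome $x$ has probability $\supbraket{x|\Lambda\Ucal_Q|\bm 0}$ given $Q$, so
\begin{align*}
\E_{Q,x}[\hat f_{2k}]
&= \frac{2^n}{{n\choose k}}\int d\mu(Q)\sum_{x}\supbraket{\bm 0|\Pcal_{2k}\Ucal_Q^\dagger|x}\!\supbraket{x|\Lambda\Ucal_Q|\bm 0}\\
&= \frac{2^n}{{n\choose k}}\supbraket{\bm 0|\Pcal_{2k}\widetilde{\Mcal}|\bm 0}
= \frac{2^n}{{n\choose k}}f_{2k}\,\supbraket{\bm 0|\Pcal_{2k}|\bm 0},
\end{align*}
using $\Pcal_{2k}\Pcal_{2k'}=\delta_{kk'}\Pcal_{2k}$ in the last step. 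It remains to show $\supbraket{\bm 0|\Pcal_{2k}|\bm 0}={n\choose k}/2^n$. Expanding as $\sum_{S\in{[2n]\choose 2k}}|\supbraket{\bm 0|\gamma_S}|^2$, the only $S$ for which $\braket{\bm 0|\gamma_S|\bm 0}$ is nonzero are $S=D(T)$ for $T\in{[n]\choose k}$: otherwise the Jordan--Wigner image of $\gamma_S$ contains at least one $X$ or $Y$ factor that kills the vacuum. For such $S$ one computes $\gamma_{D(T)} = i^{k}\prod_{j\in T}Z_j$, giving $|\braket{\bm 0|\gamma_{D(T)}|\bm 0}|^2=1$ and $|\supbraket{\bm 0|\gamma_{D(T)}}|^2=1/2^n$ after the superket normalisation. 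Summing the ${n\choose k}$ surviving terms gives $\supbraket{\bm 0|\Pcal_{2k}|\bm 0}={n\choose k}/2^n$, which cancels the prefactor exactly to yield $\E[\hat f_{2k}] = f_{2k}$.

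The main technical content sits in the representation-theoretic input: one needs that each $\Gamma_k$ is an irreducible and pairwise-inequivalent representation of the matchgate group---both the continuous $\Mbb_n$ and, because the algorithm also allows uniform sampling from the discrete subgroup $\Mbb_n\cap\Cl_n$, the discrete variant. This is classical representation theory of $O(2n)$ on the exterior algebra, deferred to Supplementary Note 1. Once it is in hand, the Haar change of variables, the Jordan--Wigner evaluation, and the combinatorial sum are all elementary.
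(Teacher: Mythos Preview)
Your proposal is correct and follows essentially the same route as the paper: both invoke Schur's lemma via the matchgate irreps on the $\Gamma_k$ subspaces to get the block form, then compute $\Ebb[\hat f_{2k}] = \tfrac{2^n}{\binom{n}{k}}f_{2k}\,\supbraket{\bm 0|\Pcal_{2k}|\bm 0}$ and evaluate that inner product as $\binom{n}{k}/2^n$. Your argument is slightly more explicit than the paper's---you spell out the Haar change-of-variables giving commutation, the reason odd $f_k$ vanish (image of $\Xcal$ lies in $\Gamma_{\mathrm{even}}$), and you compute $\supbraket{\bm 0|\Pcal_{2k}|\bm 0}$ via Jordan--Wigner rather than via the Gaussian product expansion $\ketbra{\bm 0}{\bm 0}=\prod_j\frac{\Ibb-i\gamma_{2j-1}\gamma_{2j}}{2}$ the paper uses in its Lemma~\ref{lem:p_2k_0_fidelity}---but these are cosmetic differences, not a different strategy.
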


The representation of the noisy fermionic channel, denoted by $\widetilde{\Mcal}=\sum_k f_{2k}\Pcal_{2k}$, where $f_{2k}\in \Cbb$, can be obtained by the irreducible representation of the Gaussian unitary. A detailed proof of this theorem is provided in Supplementary Note 4. We claim that the coefficients $\hat{f}_{2k}$ can be efficiently calculated with the following lemma.
\begin{lemma}
$\hat{f}_{2k}$ is the coefficient of $x^{k}$ in the polynomial $p_{Q}(x)$, where
\begin{equation}
p_{Q}(x)= {n \choose k}^{-1}\pf\pbra{C_{\ket{\bm 0}}}\pf\pbra{-C_{\ket{\bm 0}}^{-1} + x Q^TC_{\ket{x}}Q},
\label{eq:est_f_polycal}
\end{equation}
where $C_{\ket{x}} = \bigoplus_{i = 1}^n \begin{pmatrix}
0 & (-1)^{x_j}\\
(-1)^{x_j + 1} & 0
\end{pmatrix}$
 is the covariance matrix of $\ket{x}$.
\label{lem:f_2k_calculation}
\end{lemma}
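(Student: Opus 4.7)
The plan is to start from the explicit expression for $\hat{f}_{2k}$ given in Theorem 2, expand $\Pcal_{2k}$ in the Majorana basis, recognize the resulting expectation values as those of fermionic Gaussian states, apply Wick's theorem, and thereby reduce the lemma to a classical Pfaffian generating-function identity. Concretely, the first step is to unfold
\begin{align*}
\hat{f}_{2k} = \frac{2^n}{{n\choose k}}\supbraket{\bm 0|\Pcal_{2k}\Ucal_Q^\dagger|x} = \frac{(-1)^k}{{n\choose k}}\sum_{S\in{[2n]\choose 2k}}\tr{\rho_0\gamma_S}\,\tr{\sigma\gamma_S},
\end{align*}
with $\rho_0 = \ketbra{\bm 0}{\bm 0}$ and $\sigma = U_Q^\dagger\ketbra{x}{x}U_Q$, using $\Pcal_{2k}=\sum_{|S|=2k}\supket{\gamma_S}\supbra{\gamma_S}$, the Hilbert--Schmidt normalizations $\tr{\gamma_S^\dagger\gamma_S}=2^n$ and $\tr{\rho_0^2}=1$, and $\gamma_S^\dagger = (-1)^k\gamma_S$ for $|S|=2k$. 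Both $\rho_0$ and $\sigma$ are fermionic Gaussian states with covariance matrices $C_{\ket{\bm 0}}$ and $Q^T C_{\ket{x}}Q$ respectively --- the latter because $U_Q^\dagger\gamma_j U_Q = \sum_k Q_{jk}\gamma_k$ implies $U_Q\gamma_j U_Q^\dagger = \sum_k Q_{kj}\gamma_k$. Wick's theorem then gives $\tr{\rho\gamma_S} = i^k \pf(C_\rho|_S)$ for any Gaussian $\rho$ and $|S|=2k$, and combining signs produces
\begin{align*}
\hat{f}_{2k} = \frac{1}{{n\choose k}}\sum_{S\in{[2n]\choose 2k}}\pf(C_{\ket{\bm 0}}|_S)\,\pf\!\bigl((Q^T C_{\ket{x}}Q)|_S\bigr).
\end{align*}

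It then remains to prove the purely matrix-algebraic identity
\begin{align*}
\pf(A)\,\pf(-A^{-1}+xB) = \sum_{k=0}^n x^k\sum_{S\in{[2n]\choose 2k}}\pf(A|_S)\,\pf(B|_S)
\end{align*}
for any $2n\times 2n$ invertible skew-symmetric $A$ and skew-symmetric $B$. Specialized to $A = C_{\ket{\bm 0}}$ and $B = Q^T C_{\ket{x}}Q$, this identifies ${n\choose k}\hat{f}_{2k}$ with $[x^k]\pf(A)\pf(-A^{-1}+xB)$, which is exactly the lemma. I would prove the identity combinatorially: expand $\pf(-A^{-1}+xB)$ as the signed sum over perfect matchings of $[2n]$, distribute the choice $-A^{-1}$ versus $xB$ on each edge, and group terms by the set $S\subseteq[2n]$ covered by $B$-edges to obtain $[x^k]\pf(-A^{-1}+xB) = \sum_{|S|=2k}\epsilon(S,S^c)(-1)^{n-k}\pf(A^{-1}|_{S^c})\,\pf(B|_S)$, where $\epsilon(S,S^c)$ is the shuffle sign placing $S$ before $S^c$. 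Multiplying through by $\pf(A)$ and applying Jacobi's complementary Pfaffian minor identity, $\pf(A)\pf(A^{-1}|_{S^c}) = \epsilon(S,S^c)(-1)^{n-k}\pf(A|_S)$, makes all auxiliary signs cancel. An alternative I would keep in reserve uses the Pfaffian Schur complement, $(-1)^n\pf(A)\pf(-A^{-1}+xB) = \pf\!\begin{pmatrix}-A & I\\ -I & xB\end{pmatrix}$, and expands the $4n\times 4n$ Pfaffian on the right directly.

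The main obstacle will be sign bookkeeping: the $(-1)^k$ from $\gamma_S^\dagger$, the $i^k\cdot i^k = (-1)^k$ from two applications of Wick, the $(-1)^{n-k}$ from $\pf(-A^{-1}|_{S^c})$, the shuffle sign $\epsilon(S,S^c)$, and the Jacobi sign must all conspire to leave the clean prefactor ${n\choose k}^{-1}$. I would verify the small cases $n=1$ and $(n,k)=(2,1)$ by hand to pin down the conventions (including whether the Wick prefactor is $i^k$ or $(-i)^k$ in the paper's covariance-matrix convention, as these differ only by an overall sign that cancels against the $\gamma_S^\dagger$ sign) before writing out the general argument.
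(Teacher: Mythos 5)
Your proposal is correct and follows essentially the same route as the paper's proof, which consists of invoking Lemma~\ref{lem:cal_maj_exp} (the Wick/Pfaffian formula for a single $\gamma_S$) together with Proposition~1 of Wan et al.~\cite{wan2022matchgate} for the generating-function step. The reduction you carry out explicitly --- expanding $\Pcal_{2k}$ in the Majorana basis to get $\hat{f}_{2k}=\tfrac{(-1)^k}{\binom{n}{k}}\sum_{|S|=2k}\tr{\rho_0\gamma_S}\tr{\sigma\gamma_S}$, identifying $\sigma=U_Q^\dagger\ketbra{x}{x}U_Q$ as Gaussian with covariance $Q^TC_{\ket{x}}Q$, and noting that the $(-1)^k$ from $\gamma_S^\dagger$ cancels against $i^{2k}$ from the two Wick factors --- checks out, and it correctly lands on the $Q^TC_{\ket{x}}Q$ convention of the lemma statement (the supplementary proof writes $QC_{\ket{x}}Q^T$, which is the covariance of $U_Q\ketbra{x}{x}U_Q^\dagger$ rather than of the back-rotated state; your version is the consistent one). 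The only substantive difference is that you supply a proof of the identity $\pf(A)\pf(-A^{-1}+xB)=\sum_k x^k\sum_{|S|=2k}\pf(A|_S)\pf(B|_S)$ via the perfect-matching expansion and Jacobi's complementary-minor identity for Pfaffians, whereas the paper imports this wholesale from~\cite{wan2022matchgate}; your sketch of that identity is the standard argument and is sound, though as you note the shuffle and Jacobi signs still need to be written out carefully (the $n=1$ check, where both sides equal $1+ab\,x$, confirms the normalization). No gap.
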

This lemma can be obtained by Proposition 1 in Ref.~\cite{wan2022matchgate}. For the completeness of this paper, we also give the proof of this lemma in Supplementary Note 4. The coefficients can be calculated with the polynomial interpolation method in polynomial time. 
With Theorem~\ref{thm:cali_f_est}, we can give an unbiased estimation $\widehat{\Mcal}$ for $\widetilde{\Mcal}$. 

By the definition of $\hat f_{2k}$, and the twirling properties of $\int_Q d \mu(Q) {\Ucal_{Q}^{\otimes 2}}$, the expectation value  for the estimation $\hat f_{2k}$ can be formulated as
\begin{align}
f_{2k} = {2n\choose 2k}^{-1}{n\choose k}\Bcal_k.
\end{align}
We postpone the details of this proof to Supplementary Note 4. It implies that in the noiseless scenario, $\widetilde{M}$ degenerates into $\Mcal$ as defined in Equation \eqref{eq:fermion_channel}.
Combined with the definition of $\Bcal_{k}$ in Eq.~\eqref{eq:def_B_k}, $f_{2k}$ is close to ${2n\choose 2k}^{-1}{n\choose k}$ if the average noise fidelity in $\Gamma_{2k}$ subspace is close to one. 
Recall that $\Bcal_{k}$ is a constant in the depolarizing, amplitude-damping, and $X$-rotation noises with a constant noise strength, which implies that these noises
can be efficiently mitigated with our algorithm. 

Alternatively, we have a counterexample in Supplementary Note 3 that illustrates the limitations of our mitigation algorithm. Specifically, if the noise follows a Gaussian unitary channel $\Ucal_Q$ where $Q$ is a signed permutation matrix (associated with a discrete Gaussian unitary), then $\Bcal_k$ may become zero. Hence, $f_{2k} = 0$, rendering our estimation approach unsuitable.

Recall that our goal is to estimate $\cbra{\tr{\rho H_i}}_{i=1}^m$ using a noisy quantum device and polynomial classical cost, where $\rho$ is an $n$-qubit quantum state and $H_i$ is an observable in the even subspace $\Gamma_{\mathrm{even}}$.
Here we visualize the estimation process with the guarantee in Theorem~\ref{thm:cali_f_est}.
We uniformly randomly sample a matchgate $U_{Q}$ from the matchgate group and apply it to the quantum state $\rho$, and then measure in the $Z$-basis to get outcomes $x$. We define the estimator
\begin{align}
\hat{v} &= \tr{H{\widehat\Mcal^{-1}(U_Q^\dagger \ketbra{x}{x}U_Q)}}\\
&=\sum_{k=0}^n \hat{f}_{2k}^{-1}\tr{H \Pcal_{2k}\pbra{U_Q^\dagger \ketbra{x}{x}U_Q}}.
\label{eq:estimator}
\end{align}
{
It is easy to show that $\text{tr}(H \widehat{\mathcal{M}}^{-1}\widetilde{\mathcal{M}}(\rho))$ is an unbiased estimation of $\text{tr}(\rho H)$ when $\widetilde{\mathcal{M}}$ is invertible, specifically when $f_{2k}\neq 0$ for any $k$, and $H$ belongs to the even subspace $\Gamma_{\mathrm{even}}$.
Given that $\mathbb{E}[\widehat{\mathcal{M}}] = \widetilde{\mathcal{M}}$ and $\hat{v}$ serves as an unbiased estimator of $\text{tr}(H \widehat{\mathcal{M}}^{-1}\widetilde{\mathcal{M}}(\rho))$, it implies that the estimation error $\varepsilon := |\hat{v} - \text{tr}(\rho H)|$ is bounded by $|\hat{v} - \text{tr}(H \widehat{\mathcal{M}}^{-1}\widetilde{\mathcal{M}}(\rho))| + |\text{tr}(H \widehat{\mathcal{M}}^{-1}\widetilde{\mathcal{M}}(\rho)) - \text{tr}(\rho H)|$, which can be minimized with the increasing of the number of samplings for the estimations $\hat{v}$ and $\text{tr}(H \widehat{\mathcal{M}}^{-1}\widetilde{\mathcal{M}}(\rho))$, as shown in Eq. \eqref{eq:error_twoSource}.} Note that the estimator defined in Eq.~\eqref{eq:estimator} is not always efficient for all states $\rho$ and observables $H$. 
Here we claim that with this estimator, we can efficiently calculate substantial physical quantities such as the expectation value of $k$-RDM, which not only serves the variational quantum algorithm (VQE) of a fermionic system with up to $k$ particle interactions~\cite{malone2022towards,liu2021efficient}, but also provide supports to the calculations of derivatives of the energy~\cite{overy2014unbiased,o2019calculating} and multipole moments~\cite{rubin2018application}. It is also an indispensable resource for the error mitigation technique~\cite{McClean17Hybrid,Takeshita20Increasing}. It also serves to calculate the overlap between a Gaussian state and any quantum state, and the inner product between a Slater determinant and any pure state inspired by the fermionic shadow analysis of Wan et al.~\cite{wan2022matchgate}. We postpone the details to Supplementary Note 5.

Note that all elements of $k$-RDMs can be derived through $\tr{\rho \gamma_S}$ for a total of $\Ord{n^k}$ sets $S$ with $|S| = 2k$.  In an expansion of this concept, we now focus on evaluating $\tr{\rho \widetilde \gamma_S}$ for $\Ord{n^k}$ different $S$ with $|S| = 2k$.
To calculate the expectation value $\tr{\rho \widetilde\gamma_S}$, we set the input quantum state to be $\rho$ and the observable to be $H = \widetilde{\gamma}_S$ in the estimation formula of Eq.~\eqref{eq:estimator}, which can then be simplified to
\begin{align}
\hat{v} = i^{k} \hat{f}_{2k}^{-1}\pf\pbra{Q_1Q^T C_{\ket{x}}QQ_1^T|_S},
\label{eq:est_majoranaOpe}
\end{align}
where $\widetilde{\gamma}_j = \sum_{l\in [2n]} Q_{1}(j,l)\gamma_l$, $Q_{1}(j,l)$ is the $(j,l)$-th element of $Q_1$, and $C_{\ket{x}} = \bigoplus_{i = 1}^n \begin{pmatrix}
0 & (-1)^{x_j}\\
(-1)^{x_j + 1} & 0
\end{pmatrix}$ is the covariance matrix of $\ketbra{x}{x}$. Here, $A|_S$ refers to the submatrix obtained by taking the columns and rows of the matrix $A$ that are indexed by $S$. The simplified quantity can be calculated in polynomial time since $\hat{f}_c$ and the Pfaffian function can be calculated efficiently. We give a detailed proof of the simplification process in Supplementary Note 5.

\section*{Data Availability}
The datasets produced during the current study are available at \href{https://github.com/GillianOoO/Error-mitigated-fermionic-classical-shadow.git}{https://github.com/GillianOoO/Error-mitigated-fermionic-classical-shadow.git}.

\section*{Acknowledgments}
We express our appreciation for the valuable input and feedback given by Jens Eisert, Janek Denzler, and Ellen Derbyshire. We are also thankful for the enlightening conversations held with Xiao Yuan and Yukun Zhang. Furthermore, we extend our gratitude to Janek Denzler and Ellen Derbyshire for their assistance with certain numerical aspects. We also thank Guang Hao Low for their comments on an earlier version of this manuscript. 
BW acknowledges funding support from the Bundesministerium für Bildung und Forschung (FermiQP, DAQC), Bundesministerium für Wirtschaft und Klimaschutz (EniQmA), and the Munich Quantum Valley (K-8).
DEK acknowledges funding support from the Agency for Science, Technology and Research (A*STAR) Central Research Fund (CRF) Award, A*STAR C230917003, and the National Research Foundation, Singapore and A*STAR under its Quantum Engineering Programme (NRF2021-QEP2-02-P03).

\section*{Competing Interests}
All authors declare no financial or non-financial competing interests.

\section*{Author contributions}
Bujiao Wu and Dax Enshan Koh conducted, discussed, and wrote the paper together. Bujiao Wu performed the numerical simulation.


\newpage

\appendix

\makeatletter
\newcommand*{\rom}[1]{\expandafter\@slowromancap\romannumeral #1@}
\makeatother

\begin{center}
{\bf\large Supplementary Notes}
\end{center}

\section*{Supplementary Note 1}
\label{app:group_properties}

This section provides an overview of group theory~\cite{goodman2009symmetry,fulton2013representation} and its application in the context of the matchgate group. 

Let $\Gmat$ be a finite group and let $U(V)$ denote the group of unitary linear transformations on a finite complex vector space $V$. We define a representation $\phi$ of the group $\Gmat$ on the space $V$ as
\begin{equation}
\phi: \Gmat\rightarrow U(V):\Gmat\rightarrow \phi(\Gmat).
\end{equation}
Maschke's lemma shows that any representation of a group can be uniquely represented as a direct sum of its irreducible representations, i.e.
\begin{equation}
\phi(G) = \bigoplus_{\lambda\in R_G} \phi_{\lambda}(G)^{\otimes m_{\lambda}},
\end{equation}
for any $G\in \Gmat$, where $\phi_\lambda$ is the irreducible representation of $\Gmat$ and $m_{\lambda}$ is the multiplicity of $\phi_{\lambda}$. Here we consider only the multiplicity-free case where $m_{\lambda} = 1$ for all $\phi_{\lambda}$. We provide a corollary of Schur's lemma to evaluate twirls over multiplicity-free representations as follows.

\begin{lemma}[Rephrase of {\cite[Lemma 1.7 and Proposition 1.8]{fulton2013representation}} and {\cite[Lemma 1]{helsen2019new}}]
Let $\Gmat$ be a finite group, and $\phi$ be a multiplicity-free representation of $\Gmat$ on a finite complex vector space $V$ with decomposition
\begin{equation}
\phi(G) \simeq \bigoplus_{\lambda \in R_G} \phi_{\lambda}{G}
\end{equation}
for any $G\in \Gmat$, into inequality irreducible representation $\phi_{\lambda}$. Then for any linear map $Q: V \rightarrow V$, the twirl of $Q$ over $\phi$ has the form
\begin{equation}
\Ecal_{\phi}(Q) = \frac{1}{\abs{\Gmat}} \sum_{G \in \Gmat} \phi(G) Q \phi(G)^\dagger = \sum_{\lambda\in R_G} \frac{\tr{Q\Pcal_{\lambda}}}{\tr{\Pcal_{\lambda}}}\Pcal_{\lambda} ,
\end{equation}
where $\Pcal_{\lambda}$ is the projector onto the support of the representation $\phi_{\lambda}$.
\label{lem:schur_lemma}
\end{lemma}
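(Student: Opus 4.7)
The plan is a standard application of Schur's lemma, packaged so as to read off the coefficients of each isotypic projector. First I would verify that the twirled operator $\Ecal_{\phi}(Q)$ is $\Gmat$-equivariant, meaning $\phi(H)\Ecal_\phi(Q)\phi(H)^\dagger = \Ecal_\phi(Q)$ for every $H\in\Gmat$; this is a one-line computation using the bijection $G\mapsto HG$ of the finite group, combined with unitarity of $\phi$.

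Next, I would decompose $V = \bigoplus_{\lambda\in R_G} V_\lambda$, where each $V_\lambda$ carries the irreducible representation $\phi_\lambda$. Since by assumption $\phi$ is multiplicity-free, Schur's lemma implies that any operator commuting with every $\phi(G)$ annihilates cross terms between inequivalent irreps and restricts to a scalar multiple of the identity on each $V_\lambda$. Hence $\Ecal_\phi(Q) = \sum_{\lambda\in R_G} c_\lambda \Pcal_\lambda$ for some scalars $c_\lambda\in\Cbb$, where $\Pcal_\lambda$ is the orthogonal projector onto $V_\lambda$.

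Finally I would determine the $c_\lambda$ by taking traces. Multiplying the previous identity by $\Pcal_\lambda$ and using orthogonality of the projectors gives $\tr{\Ecal_\phi(Q)\Pcal_\lambda} = c_\lambda \tr{\Pcal_\lambda}$. On the other hand, because $\Pcal_\lambda$ commutes with every $\phi(G)$ (the orthogonal projector onto a $\phi$-invariant subspace whose orthogonal complement is itself $\phi$-invariant, by unitarity), cyclicity of the trace collapses the averaged sum to $\tr{Q\Pcal_\lambda}$. Solving yields $c_\lambda = \tr{Q\Pcal_\lambda}/\tr{\Pcal_\lambda}$, which is the claimed formula.

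The main obstacle is really only bookkeeping, since the statement is Schur's lemma rewritten for twirls and is explicitly credited as a rephrasing of existing results. The one mild subtlety worth flagging is the role of the multiplicity-free hypothesis: without it, each scalar $c_\lambda$ would be replaced by an operator acting between the $m_\lambda$ isomorphic copies of $\phi_\lambda$, and the clean diagonal form $\sum_\lambda c_\lambda \Pcal_\lambda$ would no longer hold — one would instead obtain $\sum_\lambda \Pcal_\lambda^{V_\lambda}\otimes M_\lambda$ for some matrix $M_\lambda$ on the multiplicity space, requiring a finer projector basis to extract.
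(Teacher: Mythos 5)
Your proof is correct and is exactly the standard argument: the paper itself gives no proof of this lemma, citing it as a rephrasing of \cite[Lemma 1.7 and Proposition 1.8]{fulton2013representation} and \cite[Lemma 1]{helsen2019new}, and the argument in those references is precisely your three steps (equivariance of the twirl via re-indexing $G\mapsto HG$, Schur's lemma plus multiplicity-freeness to get $\sum_\lambda c_\lambda \Pcal_\lambda$, and extraction of $c_\lambda$ by tracing against $\Pcal_\lambda$ using its commutation with every $\phi(G)$). Your closing remark about the failure of the diagonal form without the multiplicity-free hypothesis is also accurate.
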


Let $Q^{(u,v)}\in \Rbb^{2n\times 2n}$ be the basic rotation matrix with $Q^{(u,v)} = Q^{(u,v)}_{uv} \oplus \Ibb_{-uv}$, where the submatrix $Q^{(u,v)}_{uv}$ is constructed by the $u$th and $v$th rows and columns of $Q^{(u,v)}$, and is equal to
\begin{equation}
Q^{(u,v)}_{uv} = \begin{pmatrix}
\cos \theta & -\sin \theta\\
\sin \theta & \cos \theta
\end{pmatrix}_{u,v}.
\end{equation}
The submatrix $\Ibb_{-uv} = \Ibb - \ketbra{u}{u} - \ketbra{v}{v}$ contains all of the rows and columns of identity except the $u$-th and $v$-th rows, and columns. We define matchgate $U_{Q^{(u,u+1)}}$corresponding to the rotation matrix $Q^{(u,u+1)}$ of two nearest neighbors as
$\exp(i\theta X_{u/2}X_{u/2+1})$ for even $u$, and $\exp(i\theta Z_{(u+1)/2})$ for odd $u$. Let the reflection with angle $0$ to the $2k$th row and column, denote as $Q^{(k)} = [-1]_{2k} \oplus \Ibb_{-2k}$, and the matchgate $U_{Q^{(k)}}=X_{k}$. Hence 
all of $\cbra{Q^{(u,u+1)}, Q^{(k)}}$ generates the orthogonal group $\text{Orth}(2n)$, and
\begin{equation}
U_Q^\dagger \gamma_u U_Q = \sum_{v = 1}^n Q_{uv}\gamma_v, 
\end{equation}
for any $u\in [2n]$.
The matchgate group is denoted as
\begin{equation}
\Mbb_n = \cbra{U_Q|Q\in \text{Orth}(2n)}.  
\end{equation}

As an application of Lemma
\ref{lem:schur_lemma}, we define the superoperator representation of a given unitary as $\phi(U):= \Ucal$. Then,
\begin{equation}
\phi(U)\phi(V) = \phi(UV).
\end{equation}

In the following, we review the results of the first three moments of the uniform distribution over the matchgate group.
\begin{lemma}[Wan et al. \cite{wan2022matchgate}]
Let $Q$ be a matrix that is uniformly randomly sampled from the orthogonal group $\text{Orth}(2n)$. Then,
\begin{align}
& \int_Q d \mu(Q){\Ucal_Q} = \supket{\Ibb}\supbra{\Ibb}\\
& \int_Q d \mu(Q){\Ucal_Q^{\otimes 2}} = \sum_{k=0}^{2n}\supket{\Rcal_k^{(2)}}\supbra{\Rcal_k^{(2)}}
\\
& \int_Q d \mu(Q){\Ucal_Q^{\otimes 3}} = \sum_{
\substack{
k_1,k_2,k_3\geq 0\\
k_1 + k_2 + k_3\leq 2n
}
}\supket{\Rcal_{k_1,k_2,k_3}^{(3)}}\supbra{\Rcal_{k_1,k_2,k_3}^{(3)}},
\end{align}
where $\mu$ is the normalised Haar measure on $\mathrm{Orth}(2n)$, and 
\begin{align}
    &\supket{\Rcal_k^{(2)}} = { 2n\choose k}^{-1/2} \sum_{S\subseteq [2n],|S|=k}\supket{\gamma_S}\supket{\gamma_S}
    \label{eq:sec_moment}
    \\ &\supket{\Rcal_{k_1,k_2,k_3}^{(3)}} = {2n \choose k_1,k_2,k_3,2n-k_1-k_2-k_3}^{-1/2} \sum_{
    \substack{
 S_1, S_2, S_3\subseteq [2n] \ \mathrm{disjoint}\\
 \abs{S_j}=k_j \forall j \in [3]
    }
    } \supket{\gamma_{S_1}\gamma_{S_2}}
    \supket{\gamma_{S_2}\gamma_{S_3}}
    \supket{\gamma_{S_3}\gamma_{S_1}}.
\label{eq:third_moment}
\end{align}
\label{lem:threemomentsMatchgate}
\end{lemma}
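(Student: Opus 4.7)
The plan is to apply Schur's lemma (Lemma~\ref{lem:schur_lemma}) to the $\Ucal_Q$-invariant decomposition $\Lcal(\Hcal) = \bigoplus_{k=0}^{2n} \mathrm{span}(\Gamma_k)$ coming from the Majorana basis. The transformation rule $U_Q^\dagger \gamma_S U_Q = \sum_{S'}\det(Q|_{SS'})\gamma_{S'}$ makes each $\Gamma_k$-subspace $\Ucal_Q$-invariant; the essential additional input from representation theory is that, for $\mathrm{Orth}(2n)=O(2n)$, each such subspace carries an irreducible, self-dual representation (the $k$-th exterior power of the defining rep), and these $2n+1$ irreps are pairwise inequivalent across different $k$ (the two halves of $\bigwedge^n$ are swapped by orientation-reversing elements of $O(2n)$ and therefore do not split, while $\bigwedge^{2n-k}$ is distinguished from $\bigwedge^k$ by the determinant character). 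For the first moment, the unique invariant vector of $\Ucal_Q$ then lies in the trivial summand $\Gamma_0 = \Cbb\Ibb$, and Schur's lemma reduces the twirl to $\supket{\Ibb}\supbra{\Ibb}$.

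For the second moment, I would decompose $\Lcal(\Hcal)^{\otimes 2} = \bigoplus_{k,l} \mathrm{span}(\Gamma_k)\otimes \mathrm{span}(\Gamma_l)$. By irreducibility, self-duality, and pairwise inequivalence, $\dim \mathrm{Hom}_{\mathrm{Orth}(2n)}(\Cbb, \Gamma_k \otimes \Gamma_l) = \delta_{k,l}$, so only diagonal blocks contribute, and each contains a unique (up to scale) invariant vector, namely the canonical dual-pairing element $\sum_{|S|=k}\supket{\gamma_S}\supket{\gamma_S}$. Normalizing by $\binom{2n}{k}^{-1/2}$ yields $\supket{\Rcal^{(2)}_k}$, and summing the orthogonal rank-one projectors over $k$ produces Eq.~\eqref{eq:sec_moment} via Schur's lemma.

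For the third moment, which is the main obstacle, I would construct invariants using the Clifford multiplication $\gamma_S \gamma_{S'} = \pm \gamma_{S \triangle S'}$. For ordered disjoint $S_1,S_2,S_3\subseteq[2n]$ with $|S_j|=k_j$, the triple $(\gamma_{S_1}\gamma_{S_2},\gamma_{S_2}\gamma_{S_3},\gamma_{S_3}\gamma_{S_1})$ lies in $\mathrm{span}(\Gamma_{k_1+k_2})\otimes\mathrm{span}(\Gamma_{k_2+k_3})\otimes\mathrm{span}(\Gamma_{k_3+k_1})$, and its sum over all disjoint triples of the given sizes is $\mathrm{Orth}(2n)$-invariant, a direct consequence of the covariance law for Majoranas once the sign factors from the disjointness condition are unpacked. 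The index set is exactly $\{(k_1,k_2,k_3): k_i\geq 0,\ k_1+k_2+k_3\leq 2n\}$, matching the statement.

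The remaining and harder step is uniqueness: showing that no other invariants arise in any $\Gamma_{m_1}\otimes\Gamma_{m_2}\otimes\Gamma_{m_3}$. I would argue this via a Pieri-type branching rule---the tensor product of two exterior powers of the defining representation of $\mathrm{Orth}(2n)$ decomposes multiplicity-freely into further $\Gamma$'s---so that Frobenius reciprocity gives $\dim\mathrm{Hom}_{\mathrm{Orth}(2n)}(\Cbb,\Gamma_{m_1}\otimes\Gamma_{m_2}\otimes\Gamma_{m_3})\in\{0,1\}$, with value $1$ precisely when the linear system $m_1=k_1+k_2$, $m_2=k_2+k_3$, $m_3=k_3+k_1$ admits a nonnegative integer solution with $\sum_i k_i\leq 2n$. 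An alternative route, closer to the computations in Wan et al.~\cite{wan2022matchgate}, is a direct Weingarten-style evaluation of $\int d\mu(Q)\,\Ucal_Q^{\otimes 3}$ in the Majorana basis. Either way, Schur's lemma on the multiplicity-free invariant subspace, combined with the multinomial count $\binom{2n}{k_1,k_2,k_3,2n-k_1-k_2-k_3}$ of disjoint triples fixing the normalization, completes the proof.
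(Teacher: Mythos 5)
The paper does not prove this lemma: it is imported verbatim from Wan et al.~\cite{wan2022matchgate} (Supplementary Note 1 merely ``reviews'' it), so there is no in-paper argument to compare yours against. Judged on its own terms, your sketch is a correct reconstruction of the standard representation-theoretic proof and is in the same spirit as the original: identify $\mathrm{span}(\Gamma_k)$ with the $k$-th exterior power $\bigwedge^k$ of the defining representation of $\mathrm{Orth}(2n)$, use irreducibility, self-duality, and pairwise inequivalence of these $2n+1$ irreps to get the first two moments from Lemma~\ref{lem:schur_lemma}, and for the third moment exhibit the invariant vectors and prove they exhaust the invariant subspace. Your emphasis on needing the \emph{full} orthogonal group (reflections distinguish $\bigwedge^k$ from $\bigwedge^{2n-k}$ via the determinant character and prevent the middle exterior power from splitting) is exactly the right point; over $\mathrm{SO}(2n)$ all three formulas would acquire extra terms.

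Two steps deserve more care than your sketch gives them. First, the uniqueness claim is phrased as ``$\bigwedge^{m_1}\otimes\bigwedge^{m_2}$ decomposes multiplicity-freely into further $\Gamma$'s,'' which is not literally true: that tensor product contains two-column irreps of $\mathrm{Orth}(2n)$, not only exterior powers. What you actually need is the weaker statement $\dim\mathrm{Hom}_{\mathrm{Orth}(2n)}\bigl(\Cbb,\ \bigwedge^{m_1}\otimes\bigwedge^{m_2}\otimes\bigwedge^{m_3}\bigr)\leq 1$, which follows from the first fundamental theorem of invariant theory for $O(2n)$: every invariant is a sum of complete $\delta$-contractions, internal contractions within an antisymmetrized block vanish, and the cross-block contraction numbers $(k_1,k_2,k_3)$ are uniquely determined by $m_1=k_1+k_2$, $m_2=k_2+k_3$, $m_3=k_3+k_1$; the invariant is nonzero precisely when this system has a nonnegative integer solution with $k_1+k_2+k_3\leq 2n$ (for the full group the $\epsilon$-tensor contributes nothing, again because $\det$ is a nontrivial character). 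Second, the invariance of $\supket{\Rcal^{(3)}_{k_1,k_2,k_3}}$ under $\Ucal_Q^{\otimes 3}$ is not just ``unpacking sign factors'': because the sum is restricted to \emph{disjoint} triples while $U_Q^\dagger\gamma_S U_Q=\sum_{S'}\det(Q|_{SS'})\gamma_{S'}$ sums over all $S'$, one needs a generalized Cauchy--Binet identity to see that the disjointness constraint is preserved by the twirl. Both gaps are fillable (and are filled in Wan et al.), but as written they are assertions rather than proofs.
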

It is easy to see that we can further generalize $\supket{\Rcal_k^{(2)}},\supket{\Rcal_{k_1,k_2,k_3}^{(3)}}$ in Eqs.~\eqref{eq:sec_moment} and \eqref{eq:third_moment} to $\supket{\widetilde\Rcal_k^{(2)}},\supket{\widetilde\Rcal_{k_1,k_2,k_3}^{(3)}}$ in the basis of $\widetilde{\gamma}_S$. Specifically,
\begin{align}
    &\supket{\widetilde\Rcal_k^{(2)}} = { 2n\choose k}^{-1/2} \sum_{S\subseteq [2n],|S|=k}\supket{\widetilde\gamma_S}\supket{\widetilde\gamma_S}\\
    &\supket{\widetilde\Rcal_{k_1,k_2,k_3}^{(3)}} = {2n \choose k_1,k_2,k_3,2n-k_1-k_2-k_3}^{-1/2} \sum_{
    \substack{
 S_1, S_2, S_3\subseteq [2n] \ \mathrm{disjoint}\\
 \abs{S_j}=k_j \forall j \in [3]
    }
    } \supket{\widetilde\gamma_{S_1}\widetilde\gamma_{S_2}}
    \supket{\widetilde\gamma_{S_2}\widetilde\gamma_{S_3}}
    \supket{\widetilde\gamma_{S_3}\widetilde\gamma_{S_1}}.
\label{eq:tilde_third_moment}
\end{align}

\section*{Supplementary Note 2}
\label{app:pro_gaussian}
Two states that characterize a fermionic system are the Gaussian state and the Slater determinant. Gaussian states are ground or thermal states of fermionic systems, and all information about the state is encoded in its $2N \times 2N$ Hamiltonian. The Slater determinant describes the wave function of a many-particle (fermionic) system. We give formal definitions of them in the following. 

\textbf{Gaussian state.} The Gaussian state is defined as
\begin{equation}
\rho_g = \prod_{k = 1}^n \frac{(\Ibb - i \mu_k \widetilde{\gamma}_{2k-1} \widetilde{\gamma}_{2k})}{2}
\label{eq:gaussian_state_rep}
\end{equation}
for some coefficients $\mu_k\in [-1,1]$, where $\widetilde{\gamma}_k = U_Q^{\dagger} \gamma_k U_Q$ for some $Q\in \mathrm{Orth}(2n)$. The covariance matrix for a Gaussian state is a $2n\times 2n$ matrix with the $(k,l)$-th element being $-i\tr{[\gamma_k,\gamma_l]\rho_g}$. The computational basis $\ketbra{x}{x}$ is a special Gaussian state with the representation
\begin{equation}
\ketbra{x}{x} = \prod_{k=1}^n \frac{1}{2}(\Ibb - i (-1)^{x_j} \gamma_{2j-1}\gamma_{2j}),
\end{equation}
and the covariance matrix representation $C_{\ket{x}}$ is equal to
\begin{equation}
C_{\ket{x}} = \bigoplus_{i = 1}^n \begin{pmatrix}
0 & (-1)^{x_j}\\
(-1)^{x_j + 1} & 0
\end{pmatrix}.
\end{equation}
The covariance matrix for a general Gaussian state defined in Eq.~\eqref{eq:gaussian_state_rep} is
\begin{align}
C_{\rho_g} = Q \bigoplus_{j = 1}^n \begin{pmatrix}
0 & \mu_j\\
-\mu_j & 0
\end{pmatrix} Q^T.
\end{align}

\textbf{Slater determinant.} The $\tau$-fermion Slater determinant $\ket{\phi_\tau}$ is defined as
$\ket{\phi_\tau}= \tilde{b}_1^\dagger \cdots \tilde{b}_\tau^\dagger \ket{\bm 0}$, and $\tilde{b}_k = \sum_{l=1}^n U_{kl}b_l$ for some unitary $U\in \Cbb^{n\times n}$.

\section*{Supplementary note 3}
\label{app:B_k_bound}
Recall that the fermionic shadow scheme relies on a twirl of the full space that divides into $2n+1$ even subspaces $\Gamma_{2k}$ for $k \in \cbra{0, ..., n}$.
In this section, we will investigate the range of the average fidelity parameters $\Bcal_k$ under different noise scenarios, where $k\geq 1$. To start, we will prove that in (0) noise-free case, the average noise fidelities $\Bcal_k$ equals one for any $k$.

The noise channels under investigation, which are also discussed in the main text, represent typical noise models applicable to various architectures, containing (1) depolarizing noise, a channel that gives the probability of returning the state itself or the maximally mixed state; (2) amplitude-damping noise, a model of physical processes such as spontaneous emission; (3) $X$-rotation noise, which describes the rotation offset around the $x$-axis; and (4) Gaussian unitary noise, which is common in continuous variable systems and has the form of a normal distribution~\cite{WZ+21, Weedbrook+2011}.

Since $\Lambda$ is a 
completely positive and trace-preserving
map, we can write
\begin{align}
\Lambda(\cdot) = \sum_{l} E_l (\cdot ) E_l^{\dagger}
\end{align}
in its Kraus decomposition such that $\sum_{l} E_l^{\dagger}E_l = \Ibb$. In the following, we analyze the range of $\Bcal_k$ for the noise-free model and different common noise models.
\begin{itemize}
    \item [(0)] By the definition in Eq. (5) in the main file, for a noise-free model where $\Lambda$ implements the identity transformation $\Ical$, 
\begin{align}
\Bcal_k &= 2^{-n} (-i)^k {n \choose k}^{-1} \sum_{x} \sum_{S\in {[n]\choose k}} (-1)^{\sum_{j\in S} x_j} (-i)^{k} (-1)^{\sum_{j\in S}x_j} \tr{\gamma_{D(S)}^2}2^{-n}\\
&= 1.
\end{align}
\item[(1)] For the depolarizing noise with channel representation $\Lambda_{\mathrm d}(A) = (1-p)A + p\tr{A}\frac{\Ibb}{2^n}$ for any linear operator $A$, where $p\in [0,1]$ is the error probability, we have
\begin{align}
\Bcal_k &= \frac{(-i)^k}{2^n{n\choose k}} \sum_x \sum_{S\in {[n]\choose k}}(-1)^{\sum_{j\in S}x_j}(1-p)(-i)^k(-1)^{\sum_{j\in S} x_j}(-1)^k\\
&= 1-p.
\end{align}
\item[(2)] The {general} amplitude-damping noise can be denoted as the following equation with Kraus decomposition representation
\begin{align}
\Lambda_{\mathrm a}(A) = \sum_{
\substack{u, v\in \cbra{0,1}^n\\
u\ne v}
} E_{uv} A E_{uv}^{\dagger} + E_0 A E_0^{\dagger},
\end{align}
where $E_{uv} = \sqrt{p_{uv}}\ketbra{v}{u}$, and {$E_0 = \sqrt{\Ibb - \sum_{
\substack{u,v\in \cbra{0,1}^n\\
u\ne v}
}E_{uv}^{\dagger}E_{uv}}$}, and the probabilities satisfy $\sum_{v\in \cbra{0,1}^n\backslash \cbra{u}}p_{uv}\leq 1$. 
Here we only analyze the bound of $\Bcal_k$ for the case that all of $p_{uv}$ are the same for $v\in \cbra{0,1}^n\backslash \cbra{u}$, and let it be $\bar{p}_u$ since usually $p_{uv}$ are close to each other for a given $v$. Then the probability that an error happens with output $\ket{u}$ equals $\sum_v p_{vu} = \sum_v \bar{p}_v \in [0,1]$. 
Typically, the probability of this error is nearly negligible and is bounded by a constant in an actual quantum device.
We leave the analysis of the general amplitude-damping noise to interested readers. Note that
\begin{align}
\sum_{
\substack{u, v\in \cbra{0,1}^n\\
u\ne v}
} E_{uv}^{\dagger} \ketbra{x}{x} E_{uv} + E_0^{\dagger} \ketbra{x}{x} E_0 &= p_{xx}\ketbra{x}{x} + \sum_{x'\in \cbra{0,1}^n\backslash \cbra{x}}p_{x'x}\ketbra{x'}{x'}\\
&=(1 - 2^n \bar p_{x})\ketbra{x}{x} + \bar p_{x}\sum_{x'\in\cbra{0,1}^n} \ketbra{x'}{x'}.
\label{eq:amplitude_damp_cal}
\end{align}
With the definitions of $\Bcal_k$ and $\Lambda_{\mathrm a}$ we have
\begin{align}
\Bcal_k &= \frac{(-i)^k}{2^n{n \choose k} }\sum_{x,S\in {[n]\choose k}}(-1)^{\sum_{j\in S}x_j}\pbra{\sum_{
\substack{u, v\in \cbra{0,1}^n\\
u\ne v}
} \tr{E_{uv}^{\dagger} \ketbra{x}{x} E_{uv} \gamma_{D(S)}} + \tr{E_0^{\dagger} \ketbra{x}{x} E_0\gamma_{D(S)}}}\\
&= \frac{(-i)^k}{2^n{n \choose k} }\sum_{x,S\in {[n]\choose k}}
(-1)^{\sum_{j\in S}x_j}
\pbra{(1-2^n\bar p_x) \tr{\ketbra{x}{x}\gamma_{D(S)}} + 
\bar p_{x}\sum_{x'\in\cbra{0,1}^n}  \tr{\ketbra{x'}{x'}\gamma_{D(S)}}
}
\label{eq:amplitude_damp_res}
\\
&= \frac{\sum_x(1-2^n\bar p_x)}{2^n} + \frac{(-i)^k}{2^n{n \choose k} }\sum_{x,S\in {[n]\choose k}}
(-1)^{\sum_{j\in S}x_j}\bar p_x \sum_{x'\in\cbra{0,1}^n}  (-i)^k (-1)^{\sum_{j\in S}x_j'}(-1)^k\\
&=1-\sum_x\bar p_x,
\end{align}
if $k\ne 0$, and $\Bcal_k = 1$ if $k =0$. Eq.~\eqref{eq:amplitude_damp_res} holds by the result in Eq.~\eqref{eq:amplitude_damp_cal}.
Since usually the noise strength $\sum_x \bar{p}_x$ in the device is close to zero, $\Bcal_k$ is close to one.
\item[(3)] For $X$-rotation noise with channel representation
\begin{align}
\Lambda_{\mathrm r}(\cdot) = R_{X}(\bm \theta)(\cdot)R_{X}(-\bm \theta)
\end{align}
 where $R_{X}(\bm \theta) = \exp(-i\sum_{l=1}^n \theta_l X_l/2)$ and $\theta_l\in [0,2\pi)$, we have
 \begin{align}
\Bcal_k = &\frac{(-i)^k}{2^n{n \choose k}} \sum_{S\in {[n]\choose k}} \sum_{x} (-1)^{\sum_{l\in S}x_l} \tr{\ketbra{x}{x}\Lambda_{\mathrm r}(\gamma_{D(S)})} \\
=&\frac{(-i)^k}{2^n{n \choose k}} \sum_{S\in {[n]\choose k}} \sum_{x} (-1)^{\sum_{l\in S}x_l} \tr{R_X(-{\bm \theta})\ketbra{x}{x}R_X({\bm \theta})\gamma_{D(S)}} \\
=&\frac{(-i)^k}{2^n{n \choose k}} \sum_{S\in {[n]\choose k}} \sum_x\prod_{l\in S}i(-1)^{x_l}\pbra{ \cos^2 \frac{\theta_l}{2} - \sin^2 \frac{\theta_l}{2}
}(-1)^{x_l}
\label{eq:B_k_rotation}\\
=& {n\choose k}^{-1}\sum_{S\in {[n]\choose k}}\prod_{l\in S}\cos\theta_l,
\end{align}
where Eq.~\eqref{eq:B_k_rotation} holds since $R_X(-\bm \theta)\ketbra{x}{x}R_X(\bm \theta)$ has the expansion form
\begin{equation}
R_X(-\bm \theta)\ketbra{x}{x}R_X(\bm \theta) = \bigotimes_{l=1}^n \begin{pmatrix}
\cos^2 \frac{\theta_l}{2} & -i \sin \frac{\theta_l}{2} \cos \frac{\theta_l}{2}\\
i \sin \frac{\theta_l}{2} \cos \frac{\theta_l}{2} & \sin^2 \frac{\theta_l}{2}
\end{pmatrix}
\end{equation}
with basis $\cbra{\ket{x_l}, \ket{1-x_l}}$, and $\gamma_{D(S)} = \prod_{l \in S} (iZ_l)$, where $Z$ is the Pauli-$Z$ gate. 
\item[(4)] For the noise with the representation being a Gaussian unitary channel $\Ucal_{Q}$, we have
\begin{align}
\Bcal_{k} 
&= \frac{(-i)^k}{2^n{n \choose k}} \sum_{S\in {[n]\choose k}} \sum_{x} (-1)^{\sum_{l\in S}x_l} \tr{\ketbra{x}{x}U_Q\gamma_{D(S)}U_Q^{\dagger}}\\
&=\frac{(-i)^k}{2^n{n \choose k}} \sum_{S,S'\in {[n]\choose k}} \sum_{x}  (-i)^k\det(Q|_{D(S'),D(S)}) (-1)^k\\
&= \frac{1}{{n \choose k}}\sum_{S,S'\in {[n]\choose k}}\det(Q|_{D(S'),D(S)}).
\end{align}
Below, we provide an illustration where selecting $Q$ from a signed permutation matrix demonstrates that $\Bcal_k$ has the potential to become zero.
Note that $Q$ is a signed permutation and hence det$(Q|_{D(S'),D(S)}) \ne 0$ iff $Q|_{D(S'),D(S)}$ maps $D(S')$ to $D(S)$.
Let the vector permutation representation of it be $\sigma_Q$, where $\sigma_{Qi}=jQ_{ij}$ if $Q_{ij}\ne 0$.
 Since $D(j)=(2j-1,2j)$, $\det(Q|_{D(S),D(S')}) \ne 0$ iff $\sigma_{Q}|_{D(S)}$ is a signed permutation of $D(S')$. Let $P\subset [k]$ be the largest set such that $\sigma_Q|_{D(P)}$ is a permutation for some $D(P')$ where size$(P')=\text{size}(P) =: r$. Consequently, $\Bcal_k = 0$ if $r<k<n$. 
For instance, take $\sigma_Q = (3, 4, 1, 5, 6, 8, 7, 2)$ with $n=4$. This leads to $\Bcal_k=0$ when $k\in \cbra{2, 3}$, since $r=1$ for both of these cases.

\end{itemize}

\section*{Supplementary note 4}
\label{app:property_noisyCFS}

Here we give proofs associated with the properties of the noisy classical fermionic channel.
\begin{proof}[Proof of Theorem~2 in the main file]
The noisy channel $\widetilde{\Mcal}$ can be represented as
\begin{equation}
\widetilde\Mcal = \int_Q d \mu(Q)\Ucal_{Q}^\dagger M_z \Lambda \Ucal_{Q} = \sum_{k = 0}^{n}f_{2k}\Pcal_{2k},
\end{equation}
since $\Pcal_{2k}$ forms $n+1$ irreducible representations for the unitary channel $\Ucal_{Q}$ (See details for the irreducible representations of a group in Supplementary Note 1). Then we have
\begin{equation}
\int_Q d \mu(Q) \Ebb_x \sbra{\supbraket{\bm 0|\Pcal_{2k} \Ucal_{Q}^\dagger | x}}
= \supbraket{\bm 0| \Pcal_{2k} \Mcal|\bm 0}
 = f_{2k} \supbraket{\bm 0|\Pcal_{2k} |\bm 0}.
\end{equation}
By using Lemma \ref{lem:p_2k_0_fidelity}, we can obtain $\supbraket{\bm 0|\Pcal_{2k} |\bm 0} = 2^{-n}{{n}\choose{k}}$. Combining this result with the previous findings leads us to
\begin{align}
   \Ebb[\hat{f}_{2k}] &= {2^{n}}{{n\choose k}}^{-1} \int_Q d \mu(Q) \Ebb_x \sbra{\supbraket{\bm 0|\Pcal_{2k} \Ucal_{Q}^\dagger | x}}\\
   &= {2^{n}}{{n\choose k}}^{-1} f_{2k} \supbraket{\bm 0|\Pcal_{2k} |\bm 0}\\
   &=f_{2k}.
\end{align}
\end{proof}

Below, we present proof of Lemma 1 in the main file.

\begin{proof}[Proof of Lemma 1 in the main file]
Recall that $\hat{f}_{2k} = 2^n {n \choose k}^{-1}\supbraket{\bm 0|\Pcal_{2k}\Ucal_{Q}^\dagger| x}$. Hence, by Lemma~\ref{lem:cal_maj_exp} (It can also be obtained by Proposition 1 in Ref.~\cite{wan2022matchgate}), we have
$\supbraket{\bm 0|\Pcal_{2k}\Ucal_{Q}^\dagger| x}$ is the coefficient of $x^k$ in the polynomial 
\begin{align}
p_{Q} (x)=\frac{1}{2^n} \pf\pbra{C_{\bm 0}}\pf\pbra{-C_{\bm 0}^{-1} + x(QC_{\ket{x}}Q^T)}.
\label{eq:polynomial_gaussian}
\end{align}
Hence $\hat{f}_{2k}$ is the coefficient of $x^k$ in the polynomial $2^n {n\choose k}^{-1} p_{Q}(x) = {n \choose k}^{-1}\pf\pbra{C_{\bm 0}}\pf\pbra{-C_{\bm 0}^{-1} + x(QC_{\ket{x}}Q^T)}$.
The polynomial in Eq.~\eqref{eq:polynomial_gaussian} can be calculated in $\Ocal(r^4)$ time since it has degree at most $r$, and the Pfaffian function of a $2n \times 2n$ matrix can be calculated in $\Ocal(n^3)$ time. It can be further optimized to $\Ocal(r^3)$ time by Appendix D of Ref. \cite{wan2022matchgate}, where $r$ is the rank of the covariance matrix.
\end{proof}

The variance can be bounded for the estimation with the bound of ${f_{2k}}$ and $\Ebb\sbra{f_{2k}^2}$.
In the following, we give proofs associated with the calculation of $\hat{f}_{2k}$. Since $\int_Q d \mu(Q)\Ebb_x\sbra{\supbraket{\bm 0|\Pcal_{2k}\Ucal_Q^\dagger | x}} = f_{2k} \supbraket{\bm 0|\Pcal_{2k}|\bm 0}$, we give the closed form for $\supbraket{\bm 0|\Pcal_{2k}|\bm 0}$ with the following lemma.
\begin{lemma}
\begin{align}
 \supbraket{\bm 0|\Pcal_{2k} |\bm 0} = 2^{-n}
\binom{n}{k}.
\label{eq:p_2k_0_fidelity}
 \end{align} 
\label{lem:p_2k_0_fidelity}
\end{lemma}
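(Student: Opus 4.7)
The plan is to unfold the projector using its defining expression from Eq.~(1),
\begin{equation*}
\supbraket{\bm 0|\Pcal_{2k}|\bm 0}
= \sum_{S\in \binom{[2n]}{2k}} \supbraket{\bm 0|\gamma_S}\supbraket{\gamma_S|\bm 0}
= \sum_{S\in \binom{[2n]}{2k}} \abs{\supbraket{\gamma_S|\bm 0}}^{2}.
\end{equation*}
Here $\supket{\bm 0}$ means the superoperator representation of $\ketbra{\bm 0}{\bm 0}$, which is simply $\ketbra{\bm 0}{\bm 0}$ since $\tr{(\ketbra{\bm 0}{\bm 0})^2}=1$. Using $\gamma_S\gamma_S^{\dagger}=\Ibb$ (which follows by repeated cancellation $\gamma_j^2=\Ibb$ after reversing the order), one has $\tr{\gamma_S\gamma_S^{\dagger}}=2^n$, so each inner product collapses to $\supbraket{\gamma_S|\bm 0}=\bra{\bm 0}\gamma_S^{\dagger}\ket{\bm 0}/\sqrt{2^n}$, and only $\abs{\bra{\bm 0}\gamma_S\ket{\bm 0}}^2$ matters.

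Next I would identify exactly which $S$ contribute. The state $\ketbra{\bm 0}{\bm 0}$ is Gaussian with block-diagonal covariance matrix $C_{\ket{\bm 0}}=\bigoplus_{j=1}^n\bigl(\begin{smallmatrix}0&1\\-1&0\end{smallmatrix}\bigr)$, supported entirely on the pair blocks $\{2j-1,2j\}$. By Wick's theorem for fermionic Gaussian states, $\bra{\bm 0}\gamma_S\ket{\bm 0}$ is (up to a fixed phase) the Pfaffian of $C_{\ket{\bm 0}}|_S$, which vanishes whenever $S$ does not respect the block structure. Equivalently one can check this directly from $\gamma_{2j-1}=b_j+b_j^{\dagger}$, $\gamma_{2j}=-i(b_j-b_j^{\dagger})$: unless the Majorana operators at mode $j$ appear in even, paired fashion, the number of creation/annihilation operators at mode $j$ is unbalanced and the vacuum expectation vanishes. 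Thus $\bra{\bm 0}\gamma_S\ket{\bm 0}$ can be nonzero only when $S=D(T):=\{2j-1,2j:j\in T\}$ for some $T\in\binom{[n]}{k}$.

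For such $S=D(T)$, a direct computation using $(b_j+b_j^{\dagger})(b_j-b_j^{\dagger})\ket{\bm 0}=-b_j b_j^{\dagger}\ket{\bm 0}=-\ket{\bm 0}$ yields $\gamma_{2j-1}\gamma_{2j}\ket{\bm 0}=i\ket{\bm 0}$. Since the pair operators $\gamma_{2j-1}\gamma_{2j}$ for different $j$ commute and act on disjoint modes, $\gamma_{D(T)}\ket{\bm 0}=i^k\ket{\bm 0}$, giving $\abs{\bra{\bm 0}\gamma_{D(T)}\ket{\bm 0}}^2=1$ and hence $\abs{\supbraket{\gamma_{D(T)}|\bm 0}}^2=2^{-n}$. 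Summing over the $\binom{n}{k}$ choices of $T$ produces $\supbraket{\bm 0|\Pcal_{2k}|\bm 0}=2^{-n}\binom{n}{k}$, as desired. No step is delicate; the only point that requires care is the vanishing in the second paragraph, which is really just the statement that the vacuum covariance matrix is block-diagonal on consecutive pairs.
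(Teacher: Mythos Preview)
Your proof is correct and follows essentially the same route as the paper: expand the projector as $\sum_{S}\abs{\supbraket{\gamma_S|\bm 0}}^{2}$, observe that only the ``paired'' sets $S=D(T)$ with $T\in\binom{[n]}{k}$ contribute, and evaluate each such term as $2^{-n}$. The only cosmetic difference is that the paper reaches the same conclusion by expanding $\ketbra{\bm 0}{\bm 0}=\prod_{j}\tfrac{1}{2}(\Ibb-i\gamma_{2j-1}\gamma_{2j})$ in the Majorana basis and using trace orthogonality of the $\gamma_S$, whereas you compute $\bra{\bm 0}\gamma_S\ket{\bm 0}$ directly via the action of Majorana operators on the vacuum (equivalently Wick's theorem with the block-diagonal covariance matrix).
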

\begin{proof}
Recall that $\ketbra{\bm 0}{\bm 0} = \prod_{j=1}^n \frac{(\Ibb - i\gamma_{2j-1}\gamma_{2j})}{2}$. Hence, by substituting it into Eq.~\eqref{eq:p_2k_0_fidelity}, we obtain
\begin{align}
\supbraket{\bm 0|\Pcal_{2k} |\bm 0}&= \sum_{S\in {[2n]\choose 2k}}\abs{\tr{\ketbra{\bm 0}{\bm 0} \gamma_S}}^2\\
&= \sum_{S\in {[2n]\choose 2k}} \sum_{j = 0}^n \sum_{T\in {[n]\choose k} }\abs{\frac{(-i)^k}{4^n} \tr{\gamma_{D(T)}\gamma_S}}^2\\
&= \sum_{T\in {[n]\choose k}} \frac{1}{2^n} \\
&= \frac{{n\choose k}}{2^n},
\end{align} 
where $D(S)=\cbra{(2j-1,2j)|j\in S}$.
\end{proof}
As a generalization of Proposition 2 of Chen et al.~\cite{chen2021robust}, we give the bounds for $f_{2k}$ and $\Ebb\sbra{\hat f_{2k}^2}$, with prior knowledge of the average noise fidelity $\Bcal_k$ in the following two lemmas.
\begin{lemma}
\begin{align}
f_{2k} =  {2n \choose 2k}^{-1}
{n \choose k} \Bcal_k.
\end{align}
 \label{lem:abs_f_2k}
\end{lemma}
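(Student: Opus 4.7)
The plan is to compute $f_{2k}$ directly from the integral representation of $\widetilde{\Mcal}$ and match the resulting expression to the definition of $\Bcal_k$ given in Eq.~(5). Recall that by Schur's lemma (as applied in Supplementary Note~1) and the fact that the matchgate twirl has $\{\Pcal_j\}_{j=0}^{2n}$ as its minimal projectors, we have
\begin{equation*}
\widetilde{\Mcal} \;=\; \int d\mu(Q)\,\Ucal_Q^\dagger\, \Xcal\, \Lambda\, \Ucal_Q \;=\; \sum_{k=0}^{n} f_{2k}\,\Pcal_{2k},
\end{equation*}
where the odd-$k$ coefficients vanish because $\Xcal$ outputs diagonal operators, whereas $\gamma_S$ with $\abs{S}$ odd is off-diagonal in the computational basis. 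To isolate $f_{2k}$, I would take the (superoperator) trace against $\Pcal_{2k}$ and divide by $\tr{\Pcal_{2k}}=\binom{2n}{2k}$. Using the invariance of the Haar measure together with $\Ucal_Q \Pcal_{2k}\Ucal_Q^\dagger = \Pcal_{2k}$ (since matchgates preserve each degree-subspace $\Gamma_{2k}$), the integral collapses to give
\begin{equation*}
f_{2k} \;=\; \frac{1}{\binom{2n}{2k}}\,\tr{\Pcal_{2k}\, \Xcal\,\Lambda} \;=\; \frac{1}{\binom{2n}{2k}\,2^n}\sum_{S\in\binom{[2n]}{2k}}\sum_{x}\supbraket{x|\gamma_S^\dagger|x}\,\supbraket{x|\Lambda(\gamma_S)|x}.
\end{equation*}

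The next key step is to notice that $\braket{x|\gamma_S|x}=0$ unless $S$ pairs up into modes of the form $D(T)=\{2j-1,2j : j\in T\}$ for some $T\subseteq[n]$ with $\abs{T}=k$. Indeed, any ``unpaired'' Majorana $\gamma_{2j-1}$ or $\gamma_{2j}$ appearing alone at site $j$ flips the occupation number and forces the diagonal matrix element to vanish. For $S=D(T)$, using the identity $\gamma_{2j-1}\gamma_{2j} = i(1-2n_j)$ together with the fact that the factors over distinct $j\in T$ commute, I get $\braket{x|\gamma_{D(T)}|x}=i^k(-1)^{x_T}$ where $x_T=\sum_{j\in T}x_j$. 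Likewise, since $(\gamma_{2j-1}\gamma_{2j})^\dagger=-\gamma_{2j-1}\gamma_{2j}$, one has $\gamma_{D(T)}^\dagger=(-1)^k\gamma_{D(T)}$, so $\braket{x|\gamma_{D(T)}^\dagger|x}=(-i)^k(-1)^{x_T}$.

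Substituting these diagonal matrix elements back in, only the $\binom{n}{k}$ paired sets contribute, leaving
\begin{equation*}
f_{2k} \;=\; \frac{1}{\binom{2n}{2k}}\cdot \frac{(-i)^k}{2^n}\sum_{T\in\binom{[n]}{k}}\sum_x (-1)^{x_T}\,\tr{\ketbra{x}{x}\Lambda(\gamma_{D(T)})}.
\end{equation*}
Comparing the right-hand side to the definition of $\Bcal_k$ in Eq.~(5) and factoring out $\binom{n}{k}$ yields $f_{2k}=\binom{2n}{2k}^{-1}\binom{n}{k}\Bcal_k$, as claimed. The main conceptual hurdle is the diagonal-support argument in the second step: seeing that among the $\binom{2n}{2k}$ basis elements of $\Gamma_{2k}$ only the $\binom{n}{k}$ ``pair-covered'' operators $\gamma_{D(T)}$ survive the $Z$-basis sandwich $\braket{x|\cdot|x}$ on the left factor. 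Once this reduction is made, the remaining steps are routine algebra and the identification with $\Bcal_k$ is essentially by inspection.
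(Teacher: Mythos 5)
Your proof is correct, but it takes a genuinely different route from the paper's. The paper proves this lemma by writing $f_{2k}=\Ebb[\hat f_{2k}]$, expanding the estimator built from the calibration state $\supket{\bm 0}$, and evaluating the Haar average via the \emph{second}-moment formula $\int_Q d\mu(Q)\,\Ucal_Q^{\otimes 2}=\sum_j \supket{\Rcal_j^{(2)}}\supbra{\Rcal_j^{(2)}}$ from Lemma 2 of Supplementary Note 1, together with the overlaps $\supbra{\Rcal_j^{(2)}}\supket{\bm 0}^{\otimes 2}$ and $\pbra{\supbra{x}\Pcal_{2k}\otimes\supbra{x}\Lambda}\supket{\Rcal_j^{(2)}}$. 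You instead bypass the estimator entirely and read $f_{2k}$ off the channel itself as $\tr{\Pcal_{2k}\widetilde{\Mcal}}/\tr{\Pcal_{2k}}$, using orthogonality of the projectors, cyclicity, and the invariance $\Ucal_Q\Pcal_{2k}\Ucal_Q^{\dagger}=\Pcal_{2k}$ (which holds because the compound matrix of an orthogonal $Q$ is orthogonal) to collapse the Haar integral before any moment formula is needed; the remaining diagonal-support argument, that only the $\binom{n}{k}$ pair-covered operators $\gamma_{D(T)}$ survive the $Z$-basis sandwich and contribute $\braket{x|\gamma_{D(T)}^{\dagger}|x}=(-i)^{k}(-1)^{x_T}$, is essentially the same algebra the paper performs inside Eq.~(S65). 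Your version is more elementary and makes the structure of the answer transparent ($f_{2k}$ is the normalized diagonal overlap of $\Lambda$ with $\Gamma_{2k}$), while the paper's version reuses machinery that is needed anyway for the variance bound $\Ebb[\hat f_{2k}^2]$ and simultaneously exhibits $f_{2k}$ as the expectation of the concrete calibration estimator. Two small notational points: your intermediate expression should use ordinary matrix elements $\braket{x|\gamma_S^{\dagger}|x}$ and $\braket{x|\Lambda(\gamma_S)|x}$ rather than the double-bracket notation, since you have already pulled out the $2^{-n}$ normalization; and the vanishing of the odd-weight coefficients is worth stating exactly as you did, since Schur's lemma a priori produces all $2n+1$ projectors $\Pcal_j$.
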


\begin{proof}
Since $f_{2k} = \Ebb\sbra{\hat f_{2k}}$, to find the value of $f_{2k}$, we can evaluate the expansion of $\hat{f}_{2k}$ as follows, 
\begin{align}
\Ebb\sbra{\hat{f}_{2k}} &= {2^{n}}{n \choose k}^{-1}\sum_{x}\int_Q d \mu(Q)\supbraket{\bm 0|\Pcal_{2k}\Ucal_{Q}^{\dagger} | x} \supbraket{x | \Lambda \Ucal_{Q} |\bm 0}  \\
&={2^{n}}{n \choose k}^{-1}\sum_{x} \int_Q d \mu(Q) \sbra{
\tr{\Ucal_{Q}^{\otimes 2} \supket{\bm 0}^{\otimes 2}\supbra{x}\Pcal_{2k}\supbra{x}\Lambda}
}
\label{eq:gamma_inv}
\\
&= {2^{n}}{n \choose k}^{-1} \sum_{x}\sum_{j=0}^{2n}\tr{\supket{\Rcal_{j}^{(2)}} \supbra{\Rcal_{j}^{(2)}}\cdot \supket{\bm 0}^{\otimes 2}\supbra{x} \Pcal_{2k}\supbra{x}\Lambda 
}
\label{eq:upper_bound_f}
\\
&={2^{n}}{n \choose k}^{-1} \sum_{x,0\leq j\leq 2n}
\supbra{\Rcal_{j}^{(2)}} \supket{\bm 0}^{\otimes 2} 
\pbra{\supbra{x}\Pcal_{2k} \otimes  \supbra{x} \Lambda}\supket{\Rcal_{j}^{(2)}},
\end{align}
where $\supket{\Rcal_{j}^{(2)}} = {2n \choose j}^{-1/2} \sum_{S\subseteq [2n],|S|=j} \supket{\gamma_S} \supket{\gamma_S}$. Eq.~\eqref{eq:gamma_inv} holds since $\supbraket{\bm 0|\Pcal_{2k}\Ucal_{Q}^{\dagger} | x} =  \supbraket{x|\Ucal_{Q} \Pcal_{2k}|\bm 0} = \supbraket{x|\Pcal_{2k}\Ucal_{Q} |\bm 0}$. Eq.~\eqref{eq:upper_bound_f} holds by Lemma \ref{lem:threemomentsMatchgate}. By the definition of $\Rcal_{j}^{(2)}$, we have
\begin{align}
\supbra{\Rcal_{j}^{(2)}} \supket{\bm 0}^{\otimes 2} &=  {2n \choose j}^{-1/2} \sum_{S\subseteq [2n],|S|=j} \supbraket{\gamma_S|\bm 0}^2\\
   &={2n \choose j}^{-1/2} {n \choose j/2} [j\text{ even}]\frac{(-1)^{j/2}}{2^n}   
\label{eq:upper_bound_f_fa}
\end{align}
and
\begin{align}
{(\supbra{x}\Pcal_{2k} \otimes  \supbra{x}\Lambda) \supket{\Rcal_{j}^{(2)}}} &= {2n \choose j}^{-1/2} \sum_{S\subseteq [2n],|S|=j} {\supbraket{x|\Pcal_{2k}|\gamma_S}
\supbraket{x|\Lambda |\gamma_S}}\\
&=[j=2k]{2n \choose 2k}^{-1/2} \sum_{S\in {[n]\choose k}}
\frac{i^k(-1)^{\sum_{l\in S}x_l}}{\sqrt{2^n}}
\supbraket{x|\Lambda |\gamma_{D(S)}}
\label{eq:upper_bound_f_fb}
\end{align}
where $D(S)=\cbra{2j-1, 2j|j\in S}.$
By substituting Eq.~\eqref{eq:upper_bound_f_fa}, and Eq.~\eqref{eq:upper_bound_f_fb} into Eq.~\eqref{eq:upper_bound_f}, we have
\begin{align}
f_{2k}& = (-i)^k{ 2n\choose 2k}^{-1} \sum_{s \in {[n]\choose k}} \sum_{x} (-1)^{\sum_{l\in S}x_l}\tr{\ketbra{x}{x}\Lambda(\gamma_{D(S)})}2^{-n}\\
&= { 2n\choose 2k}^{-1} {n \choose k} \Bcal_k,
\end{align}
where $\Bcal_k$ is defined in Eq. (5) in the main file.
\end{proof}

For the noise-free model, $f_{2k} =  {2n \choose 2k}^{-1}
{n \choose k}$.
We can use a similar technique to show that $\Ebb\sbra{\hat{f}_{2k}^{2}}$ is also bounded, as demonstrated in the following lemma.
\begin{lemma}
Let $\hat{f}_{2k} = {2^{n}}\supbraket{\bm 0|\Pcal_{2k} \Ucal_{Q}^\dagger | x}/{{n\choose k}}$ be an estimation of $f_{2k}$, then
\begin{align}
\Ebb\sbra{\abs{\hat{f}_{2k}}^2} ={n \choose k}^{-2} \sum_{0\leq l \leq k} {2n \choose 2l, 2k - 2l, 2l}_p^{-1} {n \choose l, k - l, l}_p^2 \Bcal_{2l}.
\end{align}
\label{lem:exp_fsquare_upper}
\end{lemma}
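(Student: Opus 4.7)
The plan is to parallel the proof of Lemma~\ref{lem:abs_f_2k}, but invoke the \emph{third} moment of the matchgate group (Lemma~\ref{lem:threemomentsMatchgate}) rather than the second, since squaring $\hat{f}_{2k}$ introduces two extra factors of $\Ucal_Q$ on top of the one already present in the measurement distribution. Using the fact that $\supbraket{\bm 0|\Pcal_{2k}\Ucal_Q^\dagger|x}$ is real (which follows from $\Ucal_Q$ being real in the Majorana basis and $\ketbra{\bm 0}{\bm 0}$, $\ketbra{x}{x}$ admitting Hermitian expansions in $\{i^{|T|}\gamma_{D(T)}\}$), I would rewrite it as $\supbraket{x|\Ucal_Q\Pcal_{2k}|\bm 0}$ and express
\begin{equation*}
\Ebb\sbra{|\hat{f}_{2k}|^2}=\frac{4^n}{\binom{n}{k}^2}\sum_{x}\bigl[(\supbra{x}\Lambda)\otimes\supbra{x}\otimes\supbra{x}\bigr]\int d\mu(Q)\,\Ucal_Q^{\otimes 3}\bigl[\supket{\bm 0}\otimes\Pcal_{2k}\supket{\bm 0}\otimes\Pcal_{2k}\supket{\bm 0}\bigr],
\end{equation*}
then apply Lemma~\ref{lem:threemomentsMatchgate} to replace the Haar integral by $\sum_{k_1,k_2,k_3}\supket{\Rcal^{(3)}_{k_1,k_2,k_3}}\supbra{\Rcal^{(3)}_{k_1,k_2,k_3}}$.

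The second step—and the main obstacle—is to identify which triples $(k_1,k_2,k_3)$ give a nonzero contribution. Since $\ketbra{\bm 0}{\bm 0}$ is a Gaussian state, $\supbraket{\gamma_S|\bm 0}$ vanishes unless $S=D(T)$ for some $T\subseteq[n]$, and the two $\Pcal_{2k}\supket{\bm 0}$ slots force $|S_2\cup S_3|=|S_3\cup S_1|=2k$. Together with disjointness of $S_1,S_2,S_3$, this yields $k_1=k_2=:m$ and $k_3=2k-m$, while the first slot requires $m$ even; write $m=2l$ with $0\le l\le k$. A pair-by-pair accounting argument (for each $j\in[n]$, asking whether $\{2j-1,2j\}$ lies wholly in a single $S_i$, is split between two $S_i$'s, or is absent from $S_1\cup S_2\cup S_3$) then shows that the combination of the three $D(\cdot)$-constraints rules out all genuinely split pairs, so the only surviving configurations are $S_i=D(A_i)$ with disjoint $A_1,A_2,A_3\subseteq[n]$ of sizes $(l,l,k-l)$.

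With the configurations identified, the right-hand vector evaluates cleanly using $\supbraket{\gamma_{D(T)}|\bm 0}=(-i)^{|T|}/\sqrt{2^n}$, producing an overall sign $(-1)^{l+k}$, a count of $\binom{n}{l,l,k-l}_p$ valid triples, and the normalization $\binom{2n}{2l,2l,2k-2l}^{-1/2}$ from $\supket{\Rcal^{(3)}_{2l,2l,2k-2l}}$. The left-hand vector factorizes similarly: the two $\supbra{x}$-slots contribute $i^{2k}(-1)^{x_{A_2\cup A_3}+x_{A_3\cup A_1}}/2^n=(-1)^k(-1)^{x_{A_1\cup A_2}}/2^n$, using $\gamma_{D(T)}=i^{|T|}Z_T$ and the disjointness of the $A_i$'s, while the noisy slot carries $\tr{\ketbra{x}{x}\Lambda(\gamma_{D(A_1\cup A_2)})}/\sqrt{2^n}$. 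Finally, summing over $x$ and over the $\binom{2l}{l}$ ways to split $A_1\cup A_2$ together with the $\binom{n-2l}{k-l}$ choices of $A_3$, the identity $\binom{n}{2l}\binom{2l}{l}\binom{n-2l}{k-l}=\binom{n}{l,l,k-l}_p$ produces the squared multinomial, and the remaining $x$-sum evaluates to $(-1)^l 2^n\binom{n}{2l}\Bcal_{2l}$ by the definition of $\Bcal_{2l}$ in Eq.~(5) of the main text. Collecting signs and powers of two then yields the claimed identity.
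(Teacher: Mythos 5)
Your proposal is correct and follows essentially the same route as the paper's proof: establish realness of $\hat{f}_{2k}$, lift to $\Ucal_Q^{\otimes 3}$ and apply the third-moment formula of Lemma~\ref{lem:threemomentsMatchgate}, use the Gaussian/pair structure of $\supket{\bm 0}$ and $\supket{x}$ to force $S_i=D(A_i)$ with sizes $(2l,2l,2k-2l)$, and collect the multinomial factors via $\binom{n}{2l}\binom{2l}{l}\binom{n-2l}{k-l}=\binom{n}{l,l,k-l}_p$ and the definition of $\Bcal_{2l}$. The only cosmetic difference is that you place the noisy slot on the first tensor factor (so $k_1=k_2=2l$) while the paper places it on the third (so $k_1=k_3=2l$); by the cyclic structure of $\supket{\Rcal^{(3)}_{k_1,k_2,k_3}}$ this is the same computation, and your sign and power-of-two bookkeeping checks out.
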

\begin{proof}
Firstly, we note that 
$\supbraket{\bm 0|\Pcal_{2k} \Ucal_{Q}^\dagger | x}^\ast = \sum_{S\in {[2n] \choose 2k}}\supbraket{\bm 0|\gamma_S^\dagger}\supbraket{\gamma_S^\dagger |\Ucal_{Q}^\dagger | x} = \supbraket{\bm 0|\Pcal_{2k} \Ucal_{Q}^\dagger | x}$ is a real number. Hence the estimation $\hat{f}_{2k}$ is a real number, and
\begin{align}
\Ebb\sbra{\abs{\hat{f}_{2k}}^2} &= {2^{2n}}{n \choose k}^{-2} \sum_{\bm x}\int_Q d \mu(Q) \sbra{\supbraket{\bm 0|\Pcal_{2k}\Ucal_{Q} |\bm x}^2 \supbraket{\bm x|\Lambda \Ucal_{Q} |\bm 0}}\\
&={2^{2n}}{n \choose k}^{-2} \sum_{\bm x}\int_Q d \mu(Q) \sbra{\tr{\Ucal_{Q}^{\otimes 3} 
\supket{\bm 0}^{\otimes 3} \pbra{\supbra{\bm x}\Pcal_{2k}
\otimes \supbra{\bm x}\Pcal_{2k}
\otimes \supbra{\bm x}\Lambda 
}
}}\\
&= {2^{2n}}{n \choose k}^{-2} \sum_x \sum_{\substack{k_1, k_2, k_3\geq 0\\
k_1 + k_2 + k_3 \leq 2n} }
\supbraket{\Rcal_{k_1, k_2, k_3} | \bm 0^{\otimes 3}} \pbra{\supbra{x}\Pcal_{2k}
\otimes \supbra{x}\Pcal_{2k}
\otimes \supbra{x}\Lambda 
} \supket{\Rcal_{k_1, k_2, k_3}}
\label{eq:square_f_thirdmoment}
\end{align}
where Eq.~\eqref{eq:square_f_thirdmoment} holds due to Lemma \ref{lem:threemomentsMatchgate}.
By the definition of $\Rcal_{k_1, k_2, k_3}$, we have
\begin{align}
\supbraket{\Rcal_{k_1, k_2, k_3} | \bm 0^{\otimes 3}} &= {2n \choose k_1,k_2,k_3}_p^{-1/2} \sum_{
\substack{S_1,S_2,S_3 \subseteq [2n]\text{ disjoint} \\
|S_i| = k_i, i\in [3]
}
}
\supbraket{\gamma_{S_1} \gamma_{S_2} |\bm 0}
\supbraket{\gamma_{S_2} \gamma_{S_3} |\bm 0}\supbraket{\gamma_{S_3} \gamma_{S_1} |\bm 0}\\
&= {2n \choose k_1,k_2,k_3}_p^{-1/2} [k_1,k_2,k_3 \text{ even}](-1)^{(k_1 + k_2 + k_3)/2} 2^{-3n/2}{n \choose \frac{k_1}{2}, \frac{k_2}{2}, \frac{k_3}{2}}_p
\label{eq:f_square_first_proof}
\end{align}
where ${ n\choose k_1, k_2, k_3}_p = {n \choose k_1,k_2, k_3, n - k_1-k_2 - k_3}$, and
\begin{align}
&\pbra{\supbra{x}\Pcal_{2k}
\otimes \supbra{x}\Pcal_{2k}
\otimes \supbra{x}\Lambda 
} \supket{\Rcal_{k_1, k_2, k_3}} \nonumber\\
&={2n \choose k_1,k_2,k_3}_p^{-1/2} \sum_{
\substack{
S_1,S_2,S_3 \subseteq [2n]\text{ disjoint}\\
\abs{S_i}=k_i,i \in [3]
}
}\supbraket{x|\Pcal_{2k}|\gamma_{S_1}\gamma_{S_2}} \supbraket{x|\Pcal_{2k}|\gamma_{S_2}\gamma_{S_3}} \supbraket{x|\Lambda|\gamma_{S_3}\gamma_{S_1}} 
\\
&= {2n \choose k_1,k_2,k_3}_p^{-1/2} [k_1=k_3\text{ even},k_2 = 2k-k_1] \sum_{
\substack{S_1,S_2,S_3\subseteq [n]\text{ disjoint}\\
\abs{S_1}=\abs{S_3}=\frac{k_1}{2},\abs{S_2} = k-\frac{k_1}{2}
}
} 
\frac{i^{\abs{S_1} + 2\abs{S_2} + \abs{S_3}} (-1)^{\sum_{l\in S_1 \cup S_3} x_l}\supbraket{x|\Lambda|\gamma_{D(S_3\cup S_1)}}}{2^n} 
\\
&= {2n \choose 2l,2k - 2l,2l}_p^{-1/2}[k_1=k_3=2l,k_2=2k-2l] \nonumber\\
&\qquad \times
\sum_{\substack{
S_1,S_3\subseteq [n] \text{ disjoint}\\
\abs{S_1} = \abs{S_3}=l
}
}\frac{(-1)^{k}}{2^{3n/2} } {n-2l \choose k-l}\tr{\gamma_{D(S_1\cup S_3)}\Lambda|(\ketbra{x}{x})}^\ast(-1)^{\sum_{l\in S_1\cup S_3} x_l}\\
&= {2n \choose 2l,2k - 2l,2l}_p^{-1/2}[k_1=k_3=2l,k_2=2k-2l] \sum_{S\in {[n]\choose 2l}
}\frac{(-1)^{k}}{2^{3n/2} } {n-2l \choose k-l}{2l \choose l}\tr{\ketbra{x}{x}\Lambda(\gamma_{D(S)})}(-1)^{\sum_{l\in S} x_l}.
\label{eq:f_square_sec_proof}
\end{align} 
Substituting Eq.~\eqref{eq:f_square_first_proof} and Eq.~\eqref{eq:f_square_sec_proof} into Eq.~\eqref{eq:square_f_thirdmoment} gives us the following equality
\begin{align}
\Ebb\sbra{\hat{f}_{2k}^2} = {n \choose k}^{-2} \sum_{0\leq l \leq \min(k,n-k)} {2n \choose 2l, 2k - 2l, 2l}_p^{-1} {n \choose l, k - l, l}_p^2 \Bcal_{2l},
\end{align}
where $\Bcal_{l}$ is defined in Eq. (5) in the main file.
\end{proof}

 In Supplementary Note 6, we will give the upper bounds for the number of samples $R_e, R_c$ to bound estimation and calibration errors $\varepsilon_e, \varepsilon_c$ respectively, based on the results of the above two lemmas.

\section*{Supplementary note 5}
\label{app:cal_est_quantities}
In this section, we analyze the simplification of the estimations of several physical quantities with error-mitigated matchgate shadow.

\begin{lemma}
Given $Q\in \mathrm{Orth}(2n)$, a Majorana operator $\gamma_S$ and computational basis state $\ket{x}, $ where $x\in \cbra{0,1}^n$, we have
\begin{align}
    \tr{U_Q^\dagger \gamma_S U_Q \ketbra{x}{x}} = i^{|S|/2}\pf(QC_{\ket{x}}Q^T|_S),
\end{align}
if $|S|$ is even, and zero otherwise.
\label{lem:cal_maj_exp}
\end{lemma}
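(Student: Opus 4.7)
The plan is to recognize that $U_Q^\dagger\gamma_S U_Q\ketbra{x}{x}$ should be rearranged by cyclicity of the trace into an expectation of $\gamma_S$ on the transformed state $\rho' := U_Q\ketbra{x}{x}U_Q^\dagger$, and then use the fact that $\rho'$ is again a fermionic Gaussian state so that Wick's theorem gives a Pfaffian formula directly.

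First I would recall from Supplementary Note 2 that $\ketbra{x}{x}$ is a Gaussian state with covariance matrix $C_{\ket{x}}$ as defined there. Next I would track how the covariance matrix transforms under a matchgate. Using $U_Q^\dagger\gamma_j U_Q=\sum_k Q_{jk}\gamma_k$ together with orthogonality of $Q$, a short calculation shows
\begin{equation*}
-i\tr{\gamma_j\gamma_k \rho'}\;=\;-i\sum_{a,b}Q_{ja}Q_{kb}\tr{\gamma_a\gamma_b\ketbra{x}{x}}\;=\;\bigl(Q\,C_{\ket{x}}\,Q^{T}\bigr)_{jk}
\end{equation*}
for $j\neq k$ (the identity contribution $\delta_{ab}$ that arises from $\gamma_a^2=\mathbb I$ cancels after contraction with $\sum_a Q_{ja}Q_{ka}=\delta_{jk}$). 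Hence $\rho'$ is Gaussian with covariance matrix $QC_{\ket{x}}Q^T$.

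Then I would invoke the standard Wick/Pfaffian identity for fermionic Gaussian states: for any Gaussian $\rho$ with covariance matrix $C_\rho$ and any even-cardinality subset $S\subseteq[2n]$,
\begin{equation*}
\tr{\gamma_S\,\rho}\;=\;i^{|S|/2}\,\pf\bigl(C_\rho|_S\bigr).
\end{equation*}
Applying this with $\rho=\rho'$ and using the covariance matrix computed above yields exactly $i^{|S|/2}\pf\!\left(QC_{\ket{x}}Q^T|_S\right)$, which proves the claim for even $|S|$.

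For odd $|S|$ the vanishing follows from fermionic parity: $\gamma_S$ anticommutes with the parity operator $P=(-1)^{\sum_j b_j^\dagger b_j}$, matchgates commute with $P$, and $\ketbra{x}{x}$ is a $P$-eigenstate, so $\tr{\gamma_S\rho'}=-\tr{\gamma_S\rho'}=0$. The only step requiring mild care is getting the conventions consistent between the covariance matrix $C_{\ket{x}}$ defined in the supplement and the Wick formula above; I would verify this on the base case $S=\{2j-1,2j\}$, where a direct computation gives $\tr{\gamma_{2j-1}\gamma_{2j}\ketbra{x}{x}}=i(-1)^{x_j}=i\,(C_{\ket{x}})_{2j-1,2j}$, matching the formula. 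Alternatively, one can bypass Gaussian-state machinery entirely by expanding $U_Q^\dagger\gamma_S U_Q=\sum_{S'}\det(Q|_{SS'})\gamma_{S'}$ as given in the main text, evaluating each $\tr{\gamma_{S'}\ketbra{x}{x}}$ via Wick on $\ket{x}$, and invoking the Cauchy--Binet identity $\sum_{S'}\det(Q|_{SS'})\pf(C_{\ket{x}}|_{S'})=\pf(QC_{\ket{x}}Q^T|_S)$; I expect this to be the main technical step if one avoids the Gaussian-state abstraction.
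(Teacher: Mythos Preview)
Your proposal is correct. Your primary route---conjugating the \emph{state} rather than the operator, observing that $\rho'=U_Q\ketbra{x}{x}U_Q^\dagger$ is Gaussian with covariance $QC_{\ket{x}}Q^T$, and then applying Wick's theorem once to $\rho'$---is a genuinely different (and slightly cleaner) argument than the paper's. The paper instead takes exactly the alternative you sketch at the end: it expands $U_Q^\dagger\gamma_S U_Q=\sum_{S'}\det(Q|_{SS'})\gamma_{S'}$, applies Wick on $\ket{x}$ to get $\tr{\gamma_{S'}\ketbra{x}{x}}=i^{|S|/2}\pf(C_{\ket{x}}|_{S'})$, and then collapses the sum via the Pfaffian identity $\det(A)\pf(B)=\pf(ABA^T)$ (the Cauchy--Binet-for-Pfaffians statement you wrote is in fact the precise version needed for the submatrix restriction). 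Your approach buys a one-line derivation that avoids the summation and the Pfaffian identity altogether, at the cost of invoking the covariance-transformation rule for Gaussian states; the paper's approach is more elementary in that it only uses Wick on the computational basis state and a matrix identity, without appealing to the Gaussian-state abstraction.
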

\begin{proof}
It is easy to check that if $|S|$ is odd, then $\tr{U_Q^\dagger \gamma_S U_Q \ketbra{x}{x}} = 0$. In the following, we assume that $|S|$ is even. With the definitions of $C_{\ket{x}}$ and the {Pfaffian} function, we have
$\tr{\gamma_S\ketbra{x}{x}} = i^{|S|/2} \pf(C_{\ket{x}}|_S)$.
By the properties of the matchgate operation on the Majorana operator, we have
\begin{align}
\tr{U_Q^{\dagger} \gamma_S U_Q \ketbra{x}{x}} &= \sum_{S'\in {[n]\choose |S|}} \text{det}(Q|_{SS'})\tr{\gamma_{S'} \ketbra{x}{x}}\\
&= i^{|S|/2}\pf(QC_{\ket{x}}Q^T|_{S}),
\end{align}
where the last equality holds since $\det(A)\pf(B) = \pf(ABA^T)$ for any  $2n\times 2n$ matrix $A$ and a skew-symmetric matrix $B$ (A \textit{skew-symmetric} matrix $B$ is one that satisfies $B^T = -B$)~\cite{bravyi2017complexity}.
\end{proof}
Let $\hat{v}$ be the estimation result obtained by computing $\tr{\widetilde{\gamma}_S \widehat{\Mcal}^{-1} \pbra{U_Q^\dagger \ketbra{x}{x} U_Q}}$ where $\widetilde\gamma_S = U_{Q_1}^\dagger \gamma_S U_{Q_1}$.
We can simplify the estimate of the expectation value of the Majorana operator $\widetilde{\gamma}_S$ as follows:
\begin{align}
\hat{v} &= \tr{\widetilde{\gamma}_S{\widehat\Mcal^{-1}(U_Q^\dagger \ketbra{x}{x}U_Q)}}\\
&= \tr{{\gamma}_S{\widehat\Mcal^{-1}(U_{Q_1}U_Q^\dagger \ketbra{x}{x}U_QU_{Q_1}^{\dagger})}}\\
&= \frac{1}{\hat{f}_{|S|}2^n}\sum_{S'\in {[2n]\choose |S|}}\tr{{\gamma}_S \gamma_{S'}}\tr{\gamma_{S'}U_{Q_1}U_Q^\dagger \ketbra{x}{x}U_QU_{Q_1}^{\dagger}}\\
   &= \frac{i^{|S|/2}}{\hat{f}_{|S|}}\pf(Q_1Q^TC_{\ket{x}} QQ_1^T|_S),
\end{align}
where $\widetilde{\gamma}_j = \sum_{k=1}^{2n} \det(Q_1(j,k))\gamma_{k}$, and $Q_1(j,k)$ is the $(j,k)$-th element of matrix $Q_1$.


 Wan et al.~\cite{wan2022matchgate} introduce techniques to determine the overlap between a Gaussian state and any quantum state, along with a $\tau$-slater determinant and any pure state. Notably, the distinction between the noisy channel and the original CS channel boils down to the varying coefficients $f_{2k}$. To ensure the integrity of our paper, we outline the equations used to calculate these quantities below. One can find detailed information about the deviation in Ref.~\cite{wan2022matchgate}.

(1) Calculate  the overlap between an $n$-qubit Gaussian state with density matrix $\rho_g$ and any $n$-qubit quantum state with density matrix $\rho$, denoted as $\tr{\rho \rho_g}$. By letting $H = \rho_g$, the estimator can be calculated efficiently since $\tr{\rho_g \Pcal_{2k}\pbra{U_Q^\dagger \ketbra{x}{x}U_Q}}$ is the coefficient of $z^k$ in the polynomial $p_{\rho_g, \chi} = \frac{1}{2^n} \pf(C_{{\rho_g}})\pf(-C_{{\rho_g}}^{-1} + z C_{\chi})$, where $\chi = U_Q^\dagger \ketbra{x}{x}U_Q$. Hence $\hat{v}$ can be estimated in polynomial time with the polynomial interpolation method.

(2) Calculate the inner product between $\tau$-fermionic Slater determinants $\ket{\phi_\tau}$ and any quantum pure state $\ket{\psi}$, denoted as $\braket{\psi|\phi_\tau}$ and since the shadow channel $\widetilde\Mcal$ twirls any observable to the even subspace, here we assume $\ket{\psi}$ is in the odd subspace. We require $n+1$ qubits if $\tau$ is even (and $n+2$ qubits otherwise) and suppose $\ket{\psi} = U_{\psi}\ket{\bm 0}$. By performing Hadamard gate on the first qubit, followed by control-$U_{\psi}$ gate, we generate the initial state {$\rho = \frac{(\ket{0^{n+1}} + \ket{1}\ket{\psi})(\bra{0^{n+1}} + \bra{1}\bra{\psi})}{2}$}, and let observable $H = \ket{1}\ket{\phi_\tau}\bra{0^{n+1}}$. It is easy to check {$\tr{\rho H} = \braket{\psi|\phi_\tau}/2$}.
The quantity $\tr{HU_Q^\dagger \ketbra{x}{x}U_Q}$ is the coefficient of $z^k$ in the polynomial $p_{\phi,\chi} = \frac{i^{\tau/2}}{2^{n-\tau/2}}\pf\pbra{C_{\bm 0} + z Q'^T \widetilde Q C_{\chi} \widetilde{Q}^T Q'|E_{\tau}}$, where $C_{\bm \rho}$ is the covariance matrix for $\rho$, $Q'=\bigoplus_{j\in [\tau]} \frac{1}{\sqrt{2}} \begin{pmatrix}
    1 & -i \\
    1 &i
\end{pmatrix}\bigoplus_{j=\tau + 1}^n \begin{pmatrix}
    1 & 0 \\
    0 &1
\end{pmatrix}$ and $E_{\tau} = [2n]\backslash \cbra{1,3,\ldots, 2\tau - 1}$, and we can get all of the coefficients in $\Ord{n^4}$ time, hence the estimator $\hat{v}$ can be obtained efficiently.

\section*{Supplementary note 6}
\label{app:err_estimation}

To limit the estimation error, we first provide the \textbf{MedianOfMeans} method~\cite{huang2020predicting}, as shown in Algorithm \ref{alg:Median_of_means}. To give a better bound, the number of repetitions is usually set as $K = \Ocal(\Var{\hat{v}}/\varepsilon^2)$ and $ N = \Ocal(\ln {\delta^{-1}})$ in the algorithm, where $\hat{v}$ is the estimation in a single round.

\begin{algorithm}
\caption{Median of Means}
\label{alg:Median_of_means}
\begin{algorithmic}
\Input{ The number of repetitions $N,K$, and estimations $\cbra{v_j}_{j=1}^{R}$, where $R = NK$.}
\EndInput
\Output{ Estimation $\bar v$.}
\EndOutput
\For{$l \leftarrow 1$ to $K$}
 \State $\bar{v}_l:= \frac{1}{N} \sum_{j = (l-1)N+ 1}^{KN} v_j$;
\EndFor
\State $\bar{v} = \text{median}(v_1,\ldots, v_K)$;
\State \textbf{Return}{ $\bar{v}$}
\end{algorithmic}
\end{algorithm}

\begin{lemma}[Theorem S1 in Ref.~\cite{huang2020predicting}]
Let $S=\cbra{\hat v_1,\ldots, \hat v_R}$ be $R=NK$ identical and independent samples from the same distribution with $N= 34\Var{\hat v}/\varepsilon^2$ for any $\hat v\in S$, and $K = 2\ln (2\delta^{-1})$ for some precision parameters $\varepsilon,\delta \in \sbra{0,1}$, then the estimation $\bar{v}$ in Algorithm \ref{alg:Median_of_means} satisfies
\begin{align}
\Pr[\abs{\bar{v} - \Ebb\sbra{\hat v}}\geq \varepsilon ]\leq \delta.
\end{align}
\label{lem:Median_of_means}
\end{lemma}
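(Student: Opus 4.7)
The plan is to prove this standard median-of-means concentration inequality by combining Chebyshev's inequality applied to each of the $K$ sub-sample means with a Chernoff/Hoeffding bound controlling how many of them can deviate. Concretely, I would set up the argument in three steps, tracking how the constants $34$ and $2$ in the statement arise.

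First, I would analyze a single block. For each $l \in [K]$, the block mean $\bar v_l = \frac{1}{N}\sum_{j=(l-1)N+1}^{lN}\hat v_j$ is the average of $N$ i.i.d.\ copies of $\hat v$, so $\mathbb{E}[\bar v_l] = \mathbb{E}[\hat v]$ and $\mathrm{Var}(\bar v_l) = \mathrm{Var}(\hat v)/N$. Chebyshev's inequality then gives
\begin{equation*}
\Pr\!\left[|\bar v_l - \mathbb{E}[\hat v]| \geq \varepsilon\right] \;\leq\; \frac{\mathrm{Var}(\hat v)}{N\varepsilon^2} \;=\; \frac{1}{34}
\end{equation*}
by the choice $N = 34\,\mathrm{Var}(\hat v)/\varepsilon^2$. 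Call block $l$ \emph{bad} if $|\bar v_l - \mathbb{E}[\hat v]| \geq \varepsilon$, and let $X_l$ be the indicator of this event, so $p := \mathbb{E}[X_l] \leq 1/34$.

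Second, I would make the elementary observation connecting the median to the block events: if the median $\bar v$ of $\bar v_1,\dots,\bar v_K$ satisfies $|\bar v - \mathbb{E}[\hat v]| \geq \varepsilon$, then at least $K/2$ of the block means lie outside the interval $(\mathbb{E}[\hat v]-\varepsilon,\mathbb{E}[\hat v]+\varepsilon)$, i.e.\ $\sum_{l=1}^K X_l \geq K/2$. (This is because if strictly more than $K/2$ block means are within $\varepsilon$ of $\mathbb{E}[\hat v]$, the median must be one of them.)

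Third, I would bound the tail $\Pr[\sum_l X_l \geq K/2]$ by Hoeffding's inequality for sums of independent $[0,1]$ random variables:
\begin{equation*}
\Pr\!\left[\sum_{l=1}^K X_l \geq \frac{K}{2}\right] \;\leq\; \exp\!\left(-2K\Bigl(\tfrac{1}{2}-p\Bigr)^2\right) \;\leq\; \exp\!\left(-2K\Bigl(\tfrac{1}{2}-\tfrac{1}{34}\Bigr)^2\right),
\end{equation*}
and then check that with $K = 2\ln(2/\delta)$ the right-hand side is at most $\delta$. The hardest—really, the only—nontrivial part is verifying this last constant: the exponent becomes $-2\cdot 2\ln(2/\delta)\cdot (8/17)^2 \approx -0.886\ln(2/\delta)$, so one obtains $(\delta/2)^{0.886}$, which is bounded by $\delta$ for $\delta\in(0,1)$ once one checks the crossover, and otherwise one can use a slightly looser Chernoff bound for Bernoulli variables (e.g.\ $\Pr[\sum X_l \geq K/2] \leq (2\sqrt{p(1-p)})^K$, which with $p \leq 1/34$ easily yields $\leq \delta$ for $K = 2\ln(2/\delta)$). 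Combining the three steps gives $\Pr[|\bar v - \mathbb{E}[\hat v]|\geq\varepsilon]\leq\delta$, completing the proof.
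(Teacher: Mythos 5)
The paper does not prove this lemma itself; it is imported verbatim from Theorem S1 of Huang et al.~\cite{huang2020predicting}, and your argument is precisely the standard Chebyshev-plus-Chernoff proof given there, so in substance your proposal is correct and matches the source. One quantitative point deserves care: the additive Hoeffding route you try first does \emph{not} close with these constants for all $\delta\in(0,1)$. With per-block failure probability $p\le 1/34$ you get $\exp(-2K(8/17)^2)\approx(\delta/2)^{0.886}$ at $K=2\ln(2/\delta)$, and $(\delta/2)^{0.886}\le\delta$ only when $\delta\gtrsim 2^{-7.8}\approx 5\times10^{-3}$, so it fails exactly in the regime of small $\delta$ where one cares most. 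Your fallback is the right fix and should be the main argument: the relative-entropy (multiplicative) Chernoff bound $\Pr[\sum_l X_l\ge K/2]\le\bigl(4p(1-p)\bigr)^{K/2}\le(2\sqrt{33}/34)^{K}\approx(0.338)^{K}=(\delta/2)^{2.17}\le\delta$ for all $\delta\in(0,1)$. The remaining steps—Chebyshev giving $p\le 1/34$ from $N=34\Var{\hat v}/\varepsilon^2$, and the observation that a bad median forces at least $K/2$ bad blocks—are both correct as you state them.
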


The required number of {samples for accurate estimation} can be upper bounded by the variance of 
  estimation $\hat{v}_i^{(j)}$ in a single round {of sampling} 
 (as shown in Step 14 of Algorithm 1 in the main file)  by utilizing \textbf{MedianOfMeans} method~\cite{huang2020predicting}.
We give the variance of a single estimation $\hat{v}$
in the following lemma.
\begin{lemma}
For any observable $H$ and an unknown quantum state $\rho$, the estimation $\hat{v} $ generated in Step 16 of Algorithm 1 in the main file for quantity $\tr{H\widehat{\Mcal}^{-1}\widetilde{\Mcal}(\rho)}$  satisfies
\begin{align}
\begin{aligned}
\Var{\hat{v} } &\leq 
\frac{(1+\varepsilon_c)^2}{2^{2n}}\sum_{0\leq l_1 + l_2 + l_3\leq n} \frac{(-1)^{l_1+l_2+l_3}{n\choose l_1, l_2, l_3}_p{2n\choose 2l_1 + 2l_2} {2n\choose 2l_2 + 2l_3}\Bcal_{l_1 + l_3}}{{2n\choose 2l_1, 2l_2, 2l_3}_p{n \choose l_1+l_2}{n \choose l_2 + l_3}\Bcal_{l_1 + l_2}\Bcal_{l_2+l_3}} \\
&\quad \times \sum_{
\substack{S_1,S_2,S_3\ \mathrm{ disjoint}\\
|S_i| = 2l_i,i\in [3]}
}\tr{\widetilde\gamma_{S_1} \widetilde\gamma_{S_2}H_0} \tr{\widetilde\gamma_{S_2} \widetilde\gamma_{S_3} H_0^\dagger} \tr{\widetilde\gamma_{S_3}\widetilde\gamma_{S_1} \rho },  
\end{aligned}
\label{eq:var_dev_Single}
\end{align}
where $H_0 = H -\tr{H}\frac{\Ibb}{2^n}$ is the traceless part of $H$.
\label{lem:variance_estimation}
\end{lemma}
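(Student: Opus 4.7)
The plan is to upper-bound $\mathbb{E}[\hat v^{\,2}]$, from which the variance bound follows since $\mathrm{Var}(\hat v) \leq \mathbb{E}[\hat v^{\,2}]$. Using $\widehat{\mathcal M}^{-1} = \sum_{k=0}^n \hat f_{2k}^{-1}\mathcal P_{2k}$, I would square the estimator $\hat v = \sum_k \hat f_{2k}^{-1}\,\supbraket{H|\mathcal P_{2k}\mathcal U_Q^\dagger|x}$, and take expectations over the matchgate $Q$ and the outcome $x$, whose distribution is $\supbraket{x|\Lambda\mathcal U_Q|\rho}$. The first step is to condition on the calibration event $\{|\hat f_{2k}/f_{2k}-1|\leq \varepsilon_c\ \forall k\}$, so that the product $\hat f_{2k_1}^{-1}\hat f_{2k_2}^{-1}$ can be replaced by $(1+\varepsilon_c)^2/(f_{2k_1}f_{2k_2})$ and pulled outside the expectation. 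Substituting $f_{2k}={2n\choose 2k}^{-1}{n\choose k}\Bcal_k$ (Lemma~\ref{lem:abs_f_2k}) accounts for the $\binom{2n}{\cdot}$, $\binom{n}{\cdot}$, and $\Bcal_{l_1+l_2}\Bcal_{l_2+l_3}$ factors appearing in the denominator of the target expression.

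Next I would apply the matchgate third-moment formula from Lemma~\ref{lem:threemomentsMatchgate} to $\int d\mu(Q)\,\mathcal U_Q^{\otimes 3}$, which expresses the twirl as a sum over triples $(k_1,k_2,k_3)$ of rank-one projectors onto $\supket{\mathcal R^{(3)}_{k_1,k_2,k_3}}$. This decomposition is applied to the tensor $\supket{\bar H}\supket{\bar H}\supket{\rho}$ sandwiched against three $\supbra{x}$ factors weighted by $\Lambda$ on the third copy only, where $\bar H = \mathcal P_{2k_i}(H)$. The constraint $k_1,k_2,k_3$ forming disjoint $S_1,S_2,S_3$ combined with the two $\mathcal P_{2k}$ projectors forces $2k_1 = 2l_1+2l_2$ and $2k_2 = 2l_2+2l_3$, reproducing the re-indexing by $(l_1,l_2,l_3)$ and supplying the multinomial factors ${n\choose l_1,l_2,l_3}_p$ and ${2n\choose 2l_1,2l_2,2l_3}_p^{-1}$ from the normalisation of $\mathcal R^{(3)}$.

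The third step is to execute the sum $\sum_x$ together with the $\supbra{x}\Lambda$ factor. Expanding $\supket{\mathcal R^{(3)}_{k_1,k_2,k_3}}$ as a sum of $\supket{\gamma_{S_i}\gamma_{S_j}}$'s, the factor $\sum_x (-1)^{(\cdots)}\supbraket{x|\Lambda|\gamma_{D(T)}}$ with $|T|=l_1+l_3$ collapses (by the definition of $\Bcal_k$ in Eq.~(5) of the main text) to $\Bcal_{l_1+l_3}\cdot 2^n/(i^{l_1+l_3}{n\choose l_1+l_3})$, producing the $\Bcal_{l_1+l_3}$ in the numerator and the overall $2^{-2n}$ and sign $(-1)^{l_1+l_2+l_3}$ after absorbing the $i^{l_1+l_3}$ factors. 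The remaining contractions
\[
\supbraket{\bar H|\gamma_{S_1}\gamma_{S_2}},\quad \supbraket{\bar H|\gamma_{S_2}\gamma_{S_3}},\quad \supbraket{\rho|\gamma_{S_3}\gamma_{S_1}}
\]
are then rewritten in the tilded basis as $\tr{\widetilde\gamma_{S_1}\widetilde\gamma_{S_2}H_0}$, $\tr{\widetilde\gamma_{S_2}\widetilde\gamma_{S_3}H_0^\dagger}$, and $\tr{\widetilde\gamma_{S_3}\widetilde\gamma_{S_1}\rho}$, where $H$ may be replaced by its traceless part $H_0$ since the $l_i=0$ component of $H$ contributes only to $\mathbb E[\hat v]^2$ and is subtracted in the variance (equivalently, $\mathcal P_0(H) = \tr{H}\Ibb/2^n$ cancels from the $l_1=0$ or $l_3=0$ off-diagonal contributions after disjointness is imposed).

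The main obstacle will be the careful combinatorial bookkeeping: three projector normalisations, two $\mathcal P_{2k}$ constraints, and the $x$-sum all contribute multinomial coefficients and signs, and a single error in the normalisation of $\supket{\mathcal R^{(3)}}$ or in a Majorana overlap propagates into a wrong prefactor. A secondary subtlety is that $\hat f_{2k}$ is itself random, which is why the bound carries the $(1+\varepsilon_c)^2$ prefactor rather than an equality; the replacement step is only valid on the high-probability calibration event guaranteed by the choice of $R_c$.
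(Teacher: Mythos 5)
Your proposal follows essentially the same route as the paper's proof: bound the variance by the second moment of the estimator built from the traceless part $H_0$, expand $\widehat{\Mcal}^{-1}=\sum_k \hat f_{2k}^{-1}\Pcal_{2k}$, apply the matchgate third-moment formula of Lemma~\ref{lem:threemomentsMatchgate}, control the random $\hat f_{2k}^{-1}$ factors via the calibration bound $f_{2k}/\hat f_{2k}\le 1+\varepsilon_c$ together with $f_{2k}={2n\choose 2k}^{-1}{n\choose k}\Bcal_k$, and collapse the $x$-sum against $\supbra{x}\Lambda$ into $\Bcal_{l_1+l_3}$. The structure and all key ingredients match; only the detailed combinatorial bookkeeping, which you correctly flag as the delicate part, remains to be carried out as in the paper.
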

\begin{proof}
By the definition of $\hat{v}$, we have
\begin{align}
\Var{\hat{v}} &= \Ebb\sbra{\abs{\hat{v}  - \Ebb\sbra{\hat{v} }}^2}\\
&= \Ebb_{x}\int_{Q} d\mu(Q) \sbra{\abs{\supbraket{H_0^{\dagger}|\widehat\Mcal^{-1}\Ucal_{Q}^{\dagger}| x} - \supbraket{H_0^\dagger|\widehat\Mcal^{-1}\widetilde\Mcal|\rho}}^2}\\
&\leq \sum_{x}\int_{Q} d\mu(Q) {\abs{\supbraket{H_0^{\dagger}|\widehat\Mcal^{-1}\Ucal_{Q}^{\dagger}| x}}^2 \supbraket{x|\Lambda \Ucal_{Q} |\rho}}\\
&=\sum_{x} \int_{Q} d\mu(Q) {\tr{\Ucal_{Q}^{\otimes 3} \widehat\Mcal^{-1}\supket{H_0} \widehat\Mcal^{-1}\supket{H_0^{\dagger}} \supket{\rho} \supbra{x}^{\otimes 2} \supbra{x}\Lambda  }}
\label{eq:third_moment_var_v}\\
&= \sum_{x} \sum_{0\leq k_1+k_2+k_3\leq 2n}
\supbra{\widetilde\Rcal_{k_1k_2k_3} }\widehat\Mcal^{-1} \supket{H_0}\widehat\Mcal^{-1} \supket{H_0^{\dagger}}\supket{\rho} \supbra{x}\supbra{x}\supbra{x}\Lambda \supket{\widetilde\Rcal_{k_1k_2k_3}},
\end{align} 
where $H_0 = H - \tr{H}\frac{\Ibb}{2^n}$ is the traceless part of $H$. Eq.~\eqref{eq:third_moment_var_v} holds since $\supbraket{H_0^{\dagger}|\widehat{\Mcal}^{-1}\Ucal_{Q}^{\dagger}|x}= \supbraket{x|\Ucal_{Q}\widehat{\Mcal}^{-1}|H_0}$. With the definition of $\widetilde\Rcal_{k_1k_2k_3}$ in Eq.~\eqref{eq:third_moment} and the definition of $\widehat{\Mcal}$, we have
\begin{align}
&\supbra{\Rcal_{k_1k_2k_3}} \widehat\Mcal^{-1} \supket{H_0}\widehat\Mcal^{-1} \supket{H_0^\dagger}\supket{\rho} \nonumber\\
&= {2n\choose k_1, k_2, k_3}_p^{-1/2}\sum_{
\substack{S_1, S_2, S_3 \text{ disjoint}\\
|S_i| = k_i,i\in [3]\\
(k_i+k_{i+1}) \text{ even}, i\in [2]}
}
2^{-3n/2} \hat f_{k_1 + k_2}^{-1} \hat f_{k_2 + k_3}^{-1} \tr{\widetilde\gamma_{S_2}^\dagger \widetilde\gamma_{S_1}^\dagger H_0} \tr{\widetilde\gamma_{S_3}^\dagger \widetilde\gamma_{S_2}^\dagger H_0^\dagger} \tr{\widetilde\gamma_{S_1}^\dagger\widetilde\gamma_{S_3}^\dagger \rho } \\
&\leq \frac{2^{-3n/2}}{{2n\choose k_1, k_2, k_3}_p^{\frac{1}{2}}}\sum_{
\substack{S_1, S_2, S_3 \text{ disjoint}\\
|S_i| = k_i,i\in [3]\\
(k_i+k_{i+1}) \text{ even}, i\in [2]}
}\frac{(1+\varepsilon_c)^2}{f_{k_1 + k_2} f_{k_2 + k_3}} \tr{\widetilde\gamma_{S_1}\widetilde\gamma_{S_2}  H_0} \tr{\widetilde\gamma_{S_2} \widetilde\gamma_{S_3} H_0^\dagger} \tr{\widetilde\gamma_{S_3}\widetilde\gamma_{S_1} \rho }
 \label{eq:f_b_substitute}\\
 &=\frac{2^{-3n/2}{2n\choose k_1+k_2}{2n\choose k_2+k_3}(1+\varepsilon_c)^2}{{2n\choose k_1, k_2, k_3}_p^{\frac{1}{2}}{n\choose \frac{k_1+k_2}{2}}{n\choose \frac{k_2+k_3}{2}}\Bcal_{\frac{k_1 + k_2}{2}}\Bcal_{\frac{k_2 + k_3}{2}}}\sum_{
\substack{S_1, S_2, S_3 \text{ disjoint}\\
|S_i| = k_i,i\in [3]\\
(k_i+k_{i+1}) \text{ even}, i\in [2]
}
}
 \tr{\widetilde\gamma_{S_1}\widetilde\gamma_{S_2}  H_0} \tr{\widetilde\gamma_{S_2}\widetilde\gamma_{S_3}  H_0^\dagger} \tr{\widetilde\gamma_{S_3}\widetilde\gamma_{S_1} \rho },
 \label{eq:f_cali_bound_substitute}
\end{align}
where  Eq.~\eqref{eq:f_b_substitute} holds since $\frac{f_{2k}}{\hat{f}_{2k}}\leq 1+ \varepsilon_c$ for any $0\leq k\leq n$ from Lemma \ref{lem:calibrate_error_bound}, and Eq.~\eqref{eq:f_cali_bound_substitute} holds by Lemma~\ref{lem:abs_f_2k}.
Furthermore,
\begin{align}
&\supbra{x}^{\otimes 2}\supbra{x}\Lambda \supket{\Rcal_{k_1k_2k_3}} =2^{-n}{2n\choose 2l_1, 2l_2, 2l_3}_p^{-1/2}\sum_{\substack{S_1\in {[n] \choose l_1}\\
S_2\in {[n]\backslash S_1\choose l_2}\\
S_3 \in {[n]\backslash S_1 \cup S_2 \choose l_3}}
}
(-i)^{l_1 + 2l_2+ l_3} (-1)^{l_1+l_3}(-1)^{\sum_{j\in S_1 \cup S_3} x_j} \supbraket{x|\Lambda | \widetilde\gamma_{D(S_1\cup S_3)}}\\
&={2n\choose 2l_1, 2l_2, 2l_3}_p^{-1/2} 
\frac{(-i)^{l_1 + 2l_2+ l_3}}{2^{3n/2}}
{n - l_1 - l_3 \choose l_2}\sum_{s\in {[n] \choose l_1 + l_3}} (-1)^{\sum_{j\in S} x_j}{l_1+ l_3 \choose l_1} \tr{\ketbra{x}{x} \Lambda(\widetilde\gamma_S)}\\
&={2n\choose 2l_1, 2l_2, 2l_3}_p^{-1/2} 
\frac{(-1)^{l_1+l_2+l_3}}{2^{n/2}}
{n \choose l_1,l_2,l_3}_p
\Bcal_{l_1 + l_3},
\end{align}
where $l_i = k_i/2$ for $i\in [3]$.
By combining all the elements, we arrive at
\begin{align}
\begin{aligned}
\Var{\hat{v} } &\leq 
\frac{(1+\varepsilon_c)^2}{2^{2n}}\sum_{0\leq l_1 + l_2 + l_3\leq n} \frac{(-1)^{l_1+l_2+l_3}{n\choose l_1, l_2, l_3}_p{2n\choose 2l_1 + 2l_2} {2n\choose 2l_2 + 2l_3}\Bcal_{l_1 + l_3}}{{2n\choose 2l_1, 2l_2, 2l_3}_p{n \choose l_1+l_2}{n \choose l_2 + l_3}\Bcal_{l_1 + l_2}\Bcal_{l_2+l_3}} \\
&\quad \times \sum_{
\substack{S_1,S_2,S_3\text{ disjoint}\\
|S_i| = 2l_i,i\in [3]}
}\tr{\widetilde\gamma_{S_1} \widetilde\gamma_{S_2}H_0} \tr{\widetilde\gamma_{S_2} \widetilde\gamma_{S_3} H_0^\dagger} \tr{\widetilde\gamma_{S_3}\widetilde\gamma_{S_1} \rho}.  
\end{aligned}
\label{eq:var_dev_S}
\end{align}
\end{proof}

It's worth noting that in the absence of noise, the variance bound is equivalent to the variance described in Ref.~\cite{wan2022matchgate}, where $\Bcal_k=1$ for any $k$. Using Chebyshev's inequality, we can ensure that the error $\varepsilon$ of $\hat{v}$ is limited to $\frac{\Var{\hat{v}}}{\varepsilon^2}$ with a high degree of certainty. To assess the effectiveness of this bound, we provide a specific bound on the variance for estimating various physical quantities:
\begin{itemize}
     \item [(0)] Calculate $\tr{\rho \widetilde\gamma_S}$, where $\abs{S} = 2k$. If we choose the observable $H = \widetilde\gamma_S$, then the variance can be bounded by
\begin{align}
\frac{(1+\varepsilon_c)^2 {2n \choose 2k}}{\Bcal_{k}^2{n \choose k}} =   \Ord{ \Bcal_{k}^{-2}n^{k}}.
\end{align}
This can be further simplified to $\Ord{n^{k}}$ for noises with constant average fidelity in $\Gamma_{2k}$ subspace.
 \item[(1)] Calculate $\tr{\rho {\rho_g}}$, where ${\rho_g}$ is a Gaussian state. By choosing the observable $H = {\rho_g}$, the variance is bounded to $\Ord{\frac{\Bcal_{\max}\sqrt n}{\Bcal_{\min}^2}}$, where $\Bcal_{\max} = \max_k \abs{\Bcal_k}$ and $\Bcal_{\min} = \min_k \abs{\Bcal_k}$. 
It can be further simplified to $\Ord{\sqrt n}$ if the average fidelities of noise $\Bcal_k$ are constants for any $k\in \cbra{0,\ldots, n}$. 
\item[(2)] Calculate $\braket{\psi|\phi_\tau}$, where $\ket{\phi_\tau}$ is a $\tau$-fermionic Slater determinant. Let the initial input state of the quantum device be $\rho = \frac{(\ket{\bm 0} + \ket{\psi})(\bra{\bm 0} + \bra{\psi})}{2}$, and observable $H = \ket{\phi_\tau}\bra{\bm 0}$, the variance is bounded to $\Ord{\frac{\sqrt{n}\ln n \Bcal_{\max}\ln(m/\delta_e)}{\Bcal_{\min}^2\varepsilon_e^2}}$, which equals $\Ord{\frac{\sqrt{n}\ln n \ln(m/\delta_e)}{\varepsilon_e^2}}$ if the average fidelities of noise $\Bcal_k$ are constants for any $k\in \cbra{0,\ldots, n}$.
 \end{itemize}
We give the details for the simplification of variance in Supplementary Note 8. 
With this lemma and \textbf{MedianOfMeans} method, we can obtain Theorem~\ref{thm:est_samples}.

\begin{theorem}
Given a quantum state $\rho$ and observables $\cbra{H_i}_{i = 1}^m$.
With the number of estimation samplings
\begin{align}
R_e &= \frac{68(1+\varepsilon_c)^2\ln (2m/\delta_e)}{\varepsilon_e^2} 2^{-2n}\sum_{0\leq l_1 + l_2 + l_3\leq n} \frac{(-1)^{l_1+l_2+l_3}{n\choose l_1, l_2, l_3}_p{2n\choose 2l_1 + 2l_2} {2n\choose 2l_2 + 2l_3}\Bcal_{l_1 + l_3}}{{2n\choose 2l_1, 2l_2, 2l_3}_p{n \choose l_1+l_2}{n \choose l_2 + l_3}\Bcal_{l_1 + l_2}\Bcal_{l_2+l_3}}\nonumber\\
& \qquad \times \sum_{
\substack{S_1,S_2,S_3 \ \mathrm{ disjoint}\\
|S_i| = 2l_i,i\in [3]}
} \tr{\widetilde\gamma_{S_1} \widetilde\gamma_{S_2}H_0} \tr{\widetilde\gamma_{S_2} \widetilde\gamma_{S_3} H_0} \tr{\widetilde\gamma_{S_3}\widetilde\gamma_{S_1} \rho},
\end{align}
 where $H_0 = \max_i (H_i -\tr{H_i}\frac{\Ibb}{2^n})$, by running Algorithm~1 in the main file, we can bound the estimation error $\abs{\hat{v}_i - \tr{H_i\widehat{\Mcal}^{-1} \widetilde{\Mcal} (\rho)}}\leq \varepsilon_e$ for any $i\in [m]$ with success probability $1 - \delta_e$.
\label{thm:est_samples}
\end{theorem}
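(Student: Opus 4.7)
The plan is to combine the per-round variance bound of Lemma~\ref{lem:variance_estimation} with the \textbf{MedianOfMeans} amplification of Lemma~\ref{lem:Median_of_means}, and then take a union bound over the $m$ observables.

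First, I would observe that, conditional on the calibration output $\widehat{\mathcal{M}}$, the per-round estimators $\hat{v}_i^{(j)}$ produced in Step~\ref{step:est_per} of Algorithm~\ref{alg:mitigated_estimation_alg} are i.i.d.\ samples whose conditional expectation equals $\tr{H_i\widehat{\mathcal{M}}^{-1}\widetilde{\mathcal{M}}(\rho)}$. Indeed, $Q$ is drawn uniformly from $\mathrm{Orth}(2n)$ and $x$ from the corresponding noisy Born distribution, so $\mathbb{E}_{Q,x}[\mathcal{U}_Q^{\dagger}\ketbra{x}{x}\mathcal{U}_Q] = \widetilde{\mathcal{M}}(\rho)$, and substituting into the linear functional $\tr{H_i\widehat{\mathcal{M}}^{-1}(\cdot)}$ gives the claimed mean.

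Next, Lemma~\ref{lem:variance_estimation} furnishes a variance bound on each single-shot estimator in terms of the observable's traceless part $H - \tr{H}\mathbb{I}/2^n$. Because the same quantum samples must serve all $m$ observables and \textbf{MedianOfMeans} requires a \emph{uniform} variance bound, I would pick $H_0$ to be whichever $H_i - \tr{H_i}\mathbb{I}/2^n$ maximises the triple-sum on the right-hand side of~\eqref{eq:var_dev_Single}; this yields a single scalar $V$ dominating $\mathrm{Var}(\hat{v}_i^{(j)})$ simultaneously for every $i$.

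Finally, I would invoke Lemma~\ref{lem:Median_of_means} with $N_e = 34V/\varepsilon_e^2$ inner samples and $K_e = 2\ln(2m/\delta_e)$ outer groups: for each fixed $i$, the \textbf{MedianOfMeans} output $\hat{v}_i$ satisfies $|\hat{v}_i - \tr{H_i\widehat{\mathcal{M}}^{-1}\widetilde{\mathcal{M}}(\rho)}| \le \varepsilon_e$ with probability at least $1 - \delta_e/m$, and a union bound over $i\in[m]$ delivers the joint success probability $1-\delta_e$. The total number of state copies is then $R_e = N_e K_e = 68 V \ln(2m/\delta_e)/\varepsilon_e^2$; substituting the explicit form of $V$ from Lemma~\ref{lem:variance_estimation} reproduces the expression in the theorem statement. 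The only mildly delicate point is this uniformisation step: one must check that the triple-sum in~\eqref{eq:var_dev_Single} is already a genuine upper bound on the absolute variance of any individual $\hat{v}_i^{(j)}$, so that replacing the observable-specific traceless operator with the worst-case $H_0$ is legitimate; this, however, follows directly from the construction in the proof of Lemma~\ref{lem:variance_estimation}, so the remainder of the argument is a routine substitution.
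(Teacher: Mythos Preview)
Your proposal is correct and follows essentially the same approach as the paper: invoke Lemma~\ref{lem:Median_of_means} with $K_e = 2\ln(2m/\delta_e)$ and $N_e = 34\Var{\hat v}/\varepsilon_e^2$, bound the single-shot variance via Lemma~\ref{lem:variance_estimation}, and apply a union bound over the $m$ observables. The paper's proof is terser and also cites Lemma~\ref{lem:abs_f_2k} (though that substitution is already absorbed into the statement of Lemma~\ref{lem:variance_estimation}), but your added remarks on the conditional-expectation structure and the worst-case choice of $H_0$ only make explicit what the paper leaves implicit.
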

\begin{proof}
The theorem can be obtained from Lemma~\ref{lem:Median_of_means} with 
 the number of copies for $\rho$ being $R_e=K_eN_e$, where $K_e = 2\ln (2m\delta^{-1}_e)$,  and $N_e = \frac{34 \Var{\hat v}}{\varepsilon_e^2}$. With the bound on the variance in Lemma~\ref{lem:variance_estimation}, the value of $\abs{f_{2k}}$ provided in Lemma~\ref{lem:abs_f_2k} and the union bound, we see that with the number of estimation samplings
 \begin{align}
R_e &=N_eK_e\\
&= \frac{68 \ln (2m/\delta_e)\Var{\hat v}}{\varepsilon_e^2}\\
&\leq \frac{68(1+\varepsilon_c)^2\ln (2m/\delta_e)}{\varepsilon_e^2 2^{2n}} \sum_{0\leq l_1 + l_2 + l_3\leq n} \frac{(-1)^{l_1+l_2+l_3}{n\choose l_1, l_2, l_3}_p{2n\choose 2l_1 + 2l_2} {2n\choose 2l_2 + 2l_3} \Bcal_{l_1 + l_2}}{{2n\choose 2l_1, 2l_2, 2l_3}_p{n \choose l_1+l_2}{n \choose l_2 + l_3} \Bcal_{l_2+l_3}\Bcal_{l_1 + l_3}} \nonumber\\
&\qquad \times \sum_{
\substack{S_1,S_2,S_3\text{ disjoint}\\
|S_i| = 2l_i,i\in [3]}
} \tr{\widetilde\gamma_{S_1} \widetilde\gamma_{S_2}H_0} \tr{\widetilde\gamma_{S_2} \widetilde\gamma_{S_3} H_0} \tr{\widetilde\gamma_{S_3}\widetilde\gamma_{S_1} \rho},
 \end{align}
 where $\abs{\hat{v}_i - \tr{H_i\widehat{\Mcal}^{-1} \widetilde{\Mcal} (\rho)}}\leq \varepsilon_e$ for any $i\in [m]$, with success probability $1-\delta_e$.
\end{proof}

From this theorem, we can give the estimation of the number of samplings to approximate $k$-RDM, the overlap of $m$ number of Gaussian state and any quantum state, and $m$ number of $\tau$-fermionic Slater determinant.  The number of estimation samplings equals (1) $R_e = \Ord{\frac{k\ln(n/\delta_e)}{\varepsilon_e^2\Bcal_{k}^2}n^{k}}$ to estimate $k$-RDMs; (2) $R_e = \Ord{\frac{\sqrt{n}\Bcal_{\max}\ln(m/\delta_e)}{\Bcal_{\min}^2\varepsilon_e^2}}$ 
to estimate $\cbra{\tr{\rho {\rho_{g_j}}}}_{j = 1}^m$; and
(3) $R_e = \Ocal\pbra{\frac{\sqrt{n}\ln n \Bcal_{\max}\ln(m/\delta_e)}{\Bcal_{\min}^2\varepsilon_e^2}}$
to estimate $\cbra{\braket{\psi|\phi_{\tau_j}}}_{j=1}^m$. The number of samples in all of these applications is polynomial to the number of qubits $n$. 
We give more detailed calculations for the required number of samples in Supplementary Note 8.

\section*{Supplementary note 7}
\label{app:err_calibration}

As an extension of the calibration bound introduced in Ref.~\cite{chen2021robust}, the subsequent lemma illustrates the minimum calibration error resulting from the estimation of $f_{2k}$.

\begin{lemma}
Consider $\widehat{\Mcal}$ as the estimated channel for $\widetilde{\Mcal}$, where $\widehat{\Mcal} = \sum_{k = 0}^n \hat{f}_{2k}\Pcal_{2k}$. Let $\rho$ represent a quantum state, and let the observable $H$ belong to the subspace $\Gamma_{\mathrm{even}}$, then
\begin{align}    \left|\tr{H\widehat{\Mcal}^{-1} \widetilde{\Mcal}(\rho)} - \tr{H\rho}\right|=\vabs{H}_{\infty}\max_{k} (\hat{f}_{2k}^{-1}f_{2k} - 1)\leq \varepsilon_c \vabs{H}_{\infty}
\end{align}
if $\max_{k}{\abs{\hat f_{2k}-f_{2k}}}\leq \frac{\abs{f_{2k}}\varepsilon_c}{1 + \varepsilon_c}$.
\label{lem:calibrate_error_bound}
\end{lemma}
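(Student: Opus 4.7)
The plan is to diagonalize $\widehat{\Mcal}^{-1}\widetilde{\Mcal}$ in the basis of orthogonal projectors $\{\Pcal_{2k}\}_{k=0}^n$. Since $\widetilde{\Mcal}=\sum_k f_{2k}\Pcal_{2k}$ and $\widehat{\Mcal}=\sum_k \hat{f}_{2k}\Pcal_{2k}$ are both linear combinations of these mutually orthogonal idempotents, their composite is again diagonal and equals $\widehat{\Mcal}^{-1}\widetilde{\Mcal} = \sum_{k=0}^n \beta_k \Pcal_{2k}$ with $\beta_k := \hat{f}_{2k}^{-1} f_{2k}$. The hypothesis $\abs{\hat f_{2k}-f_{2k}}\leq \abs{f_{2k}}\varepsilon_c/(1+\varepsilon_c)$ forces $\hat f_{2k}\neq 0$ for each $k$, so $\widehat{\Mcal}$ is invertible on $\Gamma_{\mathrm{even}}$ and the expression above is well-defined.

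Since $H\in\Gamma_{\mathrm{even}}$, I then decompose $H=\sum_k H_{2k}$ with $H_{2k}:=\Pcal_{2k}(H)$. Self-adjointness of the $\Pcal_{2k}$ under the Hilbert-Schmidt inner product together with the orthogonality of the $\Gamma_{2k}$ subspaces collapses the quantity of interest to
\begin{equation*}
\tr{H\widehat{\Mcal}^{-1}\widetilde{\Mcal}(\rho)} - \tr{H\rho} \;=\; \sum_{k=0}^n (\beta_k-1)\,\tr{H_{2k}\rho} \;=\; \tr{E(H)\,\rho},
\end{equation*}
where $E := \sum_k (\beta_k-1)\Pcal_{2k}$ is a self-adjoint superoperator on $\Gamma_{\mathrm{even}}$ with spectrum $\{\beta_k-1\}_{k=0}^n$.

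The scalar factors $\abs{\beta_k-1}$ are then controlled by a one-line algebraic step. Writing $\beta_k-1=(f_{2k}-\hat f_{2k})/\hat f_{2k}$ and applying the reverse triangle inequality under the hypothesis yields $\abs{\hat f_{2k}}\geq \abs{f_{2k}}/(1+\varepsilon_c)$, whence $\abs{\beta_k-1}\leq \varepsilon_c$ uniformly in $k$.

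The main obstacle will be the final norm-domination step $\abs{\tr{E(H)\,\rho}}\leq \max_k\abs{\beta_k-1}\cdot\vabs{H}_\infty$: a direct pointwise bound fails because the individual projected pieces $H_{2k}$ are not spectrally dominated by $H$. My preferred route is to apply H\"older's inequality $\abs{\tr{E(H)\,\rho}}\leq \vabs{E(H)}_\infty\vabs{\rho}_1 = \vabs{E(H)}_\infty$ and then bound $\vabs{E(H)}_\infty = \vabs{\sum_k(\beta_k-1)H_{2k}}_\infty$ via the spectral-radius interpretation of $E$ as a linear map on $\Gamma_{\mathrm{even}}$ (whose eigenvalues are exactly $\beta_k-1$), so that $\vabs{E(H)}_\infty \leq \max_k\abs{\beta_k-1}\cdot\vabs{H}_\infty$. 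Combining this with the uniform bound $\abs{\beta_k-1}\leq\varepsilon_c$ from the previous step delivers the claimed inequality $\abs{\tr{H\widehat{\Mcal}^{-1}\widetilde{\Mcal}(\rho)}-\tr{H\rho}}\leq \varepsilon_c\vabs{H}_\infty$.
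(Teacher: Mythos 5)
Your reduction is the same one the paper uses: write $\widehat{\Mcal}^{-1}\widetilde{\Mcal}-\Ical=\sum_k(\hat f_{2k}^{-1}f_{2k}-1)\Pcal_{2k}$ on $\Gamma_{\mathrm{even}}$, and control the scalars by the reverse triangle inequality, $\abs{\hat f_{2k}}\geq \abs{f_{2k}}/(1+\varepsilon_c)$, hence $\abs{\hat f_{2k}^{-1}f_{2k}-1}\leq \varepsilon_c$. That part is correct and matches the paper's computation exactly. You also correctly identify the genuinely delicate point, which the paper itself glosses over: the paper's chain passes through $\abs{\tr{\rho\sum_k(\beta_k-1)\Pcal_{2k}(H)}}\leq \abs{\tr{H\rho}}\max_k\abs{\beta_k-1}$, which is not justified as written (the weighted sum of the sector contributions $\tr{\rho H_{2k}}$ is not dominated by their unweighted sum in absolute value).

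However, your proposed repair does not close the gap. The eigenvalue decomposition $E=\sum_k(\beta_k-1)\Pcal_{2k}$ is an orthogonal decomposition with respect to the \emph{Hilbert--Schmidt} inner product, so the "spectral radius" of $E$ controls only the induced $2\to 2$ norm: it gives $\vabs{E(H)}_2\leq \max_k\abs{\beta_k-1}\,\vabs{H}_2$, not $\vabs{E(H)}_\infty\leq \max_k\abs{\beta_k-1}\,\vabs{H}_\infty$. A superoperator that is a scalar on each HS-orthogonal weight sector need not be contractive (after rescaling by its largest eigenvalue) in the operator norm; flipping or rescaling the weight-$2k$ Majorana components of $H$ can in principle increase $\vabs{\cdot}_\infty$. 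So the step $\vabs{\sum_k(\beta_k-1)H_{2k}}_\infty\leq \max_k\abs{\beta_k-1}\cdot\vabs{H}_\infty$ is exactly as unproven in your argument as the analogous step is in the paper's. Note that in every downstream application the observable lies in a single sector ($H=\widetilde\gamma_S\in\Gamma_{2k}$, or the ratio bound $f_{2k}/\hat f_{2k}\leq 1+\varepsilon_c$ is all that is used), in which case $E(H)=(\beta_k-1)H$ and the bound is immediate; if you restrict the lemma to that case, or settle for a bound with an extra factor such as $\sum_k\abs{\tr{\rho H_{2k}}}$, your argument goes through. As stated for general $H\in\Gamma_{\mathrm{even}}$, the final norm-domination step still needs a real proof.
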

\begin{proof}
For the quantum state in the even subspace $\Gamma_{\mathrm{even}}$, we have
\begin{align}
\left|\tr{H\widehat{\Mcal}^{-1} \widetilde{\Mcal}(\rho)} - \tr{H\rho}\right|&= \abs{\tr{\rho\sum_{k}\pbra{\hat f_{2k}^{-1}f_{2k}
 - 1
}\Pcal_{2k}(H)} }\\
&\leq {\abs{\tr{H \rho}}\max_k \abs{\hat{f}_{2k}^{-1}f_{2k} - 1}}\\
&\leq \vabs{H}_{\infty}\max_k \abs{\hat{f}_{2k}^{-1}f_{2k} - 1}\\
&\leq \varepsilon_c \vabs{H}_{\infty}.
\end{align}
Here $\vabs{\cdot}_{\infty}$ denotes the spectral norm.  Suppose ${\abs{\hat f_{2k}-f_{2k}}}\leq \epsilon_{f_{2k}}$, then
\begin{align}
 \abs{\hat{f}_{2k}^{-1}f_{2k} - 1} &\leq  \frac{\abs{f_{2k} - \hat{f}_{2k}}}{f_{2k}-\abs{f_{2k} - \hat{f}_{2k}}} \\
&=\frac{\epsilon_{f_{2k}}}{\abs{f_{2k}} - \epsilon_{f_{2k}}}\\
&\leq \varepsilon_c.
\end{align}
Hence,
$\epsilon_{f_{2k}}\leq \frac{\abs{f_{2k}}\varepsilon_c}{1 + \varepsilon_c}$.
\end{proof}

With this lemma, we can further give the required number of samplings to bound the error resulting from the estimation of the noisy channel $\widehat{\Mcal}$.

\begin{theorem}
For any given unknown quantum state $\rho$ and observables $\cbra{H_i}_{i=1}^m$, with the number of calibration samplings 
\begin{equation}
R_c =\Ord{\frac{\Bcal_{\max}\sqrt{n} \ln n\ln (1/\delta_c)}{\Bcal_{\min}^2\varepsilon_c^2}},
\label{eq:R_c_fullEq}
\end{equation}
where $\Bcal_{\max }= \max_k \abs{\Bcal_k}$ and $\Bcal_{\min} = \min_k \abs{\Bcal_k}$. 
we can bound the error resulting in the estimation of channel $\widetilde{\Mcal}$ to 
\begin{equation}
    \abs{\tr{H_i\widehat{\Mcal}^{-1}\widetilde{\Mcal} (\rho) } - \tr{H_i\rho}}\leq \varepsilon_c \vabs{H_0}_\infty
    \label{eq:calibrate_samples}
\end{equation}
for any $i\in [m]$ with success probability $1 - \delta_c$ by running the calibration process in Algorithm 1 in the main file, where $H_0$ is the noiseless term of the observable $H_i$.
\label{thm:calibration_samples}
\end{theorem}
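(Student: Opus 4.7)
The plan is to reduce the theorem to a per-coefficient concentration bound on $\hat f_{2k}$ via Lemma~\ref{lem:calibrate_error_bound}, control the variance of a single sample using Lemma~\ref{lem:exp_fsquare_upper}, apply the MedianOfMeans guarantee of Lemma~\ref{lem:Median_of_means}, and finally take a union bound over $k\in\{0,\ldots,n\}$.

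First I would invoke Lemma~\ref{lem:calibrate_error_bound}: in order to achieve $|\tr{H_i\widehat\Mcal^{-1}\widetilde\Mcal(\rho)}-\tr{H_i\rho}|\le \varepsilon_c \|H_0\|_\infty$ for every $i\in[m]$, it suffices to ensure $|\hat f_{2k}-f_{2k}|\le |f_{2k}|\varepsilon_c/(1+\varepsilon_c)$ for each $k\in\{0,\ldots,n\}$ simultaneously. Call this required accuracy $\epsilon_{f_{2k}}$. Because Theorem~2 (in the main file) gives $\Ebb[\hat f_{2k}^{(j)}]=f_{2k}$, the per-sample estimators are unbiased, so applying Lemma~\ref{lem:Median_of_means} to the $R_c=N_cK_c$ samples with $N_c=34\,\Var{\hat f_{2k}^{(j)}}/\epsilon_{f_{2k}}^2$ and $K_c=2\ln(2(n+1)/\delta_c)$ guarantees $|\hat f_{2k}-f_{2k}|\le \epsilon_{f_{2k}}$ with failure probability at most $\delta_c/(n+1)$. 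A union bound over the $n+1$ values of $k$ then yields the required joint accuracy with overall failure probability at most $\delta_c$; the $\ln(1/\delta_c)$ factor in Eq.~\eqref{eq:R_c_fullEq} comes from $K_c$ and a single extra $\ln n$ factor absorbs the union bound.

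Next I would plug in the closed forms for $f_{2k}$ and $\Ebb[\hat f_{2k}^2]$. By Lemma~\ref{lem:abs_f_2k}, $f_{2k}={2n\choose 2k}^{-1}{n\choose k}\Bcal_k$, and by Lemma~\ref{lem:exp_fsquare_upper},
\begin{equation*}
\Var{\hat f_{2k}^{(j)}}\le \Ebb[\hat f_{2k}^2]={n\choose k}^{-2}\sum_{0\le l\le \min(k,n-k)}{2n\choose 2l,2k-2l,2l}_p^{-1}{n\choose l,k-l,l}_p^2\,\Bcal_{2l}.
\end{equation*}
Dividing by $f_{2k}^2$ and substituting $\epsilon_{f_{2k}}=|f_{2k}|\varepsilon_c/(1+\varepsilon_c)$, the required per-$k$ sample count is proportional to $\Ebb[\hat f_{2k}^2]/(f_{2k}^2\varepsilon_c^2)$, which factors as $\Bcal_{\max}/\Bcal_{\min}^2$ multiplied by a combinatorial ratio of multinomial coefficients. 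The main technical step is to show that this combinatorial ratio is $O(\sqrt{n})$ uniformly in $k$; I would do this by writing each multinomial as a ratio of factorials, using ${2n\choose 2k}/{n\choose k}^2\le \sqrt{n}$ (the central-binomial-coefficient estimate via Stirling), and bounding the inner sum over $l$ by the number of nonzero terms times the maximum summand, where the remaining combinatorial cancellations reduce to Vandermonde-type identities already exploited in Ref.~\cite{chen2021robust} and Ref.~\cite{wan2022matchgate}. Taking the maximum over $k$ then produces the $\sqrt{n}\,\Bcal_{\max}/\Bcal_{\min}^2$ scaling.

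Combining the variance bound with $N_c=O\!\left(\sqrt{n}\,\Bcal_{\max}/(\Bcal_{\min}^2\varepsilon_c^2)\right)$ and $K_c=O(\ln n\cdot\ln(1/\delta_c))$ (after folding the $\ln n$ from the union bound into $K_c$) yields the advertised $R_c=\widetilde{\mathcal O}\!\left(\sqrt{n}\,\Bcal_{\max}\ln(1/\delta_c)/(\Bcal_{\min}^2\varepsilon_c^2)\right)$, and the conclusion Eq.~\eqref{eq:calibrate_samples} follows directly from Lemma~\ref{lem:calibrate_error_bound}. The expected main obstacle is the combinatorial estimate of the variance-to-squared-mean ratio: verifying that the multiple sum over $l$ contributes only an $O(\sqrt n)$ factor rather than something larger in $k$, which is where the subtle cancellations between ${2n\choose 2l,2k-2l,2l}_p^{-1}$ and ${n\choose l,k-l,l}_p^2$ are essential.
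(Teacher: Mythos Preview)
Your overall architecture---reduce to per-coefficient accuracy via Lemma~\ref{lem:calibrate_error_bound}, bound $\Var{\hat f_{2k}}$ by $\Ebb[\hat f_{2k}^2]$ from Lemma~\ref{lem:exp_fsquare_upper}, divide by $f_{2k}^2$ from Lemma~\ref{lem:abs_f_2k}, and feed into Lemma~\ref{lem:Median_of_means}---matches the paper exactly. The discrepancy is in where the $\ln n$ factor actually comes from.

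You claim that $\Ebb[\hat f_{2k}^2]/f_{2k}^2=O(\sqrt n)$ uniformly in $k$, and then manufacture the $\ln n$ in Eq.~\eqref{eq:R_c_fullEq} from a union bound over $k$. But the combinatorial ratio is in fact $O(\sqrt n\,\ln n)$, not $O(\sqrt n)$: the paper sets
\[
g(l,k)=\frac{{2n\choose 2k}^2{n\choose l,k-l,l}_p^2}{{n\choose k}^4{2n\choose 2l,2k-2l,2l}_p},
\]
uses Stirling to show $g(l,k)\asymp \sqrt n\,\bigl(\tfrac{1}{l}+\tfrac{1}{k-l}\bigr)$ at the worst case $k=n/2$, and the harmonic sum $\sum_{l}1/l$ produces the $\ln n$. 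Your proposed route---``number of nonzero terms times the maximum summand''---would give roughly $k\cdot\sqrt n$ at $k=n/2$, i.e.\ $n^{3/2}$, which is far too large; and there are no Vandermonde-type cancellations here, only Stirling asymptotics applied termwise. So the $\ln n$ in the theorem is intrinsic to the variance ratio, not an artifact of the union bound. (The paper in fact takes $K_c=2\ln(2/\delta_c)$ without a union-bound factor; if you insist on the union bound over the $n{+}1$ values of $k$, you pick up an additional $\ln n$ beyond what the theorem states.)
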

By Lemma \ref{lem:calibrate_error_bound} and error bound of \textbf{MedianOfMeans}, we can get
\begin{equation}
    R_c = \frac{68 (1 + \varepsilon_c)^2}{\varepsilon_c}\max_k \frac{\Ebb\sbra{\hat{f}_{2k}^2}}{\abs{f_{2k}}^{2}}.
\end{equation}
By substituting the results of $\Ebb\sbra{\hat f_{2k}^2}, \abs{f_{2k}}$ in Lemma \ref{lem:abs_f_2k} and Lemma \ref{lem:exp_fsquare_upper} into Eq.~\eqref{eq:calibrate_samples}, and some tedious calculations, we can obtain Theorem~\ref{thm:calibration_samples}.

\begin{proof}[Proof of Theorem~\ref{thm:calibration_samples}.]
By the \textbf{MedianOfMeans} method, we have
\begin{align}
\Pr\sbra{\abs{\hat{f}_{k}-f_{k}}\geq \epsilon_{f_k}}\leq \delta_c
\end{align}
with $N_c=34\max_k \Var{\hat{f}_k}/\epsilon_{f_k}^2$ and $K_c=2\ln (2\delta_c^{-1})$ for any error $\epsilon_{f_k}$ and failure probability $\delta_c$. With Lemma \ref{lem:calibrate_error_bound}, let $\epsilon_{f_k} = \frac{\abs{f_{2k}}\varepsilon_c}{1+ \varepsilon_c}$, then with $R_c=K_cN_c$ calibration samplings we have $|\tr{H_i\widehat{\Mcal}^{-1} \widetilde{\Mcal}(\rho)} - \tr{H_i\rho}|\leq \varepsilon_c \vabs{H_0}_{\infty}$ for any $i\in [m]$ with success probability $1-\delta_c$.
Hence, we have
\begin{align}
N_c &= 34 \max_{k}\Var{\hat{f}_k}\frac{(1+\varepsilon_c)^2}{\abs{f_{2k}}^2 \varepsilon_c^2}\\
&\leq  34 \frac{(1+\varepsilon_c)^2}{ \varepsilon_c^2}  \max_{k} \frac{\Ebb\sbra{\abs{\hat{f}_{2k}}^2}}{\abs{f_{2k}}^2}\\
&\leq 34 \frac{(1+\varepsilon_c)^2}{ \varepsilon_c^2} \max_k  \sum_{0\leq l \leq \min(k,n-k)}   \Bcal_{2l}
\frac{{n \choose l,k-l,l}_p^2 {2n \choose 2k}^{2}}{{2n \choose 2l,2k-2l, 2l}_p{n \choose k}^{4} \abs{\Bcal_k}^2}.
\end{align}
Hence the total number of samplings equals $R_c = K_cN_c = 68 \frac{(1+\varepsilon_c)^2\ln (2/\delta_c)}{ \varepsilon_c^2} \max_k  \sum_{0\leq l \leq \min(k,n-k)}   
\frac{\Bcal_{2l}{n \choose l,k-l,l}_p^2 {2n \choose 2k}^{2}}{{2n \choose 2l,2k-2l, 2l}_p{n \choose k}^{4} \abs{\Bcal_k}^2}$. Since $k \in \cbra{0,n}$ will give us the trivial values of $R_c = 68 \frac{(1+\varepsilon_c)^2\ln (2/\delta_c)}{ \varepsilon_c^2} \frac{\Bcal_0}{\abs{\Bcal_{k}}^2}$ where $k\in \cbra{0,n}$. 
In the following, we assume $k\in [n-1]$, and utilize the technique in concrete mathematics~\cite{graham1989concrete} to simplify this bound.
By the Stirling formula, with the assumption that $n\ne 0$, we have
\begin{align}
\frac{(2n)!}{(n!)^2} = \Theta\left(\frac{2^{2n}}{\sqrt{\pi n}}\right).
\label{eq:stirling_formula}
\end{align}
Hence as long as $k\not\in\cbra{0,n}$, we have
\begin{align}
\frac{{2n \choose 2k}}{{n \choose k}^{2}} 
&= \frac{(2n)!}{(n!)^2}\cdot 
 \frac{(n!)^2}{(2k)!}
 \cdot 
 \frac{(n-k)!^2}{(2n-2k)!}\\
 &= \Theta\pbra{\sqrt{ k\left(1-\frac{k}{n}\right)}}.
\end{align}
Similarly, if $l\not\in\cbra{0,k,n-k}$,  then we can derive
\begin{align}
\frac{{n \choose l,k-l,l}_p^2 }{{2n \choose 2l,2k-2l, 2l}_p} &= \Theta\pbra{\sqrt{\frac{n}{l^2(l^2-nl +k(n-k))}}}\\
&= \Theta\pbra{\frac{1}{l}\sqrt{\frac{1}{(k-\frac{k^2}{n}) - (l-\frac{l^2}{n})}}}.
\end{align}
Let
\begin{align}
g(l,k) &:= \frac{{2n \choose 2k}^2{n \choose l,k-l,l}_p^2}{{n \choose k}^{4}{2n \choose 2l,2k-2l, 2l}_p} \\
&= {\frac{c(k -\frac{k^2}{n})}{l\sqrt{(k -\frac{k^2}{n})-(l-\frac{l^2}{n})}}} \\
&= \frac{c\sqrt{n}(k -\frac{k^2}{n})}{l\sqrt{l^2-nl + k(n-k)}}
\end{align}
for some constant $c$, where $l\in[n]\backslash\cbra{k,n-k}$.
Let  $h(z) = \frac{c}{l\sqrt{n}}\sqrt{\frac{z^2}{l^2-ln-z}}$, then $h(z)$ is monotonically increasing for $z\in (0,l(n-l))$. Note that if we let $z$ be $k(n-k)$ in the function $h$, then we get the function $g(l,k)$. Hence for $0 < k < \frac{n}{2}$, the function $g(l,k)$ is monotonically increasing with respect to $k$, while for $\frac{n}{2} < k < n-1$, the function is monotonically decreasing with respect to $k$. For simplification, here we assume $k$ is even, the result also holds for odd $k$ with similar calculations.
Let $k = \frac{n}{2}$ we have 
\begin{align}
g\left(l,\frac{n}{2}\right) &= \frac{c\sqrt{n}\cdot n/4}{l(\frac{n}{2}-l)} \\
&=\frac{c\sqrt{n}}{2}\pbra{\frac{1}{l} + \frac{1}{\frac{n}{2}-l}}.
\end{align}
Hence
\begin{align}
\sum_{0< l< n/2} g(l,n/2) &\leq c\sqrt{n} \sum_{l\leq n/2} \frac{1}{l}\\
&<c\sqrt{n}\pbra{\ln (n/2) + \gamma +\frac{1}{n}}\\
&<c'\sqrt{n}\ln n
\end{align}
where $\gamma \approx 0.5772$ is the Euler–Mascheroni constant, and $c'$ is a constant.

In contrast, when the input $l$ is either 0, $k$, or $n-k$, it is straightforward to verify that $g(l,k) = \frac{{2n\choose 2k}}{{n\choose k}^2}$ and it equals $\Theta(\sqrt{k\pbra{1-\frac{k}{n}}}) = \Ord{\sqrt{n}}$ if $k\not\in \cbra{0,n}$.  When $k$ takes the values $0$ or $n$, the function $g(l,k)$ equals $1$. Therefore, in conclusion, we can obtain the upper bound of $\max_k \sum_{l} g(l,k)=\Ord{\sqrt{n}\ln n}$.

Hence the number of calibration samplings
\begin{align}
R_c =\Ord{\frac{\Bcal_{\max}\sqrt{n}\ln n \ln (1/\delta_c)}{\Bcal_{\min}^2\varepsilon_c^2}}.
\end{align}
In the assumption that $\Bcal_k$ are constants for any $k\in [n]$, it can be simplified to
\begin{align}
R_c = \Ord{\frac{\ln (1/\delta_c)\sqrt{n}\ln n}{\varepsilon_c^2}}.
\end{align}
\end{proof}
Since for physical noise such as depolarizing noise, {generalized} amplitude damping noise, and $X$-rotation noise, $\Bcal_k$ are constants for any $k\in [n]$, the number of calibration data equals $\Ord{\frac{\sqrt{n}\ln n\ln (1/\delta_c)}{\varepsilon_c^2}}$.

\section*{Supplementary note 8}
\label{app:variance_simplification}
In this section, we will analyze the explicit values for the variances given by Lemma \ref{lem:variance_estimation} in some specific instances. Note that $\abs{\tr{\widetilde{\gamma}_{S_3}\widetilde{\gamma}_{S_1}\rho}}\leq \abs{\tr{\rho}}\vabs{\widetilde{\gamma}_{S_3}\widetilde{\gamma}_{S_1}}_2\leq 1$. Hence for different observables, we have:
\begin{itemize}
    \item [(0)] Calculate $\tr{\rho \widetilde\gamma_S}$ for an unknown quantum state $\rho$ and observable $\widetilde\gamma_S$ where $\abs{S}=2k$, $c$ is a constant.
By Lemma \ref{lem:variance_estimation}, we have
\begin{align}
\Var{\hat v}&\leq \frac{(1 + \varepsilon_c)^2}{2^{2n}} \sum_{0\leq l_1 + l_2 + l_3\leq n} \sbra{l_1= 0}\sbra{l_2= k}\sbra{l_3= 0} \frac{(-1)^{l_1+l_2+l_3} {2n\choose 2l_2} \Bcal_{0}}{\Bcal_{l_2}^2{n\choose l_2}}\tr{\widetilde{\gamma}_{S}^2}\tr{\widetilde{\gamma}_{S}\widetilde{\gamma}_{S}^{\dagger}}\\
&=\frac{(1 + \varepsilon_c)^2{2n\choose 2k}\Bcal_{k}^{-2}}{{n\choose k}}\\
&=\Ocal(\Bcal_{k}^{-2}n^{k}).
\end{align}
Combined with the results in Theorem~\ref{thm:est_samples} and Theorem~\ref{thm:calibration_samples}, we can calculate all of $k$-RDMs with the number of estimation samplings 
\begin{align}
R_e = \Ocal\pbra{\frac{k\ln(n/\delta_e)}{\Bcal_{k}^{2}\varepsilon_e^2}n^{k}},
\end{align}
and the number of calibration samplings
\begin{align}
R_c = \Ord{\frac{\Bcal_{\max}\sqrt{n} \ln n \ln (1/\delta_c)}{\Bcal_{\min}^2\varepsilon_c^2}},
\label{eq:rc_supp2}
\end{align}
 the estimation error can be bounded to $\varepsilon_e+ \varepsilon_c$.
 \item[(1)] 
Calculate the overlap between an $n$-qubit Gaussian state with density matrices $\cbra{\rho_{g_j}}_{j=1}^m$ and any $n$-qubit quantum state with density matrix $\rho$, denoted as $\cbra{\tr{\rho {\rho_{g_j}}}}_{j=1}^m$. In the following, we will prove that by choosing the observable $H = {\rho_g}$, the number of required estimation samplings equals
\begin{align}
R_e = \Ocal\pbra{\frac{\sqrt{n}\Bcal_{\max}\ln(m/\delta_e)}{\Bcal_{\min}^2\varepsilon_e^2}}.
\end{align}
It can be further simplified to $R_e = \Ocal\pbra{\frac{\sqrt{n}\ln(m/\delta_e)}{\varepsilon_e^2}}$ for the noise with constant average noise fidelities $\cbra{\Bcal_k}$.

By the definition of Gaussian state $\rho_g$ in Eq.~\eqref{eq:gaussian_state_rep}, for any given sets $S_1,S_2\in [2n]$ with cardinalities $|S_1|=2l_1$, and $|S_2|=2l_2$, we have
\begin{align}
\tr{\widetilde{\gamma}_{S_1} \widetilde{\gamma}_{S_2} \rho_g} &= \frac{1}{2^{n}} (-1)^{l_1+l_2} \mu_{S} [S\subseteq[n],D(S) = S_1 \cup S_2],\\
&\leq \frac{1}{2^{n}} [S\subseteq[n], D(S) = S_1 \cup S_2]
\label{eq:mu_approx}
\end{align}
where $\mu_S = \prod_{j\in S} \mu_j$. Inequality \eqref{eq:mu_approx} holds since $\mu_S\in [-1,1]$. Therefore, 
\begin{align}
    \sum_{
\substack{S_1,S_2,S_3\ \mathrm{ disjoint}\\
|S_i| = 2l_i,i\in [3]}
}\tr{\widetilde\gamma_{S_1} \widetilde\gamma_{S_2}\rho_g} \tr{\widetilde\gamma_{S_2} \widetilde\gamma_{S_3} {\rho_g}^\dagger} \tr{\widetilde\gamma_{S_3}\widetilde\gamma_{S_1} \rho }\leq 1.
\end{align}
When combined with Lemma \ref{lem:variance_estimation}, we can bound the variance as follows:
 \begin{align}
\Var{\hat{v}} &\leq \sum_{0\leq l_1 + l_2 + l_3\leq n} \frac{{n\choose l_1,l_2,l_3}_p^2{2n \choose 2l_1+2l_2}{2n\choose 2l_2+2l_3}\abs{\Bcal_{l_1+l_3}}}{2^{2n}{2n \choose 2l_1,2l_2,2l_3}_p{n\choose l_1 + l_2}{n\choose l_2 + l_3}\abs{\Bcal_{l_1+l_2}\Bcal_{l_2+l_3}}}\\
&\leq \max_{0\leq l_1+l_2+l_3\leq n} \frac{{n\choose l_1,l_2,l_3}_p{2n \choose 2l_1+2l_2}{2n\choose 2l_2+2l_3}\abs{\Bcal_{l_1+l_3}}}{{2n \choose 2l_1,2l_2,2l_3}_p{n\choose l_1 + l_2}{n\choose l_2 + l_3}\abs{\Bcal_{l_1+l_2}\Bcal_{l_2+l_3}}}\sum_{0\leq l_1 + l_2 + l_3\leq n} \frac{{n\choose l_1,l_2,l_3}_p}{2^{2n}}\\
&\leq c \max_{1\leq l_1+l_2+l_3\leq n-1}\sqrt{\frac{n}{l_1l_2l_3(n-l_1-l_2-l_3)}}\cdot \sqrt{\frac{(l_1 + l_2)(n-l_1-l_2)}{n}} \cdot \sqrt{\frac{(l_2 + l_3)(n-l_2 - l_3)}{n}}\frac{\Bcal_{\max}}{\Bcal_{\min}^2}
\label{eq:simplify_choose}\\
& \leq c'\sqrt{n}\frac{\Bcal_{\max}}{\Bcal_{\min}^2},
\label{eq:simple_max_l}
 \end{align}
 where $\Bcal_{\max} = \max_k \abs{\Bcal_k}$ and $\Bcal_{\min} = \min_k \abs{\Bcal_k}$, $c,c'$ are constants. Eq.~\eqref{eq:simplify_choose} holds 
 since
\begin{align}
\frac{{2n \choose 2k}}{{n \choose k}^{2}} 
&= \frac{(2n)!}{(n!)^2}\cdot 
 \frac{(n!)^2}{(2k)!}
 \cdot 
 \frac{(n-k)!^2}{(2n-2k)!}\\
 &= \Theta\pbra{\sqrt{ k(1-\frac{k}{n})}},
\end{align}
if $k\not\in \cbra{0,n}$ by the Stirling formula~\cite{graham1989concrete}. Eq.~\eqref{eq:simple_max_l} holds by noting that the maximum value is obtained by choosing $l_1, l_3 $ in the order of $ \Theta\pbra{1}$, and $l_2 = \Theta\pbra{n}$. Hence the number of required estimation samplings 
\begin{align}
R_e = \Ocal\pbra{\frac{\sqrt n\Bcal_{\max}\ln(m/\delta_e)}{\Bcal_{\min}^2\varepsilon_e^2}}.
\end{align}
 \item[(2)] Calculate the overlap between $\tau$-fermionic Slater determinants $\cbra{\ket{\phi_{\tau_j}}}_{j=1}^m$ and any quantum pure state $\ket{\psi}$, denoted as $\cbra{\braket{\psi|\phi_{\tau_j}}}_{j=1}^m$. 
 Let the initial input state of the quantum device be $\rho = \frac{(\ket{\bm 0} + \ket{\psi})(\bra{\bm 0} + \bra{\psi})}{2}$, and observable $H = \ketbra{\phi_\tau}{\bm 0}$, the number of required estimation samplings
 \begin{align}
R_e = \Ocal\pbra{\frac{\sqrt{n}\ln n \Bcal_{\max}\ln(m/\delta_e)}{\Bcal_{\min}^2\varepsilon_e^2}},
 \end{align}
 which equals $\Ocal\pbra{\frac{\sqrt{n}\ln n \ln(m/\delta_e)}{\varepsilon_e^2}}$ if the average fidelities of noise $\Bcal_k$ are constants for any $k\in \cbra{0,\ldots, n}$.

 Let 
 \begin{align}
L := \max_{0\leq l_1+l_2+l_3\leq n} \frac{{n\choose l_1,l_2,l_3}_p{2n \choose 2l_1+2l_2}{2n\choose 2l_2+2l_3}\abs{\Bcal_{l_1+l_3}}}{{2n \choose 2l_1,2l_2,2l_3}_p{n\choose l_1 + l_2}{n\choose l_2 + l_3}\abs{\Bcal_{l_1+l_2}\Bcal_{l_2+l_3}}}\leq \frac{c\sqrt{n}\Bcal_{\max}}{\Bcal_{\min}^2}.
 \end{align}
 By Lemma \ref{lem:variance_estimation} and the properties of Slater determinant, we have
 \begin{align}
\Var{\hat v} &\leq \frac{L}{2^{2n}}\sum_{0\leq l_1+l_2+l_3\leq n}\sum_{
\substack{S_1,S_2,S_3\text{ disjoint}\\
|S_i| = 2l_i,i\in [3]}
} \tr{\widetilde\gamma_{S_1}\widetilde\gamma_{S_2}\ket{\phi_{\tau}}\bra{\bm 0} } \tr{\widetilde\gamma_{S_1}\widetilde\gamma_{S_2}\ket{\bm 0}\bra{\phi_{\tau}} }
\\&\leq  
\frac{L}{2^{2n}}\sum_{0\leq l_1 + l_2 + l_3\leq n}  \sum_{j = 0}^{\tau/2}{\tau \choose 2j}{n-\tau \choose l_1-\tau/2 + j, l_2-\tau/2 + j, l_3 -j, n-l_1 -l_2 -l_3-j}
\\
    & = \frac{L}{2^{2n}} \sum_{0\leq l_1 + l_2 + l_3\leq n} \sum_{j = J_{\min}}^{J_{\max}} {\tau \choose 2j}{n-\tau \choose l_1-\tau/2 + j, l_2-\tau/2 + j, l_3 -j, n-l_1 -l_2 -l_3-j}\\
& \leq \frac{L}{2^{2n}} \sum_{j = 0}^{\tau/2} {\tau \choose 2j}\sum_{0\leq l_1' + l_2' + l_3'\leq n-\tau} {n-\tau\choose l_1',l_2',l_3'}_p\\
& \leq \frac{L}{2^{2n}} \sum_{j = 0}^{\tau/2} 2^\tau 4^{n-\tau}
\label{eq:var_slater_deter_dev}\\
&\leq \frac{c \sqrt{n}\ln n\Bcal_{\max}}{\Bcal_{\min}^2}
 \end{align}
 where $J_{\min} = \max(\tau/2-l_1, \tau/2-l_2, l_3+\tau-n,l_1+l_2+l_3-n,0)$, and $J_{\max} = \max(n -\tau/2-l_1, n -\tau/2 - l_2, l_3,n-l_1-l_2-l_3,\tau/2)$ for some constant $c$, 
 where $\Bcal_{\max} = \max_k \abs{\Bcal_k}$ and $\Bcal_{\min} = \min_k \abs{\Bcal_k}$.
 Eq.~\eqref{eq:var_slater_deter_dev} holds since 
 \begin{align}
 \sum_{0\leq j_1+j_2+j_3\leq n-\tau}{n-\tau\choose j_1,j_2,j_3,n-\tau -j_1-j_2-j_3}= 4^{n-\tau}    
 \end{align}
 for any $j_1,j_2,j_3\in\cbra{0,\ldots, n-\tau}$, and $\sum_{j = 0}^{\tau/2} {\tau \choose 2j} \leq 2^\tau$.
 Hence the number of required estimation samplings
 \begin{align}
R_e = \Ocal\pbra{\frac{\sqrt{n}\ln n\Bcal_{\max}\ln(m/\delta_e)}{\Bcal_{\min}^2\varepsilon_e^2}}.
 \end{align}
\end{itemize}

\begin{figure}[t]
    \centering
    \includegraphics[width = 1.0\textwidth]{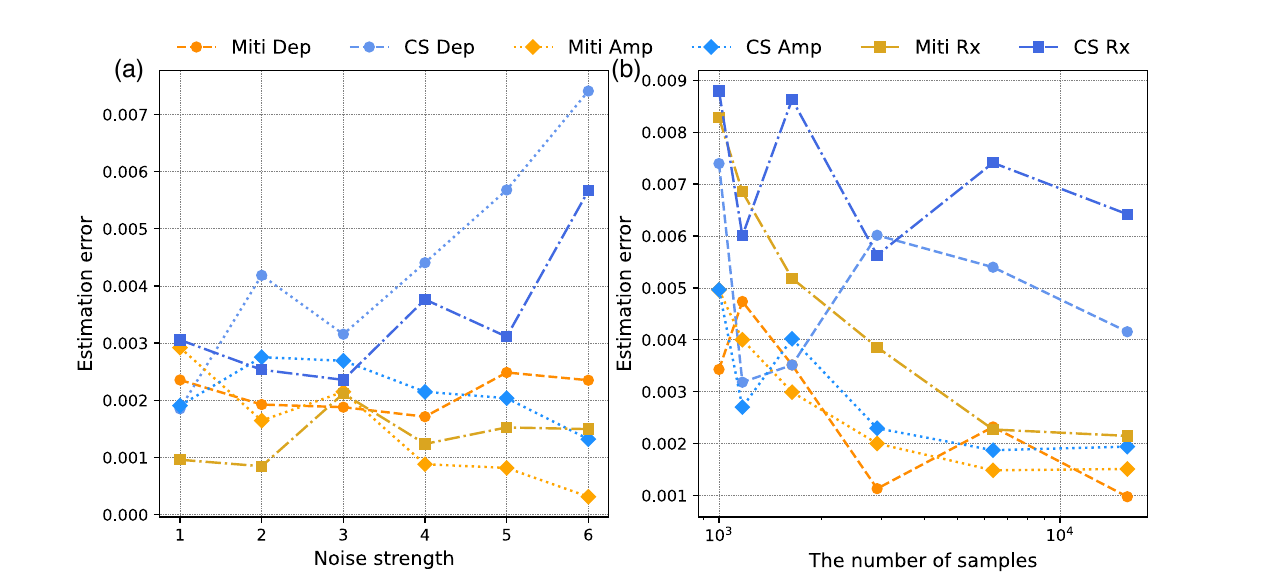}
    \caption{The estimation errors for the fidelities for the Gaussian state with an unknown quantum state change as (a) the increase of the noise strength and (b) the number of samples increases for a fixed noise parameter. The error $\varepsilon = \sqrt{\sum_{i = 1}^R (\hat{v}_i-\tr{\rho \widetilde{\gamma}_S})^2/R}$ of the estimation is obtained by repeating the procedure $R = 4$ rounds for (a) and $R=10$ for (b).
    }
    \label{fig:Gaussian}
\end{figure}

\section*{Supplementary note 9}
\label{app:numericals}
In this section, we give the numerical results for the calculation of (1) the overlap between a Gaussian state and any quantum state, and (2) the inner product between a Slater determinant and any pure state.

\begin{itemize}
    \item[(1)]{Mitigated estimation for the fidelities with fermionic Gaussian states}
    Let the observable $H $ be a $4$-qubit Gaussian state ${\rho_g} = \prod_{k = 1}^n\frac{(\Ibb - i\mu_k \widetilde{\gamma}_{2k-1}\widetilde{\gamma}_{2k})}{2}$, where $\cbra{\mu_k}_{k = 1}^n$ are uniformly randomly chosen from $[0,1]$.
We use the same noise settings for the expectation value of $H = {\rho_g}$ and the same number of calibration samplings as in the main file.
The numerical results are depicted in Fig.~\ref{fig:Gaussian}.  The number of estimation samplings 
 is set as $N = N_e K_e = 4000\times 5$ for matchgate CS, and $N_e = 4000/(1-p_{\text{noise}})$ for error-mitigated CS, as shown in Fig.~\ref{fig:Gaussian} (a), where $p_{\text{noise}}$ is the noise strength. Fig.~\ref{fig:Gaussian} (b) depicts the change of the estimation errors with the number of estimation samplings $N_e$ with the fixed noise parameters, where depolarizing noise parameter is as the fourth noise strength of Fig.~\ref{fig:Gaussian} (a), and the sixth  noise strength for both amplitude-damping and $X$-rotation noises. 
\item [(2)]{Mitigated estimation for overlaps with Slater determinants}

To numerically estimate the overlap between a pure state $\ket{\psi}$ and a $\tau$-Slater determinant $\ket{\phi_\tau}$, we uniformly randomly choose a normalized $3$-qubit pure state $\ket{\psi} = U_\psi\ket{0^3}$ and a unitary $U\in \Cbb^{3\times 3}$. $\tau$ is chosen as $1$. The initial state is generated as $\rho = \frac{(\ket{0^4}+\ket{1}\ket{\psi})(\bra{0^4}+\bra{1}\bra{\psi})}{2}$ with control-$U_\psi$ operation to the state $\ket{0^4}$. The observable  $H$ is chosen as $\ket{1}\ket{\phi_\tau}\bra{0^4} = \ket{1}\tilde{b}_1^\dagger\ket{0^3}\bra{0^4}$, where $\tilde{b}_1 = \sum_j V_{1j} b_j$ and unitary $V$ is randomly chosen.
We depict the numerical results in Fig.~\ref{fig:Slater}. The noise settings are the same as in Fig.~\ref{fig:Gaussian}. The number of estimation samplings 
 is set as $N = N_e K_e = 2000\times 5$ for matchgate CS, and $N_e = 2000/(1-p_{\text{noise}})$ for error-mitigated CS, as shown in Fig.~\ref{fig:Slater} (a), where $p_{\text{noise}}$ is the noise strength. The noise strength for 
Fig.~\ref{fig:Slater} (b) are the fourth, sixth, and sixth for depolarizing, {generalized} amplitude damping, and $X$-rotation noises respectively.
 \begin{figure}[t]
    \centering
    \includegraphics[width = 1.0\textwidth]{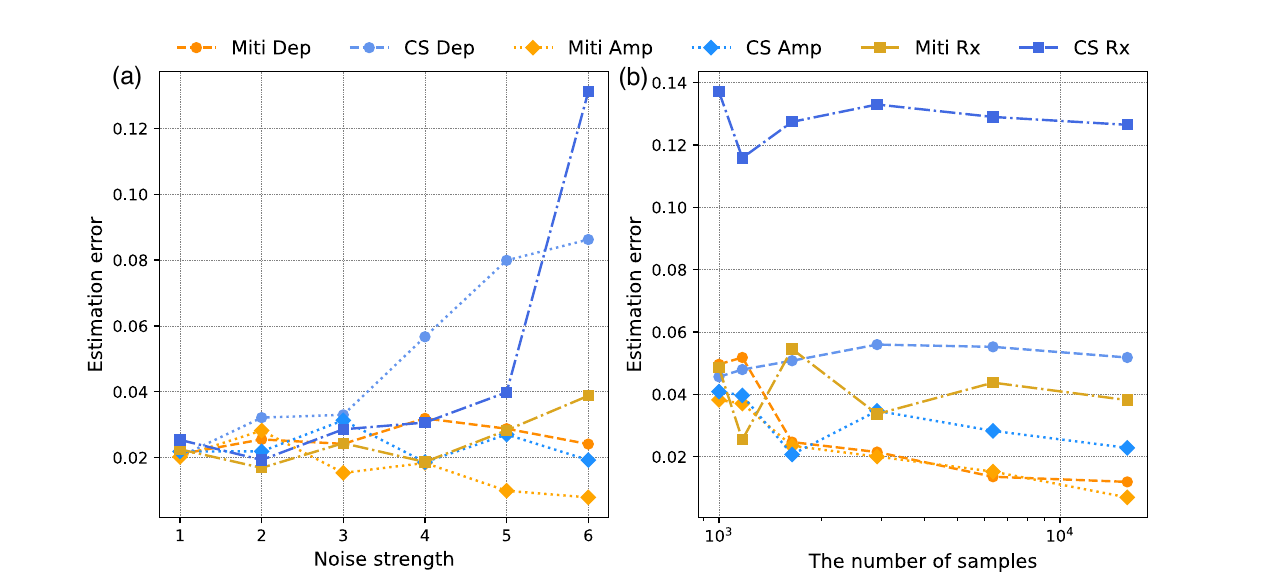}
    \caption{
    The estimation errors for the inner product of the $\tau$-Slater determinant and a pure state change as (a) the increase of the noise strength and (b) the number of samples increases for a fixed noise parameter. The error $\varepsilon = \sqrt{\sum_{i = 1}^R (\hat{v}_i-\tr{\rho \widetilde{\gamma}_S})^2/R}$ of the estimation is obtained by repeating the procedure $R = 4$ rounds.}
    \label{fig:Slater}
\end{figure}
\end{itemize}

\section*{Supplementary note 10}
\label{app:AveFidelity}

{
Average noise fidelity is proposed in randomized benchmarking \cite{Magesan2011scalable}, denoted as $F_{\text{avg}} =\int_{\psi} \braket{\psi |\Lambda(\ket{\psi}\bra{\psi}) |\psi}$. It is easy to check that $F_{\text{avg}} = 1 - (1-\frac{1}{2^n})p$ for depolarizing noise $\Lambda_{\text{d}}(A) = (1-p)(\cdot) + p\tr{A}\frac{\Ibb}{2^n}$. Since the expression of $F_{\text{avg}}$ for $\Lambda_{\text{a}}$ and $\Lambda_{\text{r}}$ is more intricate, we solely provide the explicit representation for a single-qubit system here. It has been demonstrated that the set $e =\{\ket{0},\ket{1},\ket{+},\ket{-},\ket{i},\ket{-i}\}$ constitutes a 2-design in the single-qubit state space~\cite{renes2004symmetric}. Therefore, for a single qubit system, $F_{\text{avg}} = \frac{1}{6}\sum_{\ket{\psi}\in e} \braket{\psi |\Lambda(\ket{\psi}\bra{\psi}) |\psi}$. This yields the values for $F_{\text{avg}}$ corresponding to the respective noises in Table 2 of the main file. With the expression values in Table 2 of the main file, we see that these quantities are closely aligned, and $\Bcal_1$ is slightly smaller than $F_{\text{avg}}$ and $F_Z$ for noise model $\Lambda_{\text{d}},\Lambda_{\text{a}}$. 

We plot the comparison of average fidelity $F_{\text{avg}}$, $Z$-basis average fidelity~\cite{chen2021robust} $F_Z$ and average fidelity in $\Gamma_2$ subspace $\Bcal_1$ for $X$-rotation noise for $\theta$ varies in $[0,2\pi]$ in Fig. \ref{fig:ave_fidelities}.
It shows that $\Bcal_1$ is much more subtle to noise.
}

\begin{figure}[htbp]
    \centering
    \includegraphics[width = 0.6\textwidth]{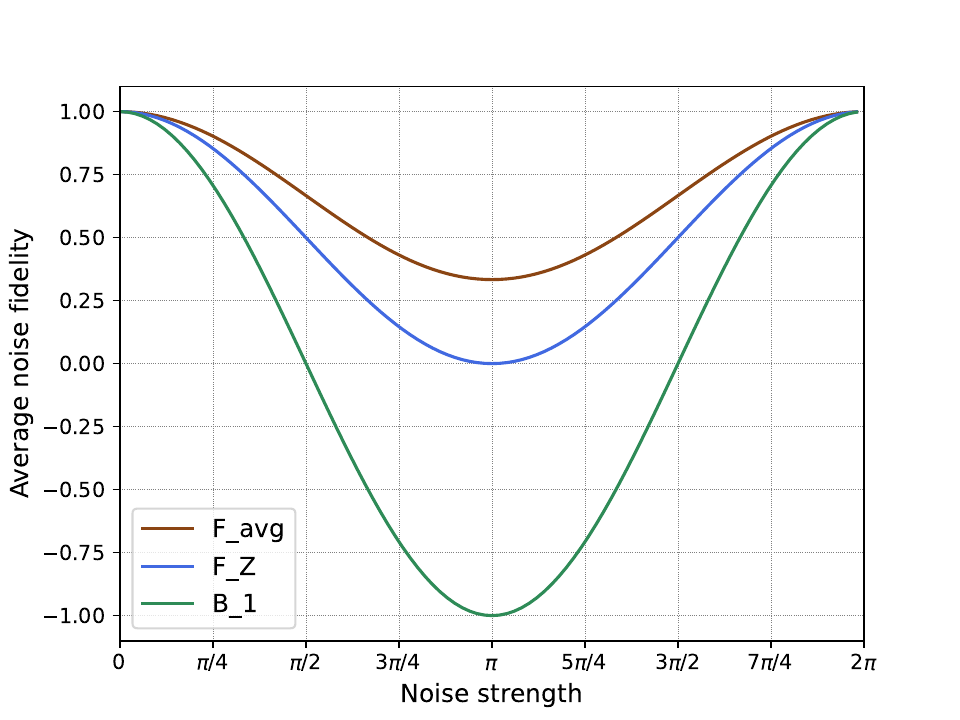}
    \caption{The average fidelities $F_{\text{avg}},F_Z$ and $\Bcal_1$ for $X$-rotation noise with noise range in $[0,2\pi]$.}
    \label{fig:ave_fidelities}
\end{figure}

\def\bibsection{}  

\centerline{ \textbf{SUPPLEMENTARY REFERENCES} }
\bigbreak

\begin{thebibliography}{10}
\expandafter\ifx\csname url\endcsname\relax
  \def\url#1{\texttt{#1}}\fi
\expandafter\ifx\csname urlprefix\endcsname\relax\def\urlprefix{URL }\fi
\providecommand{\bibinfo}[2]{#2}
\providecommand{\eprint}[2][]{\url{#2}}

\bibitem{CramerQuantumChemistry}
\bibinfo{author}{Cramer, C.~J.}
\newblock \emph{\bibinfo{title}{Essentials of computational chemistry: Theories
  and models}} (\bibinfo{publisher}{John Wiley and Sons},
  \bibinfo{address}{Chichester}, \bibinfo{year}{2002}).

\bibitem{Sachdev}
\bibinfo{author}{Sachdev, S.}
\newblock \emph{\bibinfo{title}{Quantum Phase Transitions}}
  (\bibinfo{publisher}{Cambridge University Press},
  \bibinfo{address}{Cambridge}, \bibinfo{year}{2011}).

\bibitem{QuantumMaterials}
\bibinfo{author}{Kaxiras, E.} \& \bibinfo{author}{Joannopoulos, J.~D.}
\newblock \emph{\bibinfo{title}{Quantum Theory of Materials}}
  (\bibinfo{publisher}{Cambridge University Press}, \bibinfo{year}{2019}).

\bibitem{schwerdtfeger2011testing}
\bibinfo{author}{Schwerdtfeger, C.~A.}, \bibinfo{author}{DePrince~III, A.~E.}
  \& \bibinfo{author}{Mazziotti, D.~A.}
\newblock \bibinfo{title}{Testing the parametric two-electron
  reduced-density-matrix method with improved functionals: Application to the
  conversion of hydrogen peroxide to oxywater}.
\newblock \emph{\bibinfo{journal}{The Journal of chemical physics}}
  \textbf{\bibinfo{volume}{134}}, \bibinfo{pages}{174102}
  (\bibinfo{year}{2011}).

\bibitem{Peterson13More}
\bibinfo{author}{Peterson, M.~R.} \& \bibinfo{author}{Nayak, C.}
\newblock \bibinfo{title}{More realistic Hamiltonians for the fractional
  quantum hall regime in GaAs and graphene}.
\newblock \emph{\bibinfo{journal}{Physical Review B}} \textbf{\bibinfo{volume}{87}},
  \bibinfo{pages}{245129} (\bibinfo{year}{2013}).

\bibitem{Variational}
\bibinfo{author}{Cerezo, M.} \emph{et~al.}
\newblock \bibinfo{title}{Variational quantum algorithms}.
\newblock \emph{\bibinfo{journal}{Nature Reviews Physics}}
  \textbf{\bibinfo{volume}{3}}, \bibinfo{pages}{625-644}
  (\bibinfo{year}{2021}).

\bibitem{Parrish19Quantum}
\bibinfo{author}{Parrish, R.~M.}, \bibinfo{author}{Hohenstein, E.~G.},
  \bibinfo{author}{McMahon, P.~L.} \& \bibinfo{author}{Mart\'{\i}nez, T.~J.}
\newblock \bibinfo{title}{Quantum computation of electronic transitions using a
  variational quantum eigensolver}.
\newblock \emph{\bibinfo{journal}{Physical Review Letters}}
  \textbf{\bibinfo{volume}{122}}, \bibinfo{pages}{230401}
  (\bibinfo{year}{2019}).

\bibitem{Takeshita20Increasing}
\bibinfo{author}{Takeshita, T.} \emph{et~al.}
\newblock \bibinfo{title}{Increasing the representation accuracy of quantum
  simulations of chemistry without extra quantum resources}.
\newblock \emph{\bibinfo{journal}{Physical Review X}} \textbf{\bibinfo{volume}{10}},
  \bibinfo{pages}{011004} (\bibinfo{year}{2020}).

\bibitem{huang2020predicting}
\bibinfo{author}{Huang, H.-Y.}, \bibinfo{author}{Kueng, R.} \&
  \bibinfo{author}{Preskill, J.}
\newblock \bibinfo{title}{Predicting many properties of a quantum system from
  very few measurements}.
\newblock \emph{\bibinfo{journal}{Nature Physics}}
  \textbf{\bibinfo{volume}{16}}, \bibinfo{pages}{1050--1057}
  (\bibinfo{year}{2020}).

\bibitem{hadfield2022measurements}
\bibinfo{author}{Hadfield, C.}, \bibinfo{author}{Bravyi, S.},
  \bibinfo{author}{Raymond, R.} \& \bibinfo{author}{Mezzacapo, A.}
\newblock \bibinfo{title}{Measurements of quantum {H}amiltonians with
  locally-biased classical shadows}.
\newblock \emph{\bibinfo{journal}{Communications in Mathematical Physics}}
  \textbf{\bibinfo{volume}{391}}, \bibinfo{pages}{951--967}
  (\bibinfo{year}{2022}).

\bibitem{Huang21efficient}
\bibinfo{author}{Huang, H.-Y.}, \bibinfo{author}{Kueng, R.} \&
  \bibinfo{author}{Preskill, J.}
\newblock \bibinfo{title}{Efficient estimation of Pauli observables by
  derandomization}.
\newblock \emph{\bibinfo{journal}{Physical Review Letters}}
  \textbf{\bibinfo{volume}{127}}, \bibinfo{pages}{030503}
  (\bibinfo{year}{2021}).

\bibitem{Wu23overlapped}
\bibinfo{author}{Wu, B.}, \bibinfo{author}{Sun, J.}, \bibinfo{author}{Huang,
  Q.} \& \bibinfo{author}{Yuan, X.}
\newblock \bibinfo{title}{Overlapped grouping measurement: A unified framework
  for measuring quantum states}.
\newblock \emph{\bibinfo{journal}{Quantum}} \textbf{\bibinfo{volume}{7}},
  \bibinfo{pages}{896} (\bibinfo{year}{2023}).

\bibitem{hadfield2021adaptive}
\bibinfo{author}{Hadfield, C.}
\newblock \bibinfo{title}{Adaptive Pauli shadows for energy estimation}
\newblock \emph{\bibinfo{journal}
{arXiv preprint arXiv:2105.12207}} (\bibinfo{year}{2021}).

\bibitem{hu2021classical}
\bibinfo{author}{Hu, H.-Y.}, \bibinfo{author}{Choi, S.} \&
  \bibinfo{author}{You, Y.-Z.}
\newblock \bibinfo{title}{Classical shadow tomography with locally scrambled
  quantum dynamics}.
\newblock \emph{\bibinfo{journal}{Physical Review Research}}
  \textbf{\bibinfo{volume}{5}}, \bibinfo{pages}{023027} (\bibinfo{year}{2023}).

\bibitem{archarya2021shadow}
\bibinfo{author}{Acharya, A.}, \bibinfo{author}{Saha, S.} \&
  \bibinfo{author}{Sengupta, A.~M.}
\newblock \bibinfo{title}{Shadow tomography based on informationally complete
  positive operator-valued measure}.
\newblock \emph{\bibinfo{journal}{Physical Review A}}
  \textbf{\bibinfo{volume}{104}}, \bibinfo{pages}{052418}
  (\bibinfo{year}{2021}).

\bibitem{bu2022classical}
\bibinfo{author}{Bu, K.}, \bibinfo{author}{Koh, D.~E.},
  \bibinfo{author}{Garcia, R.~J.} \& \bibinfo{author}{Jaffe, A.}
\newblock \bibinfo{title}{Classical shadows with Pauli-invariant unitary
  ensembles}.
\newblock \emph{\bibinfo{journal}{npj Quantum Information}}
  \textbf{\bibinfo{volume}{10}}, \bibinfo{pages}{6} (\bibinfo{year}{2024}).

\bibitem{grier2022sample}
\bibinfo{author}{Grier, D.}, \bibinfo{author}{Pashayan, H.} \&
  \bibinfo{author}{Schaeffer, L.}
\newblock \bibinfo{title}{Sample-optimal classical shadows for pure states}.
\newblock \emph{\bibinfo{journal}{arXiv preprint arXiv:2211.11810}}
  (\bibinfo{year}{2022}).

\bibitem{ippoliti2023classical}
\bibinfo{author}{Ippoliti, M.}
\newblock \bibinfo{title}{Classical shadows based on locally-entangled
  measurements}.
\newblock \emph{\bibinfo{journal}{Quantum}}
\textbf{\bibinfo{volume}{8}},
\bibinfo{pages}{1293}
  (\bibinfo{year}{2024}).

\bibitem{zhou2023performance}
\bibinfo{author}{Zhou, Y.} \& \bibinfo{author}{Liu, Q.}
\newblock \bibinfo{title}{Performance analysis of multi-shot shadow
  estimation}.
\newblock \emph{\bibinfo{journal}{{Quantum}}} \textbf{\bibinfo{volume}{7}},
  \bibinfo{pages}{1044} (\bibinfo{year}{2023}).

\bibitem{Garcia2021Quantum}
\bibinfo{author}{Garcia, R.~J.}, \bibinfo{author}{Zhou, Y.} \&
  \bibinfo{author}{Jaffe, A.}
\newblock \bibinfo{title}{Quantum scrambling with classical shadows}.
\newblock \emph{\bibinfo{journal}{Physical Review Research}}
  \textbf{\bibinfo{volume}{3}}, \bibinfo{pages}{033155} (\bibinfo{year}{2021}).

\bibitem{zhou2023hybrid}
\bibinfo{author}{Zhou, Y.} \& \bibinfo{author}{Liu, Z.}
\newblock \bibinfo{title}{A hybrid framework for estimating nonlinear functions
  of quantum states}.  
  \emph{\bibinfo{journal}{arXiv preprint arXiv:2208.08416}} (\bibinfo{year}{2023}).


\bibitem{jordan1993paulische}
\bibinfo{author}{Jordan, P.} \& \bibinfo{author}{Wigner, E.~P.}
\newblock \emph{\bibinfo{title}{{\"U}ber das paulische {\"a}quivalenzverbot}}
  (\bibinfo{publisher}{Springer}, \bibinfo{year}{1993}).

\bibitem{nielsen2005fermionic}
\bibinfo{author}{Nielsen, M.~A.}
\newblock \bibinfo{title}{The fermionic canonical commutation relations and the
  {J}ordan-{W}igner transform}.
  \newblock \emph{\bibinfo{journal}{School of Physical Sciences The University of Queensland}}
  \textbf{\bibinfo{volume}{59}}, \bibinfo{pages}{75}
  (\bibinfo{year}{2005}).

\bibitem{bravyi2002fermionic}
\bibinfo{author}{Bravyi, S.~B.} \& \bibinfo{author}{Kitaev, A.~Y.}
\newblock \bibinfo{title}{Fermionic quantum computation}.
\newblock \emph{\bibinfo{journal}{Annals of Physics}}
  \textbf{\bibinfo{volume}{298}}, \bibinfo{pages}{210--226}
  (\bibinfo{year}{2002}).

\bibitem{tranter2015b}
\bibinfo{author}{Tranter, A.} \emph{et~al.}
\newblock \bibinfo{title}{The Bravyi--Kitaev transformation: Properties and applications}.
\newblock \emph{\bibinfo{journal}{International Journal of Quantum Chemistry}}
  \textbf{\bibinfo{volume}{115}}, \bibinfo{pages}{1431--1441}
  (\bibinfo{year}{2015}).

\bibitem{Zhao21Fermionic}
\bibinfo{author}{Zhao, A.}, \bibinfo{author}{Rubin, N.~C.} \&
  \bibinfo{author}{Miyake, A.}
\newblock \bibinfo{title}{Fermionic partial tomography via classical shadows}.
\newblock \emph{\bibinfo{journal}{Physical Review Letters}}
  \textbf{\bibinfo{volume}{127}}, \bibinfo{pages}{110504}
  (\bibinfo{year}{2021}).

\bibitem{low2022classical}
\bibinfo{author}{Low, G.~H.}
\newblock \bibinfo{title}{Classical shadows of fermions with particle number
  symmetry}
  \newblock
  \emph{\bibinfo{journal}{{arXiv preprint arXiv:2208.08964}}}
  (\bibinfo{year}{2022}).

\bibitem{wan2022matchgate}
\bibinfo{author}{Wan, K.}, \bibinfo{author}{Huggins, W.~J.},
  \bibinfo{author}{Lee, J.} \& \bibinfo{author}{Babbush, R.}
\newblock \bibinfo{title}{Matchgate shadows for fermionic quantum simulation}.
\newblock \emph{\bibinfo{journal}{Communications in Mathematical Physics}}
  \textbf{\bibinfo{volume}{404}}, \bibinfo{pages}{629--700}
  (\bibinfo{year}{2023}).

\bibitem{chen2021robust}
\bibinfo{author}{Chen, S.}, \bibinfo{author}{Yu, W.}, \bibinfo{author}{Zeng,
  P.} \& \bibinfo{author}{Flammia, S.~T.}
\newblock \bibinfo{title}{Robust shadow estimation}.
\newblock \emph{\bibinfo{journal}{PRX Quantum}} \textbf{\bibinfo{volume}{2}},
  \bibinfo{pages}{030348} (\bibinfo{year}{2021}).

\bibitem{koh2022classical}
\bibinfo{author}{Koh, D.~E.} \& \bibinfo{author}{Grewal, S.}
\newblock \bibinfo{title}{Classical shadows with noise}.
\newblock \emph{\bibinfo{journal}{Quantum}} \textbf{\bibinfo{volume}{6}},
  \bibinfo{pages}{776} (\bibinfo{year}{2022}).

\bibitem{jnane2023quantum}
\bibinfo{author}{Jnane, H.}, \bibinfo{author}{Steinberg, J.},
  \bibinfo{author}{Cai, Z.}, \bibinfo{author}{Nguyen, H.~C.} \&
  \bibinfo{author}{Koczor, B.}
\newblock \bibinfo{title}{Quantum error mitigated classical shadows}.
\newblock \emph{\bibinfo{journal}{PRX Quantum}}
\textbf{\bibinfo{volume}{5}},
\bibinfo{pages}{010324}
  (\bibinfo{year}{2023}).

\bibitem{valiant2002Expressiveness}
\bibinfo{author}{Valiant, L.~G.}
\newblock \bibinfo{title}{Expressiveness of matchgates}.
\newblock \emph{\bibinfo{journal}{Theoretical Computer Science}}
  \textbf{\bibinfo{volume}{289}}, \bibinfo{pages}{457--471}
  (\bibinfo{year}{2002}).

\bibitem{knill2001fermionic}
\bibinfo{author}{Knill, E.}
\newblock \bibinfo{title}{Fermionic linear optics and matchgates}.
\newblock \emph{\bibinfo{journal}{arXiv preprint quant-ph/0108033}}
  (\bibinfo{year}{2001}).

\bibitem{Terhal2002Classical}
\bibinfo{author}{Terhal, B.~M.} \& \bibinfo{author}{DiVincenzo, D.~P.}
\newblock \bibinfo{title}{Classical simulation of noninteracting-fermion
  quantum circuits}.
\newblock \emph{\bibinfo{journal}{Physical Review A}} \textbf{\bibinfo{volume}{65}},
  \bibinfo{pages}{032325} (\bibinfo{year}{2002}).

\bibitem{zhu2016clifford}
\bibinfo{author}{Zhu, H.}, \bibinfo{author}{Kueng, R.},
  \bibinfo{author}{Grassl, M.} \& \bibinfo{author}{Gross, D.}
\newblock \bibinfo{title}{The clifford group fails gracefully to be a unitary
  4-design}.
\newblock \emph{\bibinfo{journal}{arXiv preprint arXiv:1609.08172}}
  (\bibinfo{year}{2016}).

\bibitem{helsen2019new}
\bibinfo{author}{Helsen, J.}, \bibinfo{author}{Xue, X.},
  \bibinfo{author}{Vandersypen, L.~M.} \& \bibinfo{author}{Wehner, S.}
\newblock \bibinfo{title}{A new class of efficient randomized benchmarking
  protocols}.
\newblock \emph{\bibinfo{journal}{npj Quantum Information}}
  \textbf{\bibinfo{volume}{5}}, \bibinfo{pages}{1--9} (\bibinfo{year}{2019}).

\bibitem{divincenzo2005fermionic}
\bibinfo{author}{DiVincenzo, D.~P.} \& \bibinfo{author}{Terhal, B.~M.}
\newblock \bibinfo{title}{Fermionic linear optics revisited}.
\newblock \emph{\bibinfo{journal}{Foundations of Physics}}
  \textbf{\bibinfo{volume}{35}}, \bibinfo{pages}{1967--1984}
  (\bibinfo{year}{2005}).

\bibitem{Bonet20Nearly}
\bibinfo{author}{Bonet-Monroig, X.}, \bibinfo{author}{Babbush, R.} \&
  \bibinfo{author}{O'Brien, T.~E.}
\newblock \bibinfo{title}{Nearly optimal measurement scheduling for partial
  tomography of quantum states}.
\newblock \emph{\bibinfo{journal}{Physical Review X}} \textbf{\bibinfo{volume}{10}},
  \bibinfo{pages}{031064} (\bibinfo{year}{2020}).

\bibitem{wimmer2012algorithm}
\bibinfo{author}{Wimmer, M.}
\newblock \bibinfo{title}{Algorithm 923: Efficient numerical computation of the
  pfaffian for dense and banded skew-symmetric matrices}.
\newblock \emph{\bibinfo{journal}{ACM Transactions on Mathematical Software
  (TOMS)}} \textbf{\bibinfo{volume}{38}}, \bibinfo{pages}{1--17}
  (\bibinfo{year}{2012}).

\bibitem{flammia2020efficient}
\bibinfo{author}{Flammia, S.~T.} \& \bibinfo{author}{Wallman, J.~J.}
\newblock \bibinfo{title}{Efficient {E}stimation of {P}auli {C}hannels}.
\newblock \emph{\bibinfo{journal}{ACM Transactions on Quantum Computing}}
  \textbf{\bibinfo{volume}{1}}, \bibinfo{pages}{1--32} (\bibinfo{year}{2020}).

\bibitem{campbell2015decoherence}
\bibinfo{author}{Campbell, E.~T.}
\newblock \bibinfo{title}{{Decoherence in Open Majorana Systems}}.
\newblock In \bibinfo{editor}{Beigi, S.} \& \bibinfo{editor}{Koenig, R.} (eds.)
  \emph{\bibinfo{booktitle}{10th Conference on the Theory of Quantum
  Computation, Communication and Cryptography (TQC 2015)}},
  vol.~\bibinfo{volume}{44} of \emph{\bibinfo{series}{Leibniz International
  Proceedings in Informatics (LIPIcs)}}, \bibinfo{pages}{111--126}
  (\bibinfo{publisher}{Schloss Dagstuhl--Leibniz-Zentrum fuer Informatik},
  \bibinfo{address}{Dagstuhl, Germany}, \bibinfo{year}{2015}).

\bibitem{onuma2019classification}
\bibinfo{author}{Onuma-Kalu, M.}, \bibinfo{author}{Grimmer, D.},
  \bibinfo{author}{Mann, R.~B.} \& \bibinfo{author}{Mart{\'\i}n-Mart{\'\i}nez,
  E.}
\newblock \bibinfo{title}{A classification of markovian fermionic gaussian
  master equations}.
\newblock \emph{\bibinfo{journal}{Journal of Physics A: Mathematical and
  Theoretical}} \textbf{\bibinfo{volume}{52}}, \bibinfo{pages}{435302}
  (\bibinfo{year}{2019}).

\bibitem{Magesan2011scalable}
\bibinfo{author}{Magesan, E.}, \bibinfo{author}{Gambetta, J.~M.} \&
  \bibinfo{author}{Emerson, J.}
\newblock \bibinfo{title}{Scalable and robust randomized benchmarking of
  quantum processes}.
\newblock \emph{\bibinfo{journal}{Physical Review Letters}}
  \textbf{\bibinfo{volume}{106}}, \bibinfo{pages}{180504}
  (\bibinfo{year}{2011}).

\bibitem{van2006introduction}
\bibinfo{author}{Van~Etten, W.~C.}
\newblock \emph{\bibinfo{title}{Introduction to random signals and noise}}
  (\bibinfo{publisher}{John Wiley \& Sons}, \bibinfo{address}{New York, NY,
  USA}, \bibinfo{year}{2006}).

\bibitem{Jiang18Quantum}
\bibinfo{author}{Jiang, Z.}, \bibinfo{author}{Sung, K.~J.},
  \bibinfo{author}{Kechedzhi, K.}, \bibinfo{author}{Smelyanskiy, V.~N.} \&
  \bibinfo{author}{Boixo, S.}
\newblock \bibinfo{title}{Quantum algorithms to simulate many-body physics of
  correlated fermions}.
\newblock \emph{\bibinfo{journal}{Physical Review Applied}}
  \textbf{\bibinfo{volume}{9}}, \bibinfo{pages}{044036} (\bibinfo{year}{2018}).

\bibitem{Deshpande22tight}
\bibinfo{author}{Deshpande, A.} \emph{et~al.}
\newblock \bibinfo{title}{Tight bounds on the convergence of noisy random
  circuits to the uniform distribution}.
\newblock \emph{\bibinfo{journal}{PRX Quantum}} \textbf{\bibinfo{volume}{3}},
  \bibinfo{pages}{040329} (\bibinfo{year}{2022}).

\bibitem{quek2022exponentially}
\bibinfo{author}{Quek, Y.}, \bibinfo{author}{Fran{\c{c}}a, D.~S.},
  \bibinfo{author}{Khatri, S.}, \bibinfo{author}{Meyer, J.~J.} \&
  \bibinfo{author}{Eisert, J.}
\newblock \bibinfo{title}{Exponentially tighter bounds on limitations of
  quantum error mitigation}.
\newblock \emph{\bibinfo{journal}{arXiv preprint arXiv:2210.11505}}
  (\bibinfo{year}{2022}).

\bibitem{bertoni2022shallow}
\bibinfo{author}{Bertoni, C.} \emph{et~al.}
\newblock \bibinfo{title}{Shallow shadows: Expectation estimation using
  low-depth random clifford circuits}.
\newblock \emph{\bibinfo{journal}{arXiv preprint arXiv:2209.12924}}
  (\bibinfo{year}{2022}).

\bibitem{zhao2023grouptheoretic}
\bibinfo{author}{Zhao, A.} \& \bibinfo{author}{Miyake, A.}
\newblock \bibinfo{title}{Group-theoretic error mitigation enabled by classical
  shadows and symmetries}.
\newblock \emph{\bibinfo{journal}{arXiv preprint arXiv:2310.03071}}
  (\bibinfo{year}{2023}).

\bibitem{Helsen22Matchgate}
\bibinfo{author}{Helsen, J.}, \bibinfo{author}{Nezami, S.},
  \bibinfo{author}{Reagor, M.} \& \bibinfo{author}{Walter, M.}
\newblock \bibinfo{title}{Matchgate benchmarking: Scalable benchmarking of a
  continuous family of many-qubit gates}.
\newblock \emph{\bibinfo{journal}{Quantum}} \textbf{\bibinfo{volume}{6}},
  \bibinfo{pages}{657} (\bibinfo{year}{2022}).

\bibitem{fulton2013representation}
\bibinfo{author}{Fulton, W.} \& \bibinfo{author}{Harris, J.}
\newblock \emph{\bibinfo{title}{Representation theory: a first course}}, vol.
  \bibinfo{volume}{129} (\bibinfo{publisher}{Springer Science \& Business
  Media}, \bibinfo{address}{New York, NY, USA}, \bibinfo{year}{2013}).

\bibitem{malone2022towards}
\bibinfo{author}{Malone, F.~D.} \emph{et~al.}
\newblock \bibinfo{title}{Towards the simulation of large scale protein--ligand
  interactions on nisq-era quantum computers}.
\newblock \emph{\bibinfo{journal}{Chemical Science}}
  \textbf{\bibinfo{volume}{13}}, \bibinfo{pages}{3094--3108}
  (\bibinfo{year}{2022}).

\bibitem{liu2021efficient}
\bibinfo{author}{Liu, J.}, \bibinfo{author}{Li, Z.} \& \bibinfo{author}{Yang,
  J.}
\newblock \bibinfo{title}{An efficient adaptive variational quantum solver of
  the schr{\"o}dinger equation based on reduced density matrices}.
\newblock \emph{\bibinfo{journal}{The Journal of chemical physics}}
  \textbf{\bibinfo{volume}{154}}, \bibinfo{pages}{244112}
  (\bibinfo{year}{2021}).

\bibitem{overy2014unbiased}
\bibinfo{author}{Overy, C.} \emph{et~al.}
\newblock \bibinfo{title}{Unbiased reduced density matrices and electronic
  properties from full configuration interaction quantum monte carlo}.
\newblock \emph{\bibinfo{journal}{The Journal of chemical physics}}
  \textbf{\bibinfo{volume}{141}}, \bibinfo{pages}{244117}
  (\bibinfo{year}{2014}).

\bibitem{o2019calculating}
\bibinfo{author}{O’Brien, T.~E.} \emph{et~al.}
\newblock \bibinfo{title}{Calculating energy derivatives for quantum chemistry
  on a quantum computer}.
\newblock \emph{\bibinfo{journal}{npj Quantum Information}}
  \textbf{\bibinfo{volume}{5}}, \bibinfo{pages}{113} (\bibinfo{year}{2019}).

\bibitem{rubin2018application}
\bibinfo{author}{Rubin, N.~C.}, \bibinfo{author}{Babbush, R.} \&
  \bibinfo{author}{McClean, J.}
\newblock \bibinfo{title}{Application of fermionic marginal constraints to
  hybrid quantum algorithms}.
\newblock \emph{\bibinfo{journal}{New Journal of Physics}}
  \textbf{\bibinfo{volume}{20}}, \bibinfo{pages}{053020}
  (\bibinfo{year}{2018}).

\bibitem{McClean17Hybrid}
\bibinfo{author}{McClean, J.~R.}, \bibinfo{author}{Kimchi-Schwartz, M.~E.},
  \bibinfo{author}{Carter, J.} \& \bibinfo{author}{de~Jong, W.~A.}
\newblock \bibinfo{title}{Hybrid quantum-classical hierarchy for mitigation of
  decoherence and determination of excited states}.
\newblock \emph{\bibinfo{journal}{Physical Review A}} \textbf{\bibinfo{volume}{95}},
  \bibinfo{pages}{042308} (\bibinfo{year}{2017}).

\end{thebibliography}

\begin{thebibliography}{10}
\expandafter\ifx\csname url\endcsname\relax
  \def\url#1{\texttt{#1}}\fi
\expandafter\ifx\csname urlprefix\endcsname\relax\def\urlprefix{URL }\fi
\providecommand{\bibinfo}[2]{#2}
\providecommand{\eprint}[2][]{\url{#2}}

\bibitem{goodman2009symmetry}
\bibinfo{author}{Goodman, R.} \& \bibinfo{author}{Wallach, N.~R.}
\newblock \emph{\bibinfo{title}{Symmetry, representations, and invariants}},
  vol. \bibinfo{volume}{255} (\bibinfo{publisher}{Springer},
  \bibinfo{address}{New York, NY, USA}, \bibinfo{year}{2009}).

\bibitem{fulton2013representation}
\bibinfo{author}{Fulton, W.} \& \bibinfo{author}{Harris, J.}
\newblock \emph{\bibinfo{title}{Representation theory: a first course}}, vol.
  \bibinfo{volume}{129} (\bibinfo{publisher}{Springer Science \& Business
  Media}, \bibinfo{address}{New York, NY, USA}, \bibinfo{year}{2013}).

\bibitem{helsen2019new}
\bibinfo{author}{Helsen, J.}, \bibinfo{author}{Xue, X.},
  \bibinfo{author}{Vandersypen, L.~M.} \& \bibinfo{author}{Wehner, S.}
\newblock \bibinfo{title}{A new class of efficient randomized benchmarking
  protocols}.
\newblock \emph{\bibinfo{journal}{npj Quantum Information}}
  \textbf{\bibinfo{volume}{5}}, \bibinfo{pages}{1--9} (\bibinfo{year}{2019}).

\bibitem{wan2022matchgate}
\bibinfo{author}{Wan, K.}, \bibinfo{author}{Huggins, W.~J.},
  \bibinfo{author}{Lee, J.} \& \bibinfo{author}{Babbush, R.}
\newblock \bibinfo{title}{Matchgate shadows for fermionic quantum simulation}.
\newblock \emph{\bibinfo{journal}{Communications in Mathematical Physics}}
  \textbf{\bibinfo{volume}{404}}, \bibinfo{pages}{629--700}
  (\bibinfo{year}{2023}).

\bibitem{WZ+21}
\bibinfo{author}{Wu, J.} \& \bibinfo{author}{Zhuang, Q.}
\newblock \bibinfo{title}{Continuous-variable error correction for general
  gaussian noises}.
\newblock \emph{\bibinfo{journal}{Phys. Rev. Appl.}}
  \textbf{\bibinfo{volume}{15}}, \bibinfo{pages}{034073}
  (\bibinfo{year}{2021}).
\newblock

\bibitem{Weedbrook+2011}
\bibinfo{author}{Weedbrook, C.} \emph{et~al.}
\newblock \bibinfo{title}{Gaussian quantum information}.
\newblock \emph{\bibinfo{journal}{Reviews of Modern Physics}}
  \textbf{\bibinfo{volume}{84}}, \bibinfo{pages}{621--669}
  (\bibinfo{year}{2012}).

\bibitem{chen2021robust}
\bibinfo{author}{Chen, S.}, \bibinfo{author}{Yu, W.}, \bibinfo{author}{Zeng,
  P.} \& \bibinfo{author}{Flammia, S.~T.}
\newblock \bibinfo{title}{Robust shadow estimation}.
\newblock \emph{\bibinfo{journal}{PRX Quantum}} \textbf{\bibinfo{volume}{2}},
  \bibinfo{pages}{030348} (\bibinfo{year}{2021}).

\bibitem{bravyi2017complexity}
\bibinfo{author}{Bravyi, S.} \& \bibinfo{author}{Gosset, D.}
\newblock \bibinfo{title}{Complexity of quantum impurity problems}.
\newblock \emph{\bibinfo{journal}{Communications in Mathematical Physics}}
  \textbf{\bibinfo{volume}{356}}, \bibinfo{pages}{451--500}
  (\bibinfo{year}{2017}).

\bibitem{huang2020predicting}
\bibinfo{author}{Huang, H.-Y.}, \bibinfo{author}{Kueng, R.} \&
  \bibinfo{author}{Preskill, J.}
\newblock \bibinfo{title}{Predicting many properties of a quantum system from
  very few measurements}.
\newblock \emph{\bibinfo{journal}{Nature Physics}}
  \textbf{\bibinfo{volume}{16}}, \bibinfo{pages}{1050--1057}
  (\bibinfo{year}{2020}).

\bibitem{graham1989concrete}
\bibinfo{author}{Graham, R.~L.}, \bibinfo{author}{Knuth, D.~E.},
  \bibinfo{author}{Patashnik, O.} \& \bibinfo{author}{Liu, S.}
\newblock \bibinfo{title}{Concrete mathematics: a foundation for computer
  science}.
\newblock \emph{\bibinfo{journal}{Computers in Physics}}
  \textbf{\bibinfo{volume}{3}}, \bibinfo{pages}{106--107}
  (\bibinfo{year}{1989}).

\bibitem{Magesan2011scalable}
\bibinfo{author}{Magesan, E.}, \bibinfo{author}{Gambetta, J.~M.} \&
  \bibinfo{author}{Emerson, J.}
\newblock \bibinfo{title}{Scalable and robust randomized benchmarking of
  quantum processes}.
\newblock \emph{\bibinfo{journal}{Phys. Rev. Lett.}}
  \textbf{\bibinfo{volume}{106}}, \bibinfo{pages}{180504}
  (\bibinfo{year}{2011}).
\newblock

\bibitem{renes2004symmetric}
\bibinfo{author}{Renes, J.~M.}, \bibinfo{author}{Blume-Kohout, R.},
  \bibinfo{author}{Scott, A.~J.} \& \bibinfo{author}{Caves, C.~M.}
\newblock \bibinfo{title}{Symmetric informationally complete quantum
  measurements}.
\newblock \emph{\bibinfo{journal}{Journal of Mathematical Physics}}
  \textbf{\bibinfo{volume}{45}}, \bibinfo{pages}{2171--2180}
  (\bibinfo{year}{2004}).

\end{thebibliography}

\end{document}